\def\amsbb{\use@mathgroup \M@U \symAMSb}
\newcommand{\identity}{\mathbb{1}}
\newcommand{\EtildeAB}{{\widetilde{E}_{AB}}}
\newcommand{\EtildeAC}{{\widetilde{E}_{AC}}}
\newcommand{\EtildeBC}{{\widetilde{E}_{BC}}}
\newcommand{\EtildeABC}{{\widetilde{E}_{ABC}}}
\newtheorem{theorem}{Theorem}
\newtheorem{corollary}{Corollary}
\newcommand{\be}{\begin{equation}}
\newcommand{\ee}{\end{equation}}
\newcommand{\bea}{\begin{eqnarray}}
\newcommand{\eea}{\end{eqnarray}}
\newcommand{\beas}{\begin{eqnarray*}}
\newcommand{\eeas}{\end{eqnarray*}}
\def\bra#1{\langle #1 \vert}
\def\ket#1{\vert #1 \rangle}
\begin{document}
\begin{titlepage}

\begin{center}

{\Large Entanglement groups}

\vspace{12mm}

\renewcommand\thefootnote{\mbox{$\fnsymbol{footnote}$}}
Xiaole Jiang${}^{1,2}$\footnote{xjiang2@gradcenter.cuny.edu},
Daniel Kabat${}^{1,2}$\footnote{daniel.kabat@lehman.cuny.edu},
Gilad Lifschytz${}^{3}$\footnote{giladl@research.haifa.ac.il},
Aakash Marthandan${}^{1,2}$\footnote{amarthandan@gradcenter.cuny.edu}

\vspace{6mm}

${}^1${\small \sl Department of Physics and Astronomy} \\
{\small \sl Lehman College, City University of New York} \\
{\small \sl 250 Bedford Park Blvd.\ W, Bronx NY 10468, USA}

\vspace{4mm}

${}^2${\small \sl Graduate School and University Center, City University of New York} \\
{\small \sl  365 Fifth Avenue, New York NY 10016, USA}

\vspace{4mm}

${}^3${\small \sl Department of Physics and} \\
{\small \sl Haifa Research Center for Theoretical Physics and Astrophysics} \\
{\small \sl University of Haifa, Haifa 3498838, Israel}

\end{center}

\vspace{12mm}

\noindent
We propose to define entanglement in terms of local unitary transformations acting on some parts of a
system that can be undone by local unitary transformations acting on other parts.  This leads to a characterization
of entanglement in terms of groups.  We refer to these as entanglement groups, and we refer to this notion as $g$-entanglement.  
We discuss the physical meaning of entanglement groups and contrast $g$-entanglement with other, more conventional definitions
of entanglement.
For pure states, entanglement groups are constructed as certain quotients of the stabilizer group and its subgroups. 
For mixed states, entanglement groups can be constructed from stabilizers of the purification.
We analyze the structure of entanglement groups, show that they have properties which correspond to monogamy of entanglement, and explore the restrictions placed by separability.
We show that $g$-entanglement underlies several well-known quantum tasks.

\end{titlepage}
\setcounter{footnote}{0}
\renewcommand\thefootnote{\mbox{\arabic{footnote}}}

\hrule
\tableofcontents
\bigskip
\hrule

\addtolength{\parskip}{8pt}

\section{Introduction\label{sect:intro}}
Entanglement is an essential aspect of quantum mechanics.  Entangled states violate the Bell and CHSH inequalities, and are used to perform quantum
tasks such as teleportation and dense coding.  Despite its importance, entanglement remains a somewhat elusive concept.  It is not straightforward
to characterize or classify entanglement, or to give it an operational meaning.  For reviews of approaches that have been considered in the literature, see
\cite{Horodecki:2009zz,walter2017multipartite}.

To illustrate the point, entanglement is frequently given a negative definition, by saying that a tensor product
state is not entangled.  Is there an affirmative definition of entanglement?  What is the operational significance of having a state that is not a tensor product?

Here we develop the idea that entanglement can be understood in terms of unitary transformations that act on some parts of a system,
but that are nonetheless equivalent to unitary transformations that act on other parts.  Consider a quantum system which for
definiteness could be a collection of spins.  We fix a partition of the spins into subsystems $A$, $B$, $C$, $\ldots$.
The Hilbert space is a tensor product
\be
{\cal H} = {\cal H}_A \otimes {\cal H}_B \otimes {\cal H}_C \otimes \cdots
\ee
We also fix a pure state $\ket{\psi} \in {\cal H}$.  The group of local unitary transformations consists of tensor products $u_A \otimes u_B \otimes u_C \cdots$
of unitaries that act on the individual Hilbert spaces.  Some of these transformations may be stabilizers, meaning that they leave the state invariant up to a phase.
\be
\label{stabilize}
\big(u_A \otimes u_B \otimes u_C \otimes \cdots \big) \ket{\psi} = e^{i \theta} \ket{\psi}
\ee
The collection of such unitaries defines the (local unitary) stabilizer group.  We can also regard (\ref{stabilize}) as saying that a local unitary transformation on some of the subsystems
is equivalent to a local unitary transformation on the complement.  We propose this as the defining feature of entanglement.
It leads to a precise mathematical characterization of multi-party entanglement in terms of entanglement groups, constructed
as quotients of the stabilizer group and its subgroups.  We define these groups in section \ref{sect:groups}.

This definition provides an operational sense in which entanglement enables ``spooky action at a distance:'' a unitary transformation on one part of the
system is equivalent to, or can be undone by, a unitary transformation on the complement.
This property of entanglement for bipartite systems has been appreciated since the very beginning \cite{Schrodinger_1935} and has appeared in many places in the literature
(see for example \cite{Linden_1998,Carteret_1999,Carteret_2000,Vollbrecht_2001,Zurek1,Zurek2,Walck_2007,Tzitrin:2019fwe,Bernards:2022qhd}).  For some recent applications
see \cite{Bryan:2018zsx,PhysRevA.104.022426,Noller_2023}.  It's worth noting that, although stabilizers are generic for bipartite systems, they are rare
for multi-partite systems.  We return to this in sections \ref{sect:physical} and \ref{sect:classifying}, where we point out that a single state with an
enhanced multi-party stabilizer (an enhanced entanglement group) can govern the behavior of nearby states, even though the nearby states
do not have enhanced stabilizers.

As will become clear, this group theory definition of entanglement, which we will call $g$-entanglement in situations where there is any possibility of confusion,
differs in some respects from the usual definition of entanglement, which we will refer to as $u$-entanglement when there is a need to disambiguate.
Given two different notions of entanglement, the question of which notion is relevant or useful may depend on the task at hand.  As we show in
section \ref{section:qtasks}, the notion of $g$-entanglement plays a role in the success of many quantum tasks.  For example, for the GHZ state, all two-qubit
reduced density matrices are separable, so there is no two-qubit $u$-entanglement in the GHZ state.  However we will see that the two-qubit entanglement
groups are non-trivial, meaning there is two-qubit $g$-entanglement, although the associated stabilizers take a very special and restricted form.  To see the physical relevance
of $g$-entanglement, the analysis in section \ref{subsection:qpt} makes it is clear that stabilizers of pairs of qubits (together with 3-qubit stabilizers) are responsible for winning the GHZ game.

After defining entanglement groups in section \ref{sect:groups}, we explore the structure of the entanglement groups in section \ref{sect:monogamy} and show how monogamy of entanglement appears.  In section \ref{sect:MixedStates} we turn to mixed states.  We show how to associate entanglement groups with density matrices and explore their structure.
In section \ref{sect:separable} we consider separable states and show what separability implies for $g$-entanglement.  In section \ref{section:qtasks} we analyze some well-known quantum tasks and show that
our characterization of entanglement underlies the ability to perform these tasks. We conclude in section \ref{sect:discussion} with a
discussion of related approaches and ideas for further development.  The appendices contain supporting material:
a general treatment of bipartite systems in appendix \ref{appendix:bipartite}, an extension of the Goursat lemma in appendix \ref{appendix:Goursat}, and properties and
examples of mixed-state entanglement groups in appendices \ref{appendix:twoparty}, \ref{appendix:Werner}, \ref{appendix:UBC}.

We will use the following notation throughout the paper.
\[
\begin{array}{llll}
\subset & \quad \hbox{\rm subgroup} \hspace{3cm} & \approx & \quad \hbox{\rm isomorphic groups} \\
\triangleleft & \quad \hbox{\rm normal subgroup} & Z(G) & \quad \hbox{\rm center of a group} \\
\, \cdot & \quad \hbox{\rm group product} & C(G) & \quad \hbox{\rm centralizer of a group} \\
\times & \quad \hbox{\rm direct product} & 1 & \quad \hbox{\rm identity element of a group} \\
\rtimes & \quad \hbox{\rm semidirect product} & \varnothing = \lbrace 1 \rbrace & \quad \hbox{\rm trivial group} \\
& & {\bf I} & \quad \hbox{\rm unit matrix}
\end{array}
\]

\section{Entanglement groups\label{sect:groups}}
To fix ideas, we begin by defining entanglement groups for bipartite systems, and illustrate with the simple example of a two-qubit system.
Then we define entanglement groups for tripartite systems, which brings in a few new features, and generalize to multi-partite systems.  We discuss the physical meaning of the
entanglement groups, and present an approach to classifying the entanglement of a state by considering all possible partitions into subsystems.
Finally we compare the notion of $g$-entanglement to the more usual notion of $u$-entanglement.

\subsection{Bipartite systems\label{sect:bipartite}}
For a bipartite system the Hilbert space is a tensor product, ${\cal H}={\cal H}_{A} \otimes {\cal H}_{B}$.  We denote the dimensions $d_A$, $d_B$
respectively.  We fix a pure state $\ket{\psi} \in {\cal H}$ and ask for the stabilizer group
\be
S_{AB} \subset U(d_A) \times U(d_B)
\ee
that leaves the state invariant up to a phase.
\be
\label{uAuB}
S_{AB} = \left\lbrace u_A \otimes u_B \in U(d_A) \times U(d_B) \, : \, \big( u_A \otimes u_B \big) \ket{\psi} = e^{i \theta} \ket{\psi}\right\rbrace
\ee
By definition, $S_{AB}$ consists of unitary transformation on system $A$ that can be undone by a unitary transformation on system $B$.
Intuitively we may regard such stabilizers as characterizing the entanglement between $A$ and $B$.  However we must recognize that
not all such stabilizers are due to entanglement.  This is because $S_{AB}$ has subgroups $S_A$, $S_B$ that only act on the first or
second factor of the Hilbert space.
\bea
\label{uA}
&& S_A = \left\lbrace u_A \otimes {\bf I}_B \, : \, \big( u_A \otimes {\bf I}_B \big) \ket{\psi} = e^{i \alpha} \ket{\psi}\right\rbrace \\
\label{uB}
&& S_B = \left\lbrace {\bf I}_A \otimes u_B \, : \, \big( {\bf I}_A \otimes u_B \big) \ket{\psi} = e^{i \beta} \ket{\psi}\right\rbrace
\eea
These are one-party stabilizers, that only act on one of the systems, and do not require a compensating transformation on the other system
to leave the state invariant.  As such they do not reflect entanglement between $A$ and $B$.\footnote{Indeed for a tensor product state $\vert \alpha \rangle \otimes \vert \beta \rangle$ we would
have $S_A = U(1) \times U(d_A - 1)$, $S_B = U(1) \times U(d_B - 1)$, where the $U(1)$'s reflect the freedom to multiply the state by a phase.}

We exclude these one-party stabilizers from entanglement as follows.  The groups $S_A$ and $S_B$ are invariant under conjugation by an element of
$S_{AB}$, so they are normal subgroups, $S_A \, \triangleleft \, S_{AB}$, $S_B \, \triangleleft \, S_{AB}$.  Note that $S_A$ and $S_B$ commute
element-by-element since they act on different systems.  So we have two commuting normal subgroups and can form the quotient group
\be
E_{AB} = S_{AB} / \big( S_A \times S_B \big)
\ee
This defines the two-party entanglement group which we take to characterize entanglement between $A$ and $B$.  Non-trivial elements of this group, i.e.\ elements other than the identity,
reflect the entangled nature of the state.

We give a complete treatment of entanglement groups for bipartite systems in appendix \ref{appendix:bipartite}.  Here, to give a flavor, we quote the
results for a two-qubit state.  The possibilities are:
\begin{itemize}
\item tensor product state: \, $\vert 00 \rangle$ \\[2pt]
The entanglement group is trivial, $E_{AB} = \lbrace 1 \rbrace$.
\item generic state: \, $a \vert 00 \rangle + b \vert 11 \rangle$ with $a > b > 0$ \\[2pt]
The entanglement group is $E_{AB} = U(1)$.  Elements of the entanglement group can be represented as local unitaries of the form
\be
\left(\begin{array}{cc} 1 & 0 \\ 0 & e^{i \alpha} \end{array}\right) \otimes \left(\begin{array}{cc} 1 & 0 \\ 0 & e^{-i \alpha} \end{array}\right)
\qquad \alpha \approx \alpha + 2 \pi
\ee
\item Bell state: \, ${1 \over \sqrt{2}} \big(\vert 00 \rangle + \vert 11 \rangle\big)$ \\[2pt]
The Bell state is (unitarily equivalent to) a spin singlet, so one might guess that the entanglement group is $SU(2)$, embedded in the local unitary group as
$u \otimes u^*$, $u \in SU(2)$.  However $u$ and $-u$ have the same effect on the state, so in fact the entanglement group is $PSU(2) = SU(2) / {\mathbb Z}_2$.
\end{itemize}
For a complete treatment of bipartite entanglement see appendix \ref{appendix:bipartite}.

\subsection{Tripartite systems\label{sect:tripartite}}
We now turn to tripartite systems, which bring in some new features.
Consider a Hilbert space which is a tensor product ${\cal H} = {\cal H}_A \otimes {\cal H}_B \otimes {\cal H}_C$ of factors with dimensions
$d_A$, $d_B$, $d_C$ respectively.  Fix a pure state $\ket{\psi} \in {\cal H}$.  We are interested in the stabilizer group
\be
S_{ABC} \subset U(d_A) \times U(d_B) \times U(d_C)
\ee
that preserves the state up to a phase.  That is, we are interested in local unitary transformations such that\footnote{We are defining
stabilizers using a direct product of unitary groups, so strictly speaking they are ordered triples of unitary matrices $(u_A,u_B,u_C)$ in which we keep track of which systems the various $U(1)$ phases act on.
This information gets blurred in the tensor product (\ref{unitary}) which however we will often write for convenience.
The phases of these unitary matrices are not physical and will cancel
out in the group quotients we construct.  An alternate approach would be to eliminate these phases from the beginning and define
stabilizers in terms of projective unitary groups $PSU(n) = U(n) / U(1) = SU(n) / {\mathbb Z}_n$.  As long as one sticks to a consistent
approach the entanglement groups are the same.\label{footnote:alternate}}
\be
\label{unitary}
\big(u_A \otimes u_B \otimes u_C \big) \ket{\psi} = e^{i \theta} \ket{\psi} \quad \hbox{for some phase $\theta$}
\ee
$S_{ABC}$ may have subgroups that act trivially on some factors of the Hilbert space.  For example there may be two-party stabilizers
$S_{AB},S_{AC},S_{BC}$, or one-party stabilizers $S_A,S_B,S_C$, defined for example by
\bea
\nonumber
&& S_{AB} = \left\lbrace \, (u_A,u_B,{\bf I}_C) \in S_{ABC} \, \right\rbrace \\
&& S_{A} = \left\lbrace \, (u_A,{\bf I}_B,{\bf I}_C) \in S_{ABC} \, \right\rbrace
\eea
It's straightforward to check that these are normal subgroups, for example $S_A$ is invariant under conjugation by an element
of $S_{AB}$.  So $S_A \, \triangleleft \, S_{AB}$, $S_A \, \triangleleft \, S_{ABC}$, $S_{AB} \, \triangleleft \, S_{ABC}$, etc.

We define entanglement relative to the partition $A$, $B$, $C$ as local unitary transformations that preserve the state up to a phase,
but that cannot be built up from one-party stabilizers.  According to this definition entanglement is characterized by the quotient
\be
F_{ABC} = S_{ABC} / (S_A \times S_B \times S_C)
\ee
which we will refer to as the full entanglement group.  It includes both two-party and three-party entanglement.
The study of entanglement becomes the study of the structure of the group $F_{ABC}$.

To separate out two-party entanglement, we propose to define two-party entanglement between $A$ and $B$ in terms of two-party stabilizers:
unitary transformations on $A$ that can be un-done by unitary transformations on $B$,
without acting on system $C$.  However, just as in section \ref{sect:bipartite}, we exclude stabilizers that only act on one of the systems ($A$ or $B$)
since they do not reflect entanglement between $A$ and $B$.  Thus two-party entanglement between $A$ and $B$ is characterized by the quotient group
\be
E_{AB} = S_{AB} / (S_A \times S_B)
\ee
We likewise characterize two-party entanglement between $A$ and $C$, and between $B$ and $C$, using the quotient groups
\bea
\nonumber
E_{AC} = S_{AC} / (S_A \times S_C) \\[5pt]
E_{BC} = S_{BC} / (S_B \times S_C)
\eea
We take non-trivial elements of these groups, i.e.\ elements other than the identity element, to reflect two-party entanglement between the corresponding systems.

These two-party entanglement groups are normal subgroups of the full entanglement group, for example $E_{AB} \, \triangleleft \, F_{ABC}$.  To see this, note that
a product of normal subgroups is normal, so
\be
\label{SABSC}
(S_{AB} \times S_C) \,\triangleleft\, S_{ABC}
\ee
Both sides of (\ref{SABSC}) have $S_A \times S_B \times S_C$ as a normal subgroup, so by Noether's third isomorphism theorem \cite{Robinson_1996}
\be
{S_{AB} \times S_C \over S_A \times S_B \times S_C} \, \triangleleft \, {S_{ABC} \over S_A \times S_B \times S_C}
\ee
We identify the left hand side with $E_{AB}$ which means
\be
E_{AB} \, \triangleleft \, F_{ABC}
\ee
We thus have three normal subgroups of the full entanglement group $E_{AB}$, $E_{AC}$, $E_{BC}$.
We can take their group product to obtain another normal subgroup, which we view as characterizing all possible transformations that
can be built up from two-party entanglement.\footnote{This is a normal subgroup of $F_{ABC}$ since the group product of normal subgroups is normal.  As we discuss in section
\ref{sect:noshare} distinct two-party entanglement groups can generate overlapping transformations, for example $E_{BC} \cap (E_{AB} \cdot E_{AC}) \not= \lbrace \, (1,1,1) \, \rbrace$.
This means that in general the product in (\ref{TwoParty}) is a group product and not a direct product.\label{footnote}}
\be
\label{TwoParty}
(\hbox{\rm group generated by two-party entanglement} )= E_{AB} \cdot E_{AC} \cdot E_{BC} \, \triangleleft \, F_{ABC}
\ee

\noindent
We define three-party entanglement between $A$, $B$, $C$ as non-trivial transformations that
\begin{itemize}
\item act on all three systems
\item leave the overall state invariant
\item cannot be built up from two-party entanglement
\end{itemize}
According to this definition three-party entanglement is characterized by the group
\be
\label{E1}
E_{ABC} = F_{ABC} / (E_{AB} \cdot E_{AC} \cdot E_{BC})
\ee
We take non-trivial elements of this group, i.e.\ elements other than the identity, to reflect three-party entanglement between systems $A$, $B$ and $C$.
The group can also be expressed in terms of stabilizers as
\be
\label{E2}
E_{ABC} = S_{ABC} / (S_{AB} \cdot S_{AC} \cdot S_{BC})
\ee
The equivalence of (\ref{E1}) and (\ref{E2}) follows from the third isomorphism theorem.\footnote{Divide the numerator and denominator
of (\ref{E2}) by $S_A \times S_B \times S_C$.  Denoting $s_{AB} \in S_{AB}$, $s_{AC} \in S_{AC}$, $s_{BC} \in S_{BC}$, downstairs one gets a set of cosets $\lbrace \, s_{AB} s_{AC} s_{BC} S_A S_B S_C \, \rbrace =
\lbrace \, s_{AB} s_{AC} s_{BC} S_A S_B S_A S_C S_B S_C \, \rbrace = \lbrace \, s_{AB} S_A S_B \, s_{AC} S_A S_C \, s_{BC} S_B S_C \, \rbrace$ which is $E_{AB} \cdot E_{AC} \cdot E_{BC}$.}

\subsection{Examples\label{sect:examples}}
Here we give a few examples to illustrate the approach, using a variety of three-qubit states.  The continuous stabilizer groups for these states were already
classified in \cite{Carteret_2000}.  However the definition of entanglement in terms of entanglement groups leads to a characterization of two-party and
three-party entanglement for these states which differs from \cite{Carteret_2000}.  We discuss these differences in detail in section
\ref{sect:usualnotion}.  For now we just present a few examples of stabilizer and entanglement groups.

\noindent{\underline{Tensor product state}} \\
We start with the tensor product state $\ket{\psi} = \ket{000}$.
One-party stabilizers $S_A,S_B,S_C$ correspond to unitary matrices
\be
u_A = e^{i \theta_A} \left(\begin{array}{ll} 1 & 0 \\ 0 & e^{i\phi_A} \end{array}\right)
\qquad
u_B = e^{i \theta_B} \left(\begin{array}{ll} 1 & 0 \\ 0 & e^{i\phi_B} \end{array}\right)
\qquad
u_C = e^{i \theta_C} \left(\begin{array}{ll} 1 & 0 \\ 0 & e^{i\phi_C} \end{array}\right)
\ee
There are no non-trivial two- or three-party stabilizers, so
$S_{ABC} = S_A \times S_B \times S_C$.  This means that the full entanglement group $F_{ABC}$ is trivial.

\noindent{\underline{Generalized GHZ}} \\
Consider the 3-qubit state
\begin{equation}
\ket{\psi}=a|000 \rangle+b|111 \rangle
\label{sghz}
\end{equation}
known as the generalized GHZ state.
For generic $a,b$ the full stabilizer group corresponds to matrices
\begin{equation}
\label{ghzfull}
S_{ABC} \, : \quad e^{i (\theta_A + \theta_B + \theta_C)}
\begin{pmatrix}
1 & 0 \\
0 & e^{i\phi_A}
\end{pmatrix}
\otimes
\begin{pmatrix}
1 & 0 \\
0 & e^{i\phi_B}
\end{pmatrix}
\otimes
\begin{pmatrix}
1 & 0 \\
0 & e^{-i(\phi_A + \phi_B)}
\end{pmatrix}
\end{equation}
From this we identify the two-party stabilizer subgroups as
\bea
\label{ghz2p1}
&& S_{AB} \, : \quad
e^{i(\theta_A + \theta_B)}
\begin{pmatrix}
1 & 0 \\
0 & e^{i\phi}
\end{pmatrix}
\otimes
\begin{pmatrix}
1 & 0 \\
0 & e^{-i\phi}
\end{pmatrix}
\otimes
\begin{pmatrix}
1 & 0 \\
0 & 1
\end{pmatrix} \\[3pt]
\label{ghz2p2}
&&S_{AC} \, : \quad e^{i(\theta_A + \theta_C)}
\begin{pmatrix}
1 & 0 \\
0 & e^{i\phi}
\end{pmatrix}
\otimes
\begin{pmatrix}
1 & 0 \\
0 & 1
\end{pmatrix}
\otimes
\begin{pmatrix}
1 & 0 \\
0 & e^{-i\phi}
\end{pmatrix} \\[3pt]
\label{ghz2p3}
&& S_{BC} \, : \quad e^{i(\theta_B + \theta_C)}
\begin{pmatrix}
1 & 0 \\
0 & 1
\end{pmatrix}
\otimes
\begin{pmatrix}
1 & 0 \\
0 & e^{i\phi}
\end{pmatrix}
\otimes
\begin{pmatrix}
1 & 0 \\
0 & e^{-i\phi}
\end{pmatrix}
\eea
The 1-party stabilizers are just phases times the identity operator, so the two-party entanglement groups $E_{AB}, E_{AC}, E_{BC}$ are
\be
\label{ghztwo}
E_{AB} \approx E_{AC} \approx E_{BC} \approx U(1)
\ee
However for generic $a,b$ the quotient
\begin{equation}
E_{ABC} = \frac{S_{ABC}}{S_{AB}\cdot S_{BC}\cdot S_{AC}} = \lbrace \, 1 \, \rbrace
\end{equation}
is trivial.

\noindent{\underline{Standard GHZ}} \\
If one sets $a=b=1/\sqrt{2}$ in (\ref{sghz}) to recover the standard GHZ state
\be
\label{GHZstate}
\ket{\psi} = {1 \over \sqrt{2}} \left( \ket{000} + \ket{111} \right)
\ee
the stabilizer is enlarged.  One gains an additional stabilizer
\be
\label{GHZuAuBuC}
x = \begin{pmatrix}
0 & 1 \\
1 & 0
\end{pmatrix}
\otimes
\begin{pmatrix}
0 & 1 \\
1 & 0
\end{pmatrix}
\otimes
\begin{pmatrix}
0 & 1 \\
1 & 0
\end{pmatrix} \in S_{ABC}
\ee
The additional stabilizer generates a ${\mathbb Z}_2$ subgroup of $S_{ABC}$ since $x^2 = 1$.  Denoting the stabilizers appearing in (\ref{ghzfull}) by $s_{\theta\phi}$, the
full stabilizer group for standard GHZ becomes
\be
S_{ABC} = \lbrace \, s_{\theta\phi},\, x s_{\theta\phi} \, \rbrace \quad \hbox{\rm where $s_{\theta\phi} \in \hbox{\rm (stabilizer group for generalized GHZ)}$}
\ee
The two-party stabilizer groups are unchanged, so for standard GHZ the two-party entanglement groups are still given in (\ref{ghztwo}).  However the
three-party entanglement group is enlarged and becomes non-trivial.
\begin{equation}
E_{ABC} = \frac{S_{ABC}}{S_{AB}\cdot S_{AC}\cdot S_{BC}} = {\mathbb Z}_2
\end{equation}

\noindent\underline{The W state} \\
Consider the 3-qubit state\footnote{The standard W state is obtained by setting $a = c = d = 1/\sqrt{3}$ and acting
with $\left(\begin{array}{cc} 0 & 1 \\ 1 & 0 \end{array}\right) \otimes I \otimes I$.  We present it in this form for compatibility with (\ref{generic3}).}
\begin{equation}
\ket{\psi} = a \ket{000} +c \ket{101} + d \ket{110}
\end{equation}
known as the W state.  This state is stabilized up to a phase by matrices of the form
\begin{equation}
\label{WuAuBuC}
e^{i(\theta_A + \theta_B + \theta_C)}
\begin{pmatrix}
e^{i\phi}& 0 \\
0 &e^{-i\phi}
\end{pmatrix}
\otimes
\begin{pmatrix}
e^{-i\phi}& 0 \\
0 & e^{i\phi}
\end{pmatrix}
\otimes
\begin{pmatrix}
e^{-i\phi} & 0 \\
0 & e^{i\phi}
\end{pmatrix}
\end{equation}
The one-party and two-party stabilizers are just phases times the identity operator, so $E_{AB}, E_{AC}, E_{BC}$ are trivial.  However there
is three-party entanglement, with $E_{ABC} = U(1)$.

\noindent{\underline{The ace state}} \\
We now consider
 \begin{equation}
 |\psi_{ace}\rangle = a \ket{000}+c \ket{101}+e \ket{111} 
  \end{equation}
 For generic parameters it has a stabilizer
\begin{equation}
S_{ABC} \, : \quad e^{i\theta}
\begin{pmatrix}
e^{i\phi}& 0 \\
0 &e^{-i\phi}
\end{pmatrix}
\otimes
\begin{pmatrix}
1& 0 \\
0 & 1
\end{pmatrix}
\otimes
\begin{pmatrix}
e^{-i\phi} & 0 \\
0 & e^{i\phi}
\end{pmatrix}.
\end{equation}
There is no three-party entanglement in the generic state, $E_{ABC} = \lbrace 1 \rbrace$.  There is two-party entanglement between $A$ and $C$, with $E_{AC} \approx U(1)$.  However there is no entanglement between $A$ and $B$ or between
$B$ and $C$ since $E_{AB}$ and $E_{BC}$ are both trivial.  If we group qubits $A$ and $C$ together to form a combined system $(AC)$, then as in section \ref{sect:classifying} qubit $B$ is
entangled with the combined system in the sense that $E_{(AC)B}$ is non-trivial.

For non-generic parameters the state can have more entanglement. For instance if $a^2 = c^2 + e^2$ there is an additional discrete stabilizer
\begin{equation}
\label{ace3party}
\begin{pmatrix}
0& 1 \\
1 &0
\end{pmatrix}
\otimes
\begin{pmatrix}
{c / a}& {e / a} \\[3pt]
\, {e / a} & - {c / a}
\end{pmatrix}
\otimes
\begin{pmatrix}
0 & 1 \\
1 &0
\end{pmatrix}
\end{equation}
which squares to the identity.  In this case the state does have three-party entanglement, with $E_{ABC} \approx \mathbb{Z}_{2}$.

\noindent{\underline{Generic state}} \\
Finally we consider a generic state of three qubits.  Using local unitary transformations the most general state of three qubits can be put in the form
\cite{PhysRevLett.85.1560}
\begin{equation}
\label{generic3}
\ket{\psi} = a \ket{000} +b e^{i\phi} \ket{100} +c \ket{101} + d \ket{110} +e \ket{111}
\end{equation}
with $a,b,c,d,e,\phi$ real. 
For generic values of the parameters the stabilizer groups just consist of phases times the
identity operator, so all of the
entanglement groups are trivial.  In this
sense the generic state has no entanglement relative to the partition into three separate systems $A$, $B$, $C$, even though the generic state is
not a tensor product.

Let us emphasize that we are saying there is no entanglement {\em relative to the partition into three
separate systems}.  This means there is no unitary operator $u_A$ on system $A$ that can be undone by a factorized unitary operator
$u_B \otimes u_C$ that acts on systems $B$ and $C$.  However, as explained in \ref{sect:classifying}, the entanglement of a state is defined by considering all possible partitions.
We should therefore consider different partitions of the system.  For example
we could look for entanglement between system $A$ and the combined $(BC)$ system.  This means looking for a unitary operator $u_A$
that acts on system $A$ and can be undone by a unitary operator $u_{(BC)}$ that acts on the combined $(BC)$ system.
Such operators generically do exist, and in this sense the generic state does have entanglement.

\subsection{Multi-partite systems\label{sect:multipartite}}
The generalization to $N$-partite systems is straightforward.  We begin with a pure state $\ket{\psi}$ in a Hilbert space ${\cal H} = {\cal H}_1 \otimes
{\cal H}_2 \otimes \cdots \otimes {\cal H}_N$ where the factors have dimension $d_1,d_2,\ldots,d_N$.  The full stabilizer is the subgroup
\be
S_{12 \cdots N} \subset U(d_1) \times U(d_2) \times \cdots \times U(d_N)
\ee
that leaves the state invariant up to a phase.  This has normal subgroups which act trivially on some of the systems.  We have
\bea
\nonumber
&& \hbox{\rm $N$ one-party stabilizers} \qquad\,\,\,\,\,\, S_i \, \qquad i = 1,\ldots N \\
&& \hbox{\rm $\left({N \atop 2}\right)$ two-party stabilizers} \qquad S_{ij} \qquad 1 \leq i < j \leq N \\
\nonumber
&& \hbox{\rm $\left({N \atop 3}\right)$ three-party stabilizers} \quad\,\, S_{ijk} \qquad 1 \leq i < j < k \leq N
\eea
and so on.

The full entanglement group, which captures all entanglement relative to the partition into $N$ systems, is defined by
\be
F_{1 2 \cdots N} = S_{1 2 \cdots N} / (S_1 \times S_2 \times \cdots \times S_N)
\ee
We can also characterize entanglement between collections of subsystems.  Choose $n$ subsystems
\be
i_1 < i_2 < \cdots < i_n
\ee
$n$-party entanglement between these systems, which cannot be built up from entanglement between fewer than $n$ systems, is characterized
by the group
\be
E_{i_1 i_2 i_3 \cdots i_n} = S_{i_1 i_2 i_3 \cdots i_n} / (S_{i_2 i_3 \cdots i_n} \cdot S_{i_1 i_3 \cdots i_n} \cdot \ldots \cdot S_{i_1 i_2 \cdots i_{n-1}})
\ee
where we quotient the appropriate $n$-party stabilizer by the group product of all the $(n-1)$-party stabilizers that it contains.

\subsection{Physical meaning of entanglement groups\label{sect:physical}}
What physical properties does an entanglement group capture?  As we introduced it, the starting point is a stabilizer group, which consists
of local unitary operations on some parts of a system that are equivalent to, or can be undone by, local unitary operations on other parts.
This provides an operational meaning to entanglement, by relating physical processes that are carried out in different laboratories.

Another perspective is that local unitary stabilizers capture correlations between different parts of a system.  A local unitary stabilizer $s$ can be written in terms of
local Hermitian generators $H_i$ as
\be
s = e^{i \sum_i H_i}
\ee
Here $H_i$ only acts on subsystem $i$, and if the stabilizer doesn't act on some of the subsystems, it just means the corresponding $H_i = 0$.
Since $s \vert \psi \rangle = e^{i \theta} \vert \psi \rangle$, we see that measurements of the observables $H_i$, leading to outcomes $h_i$, will yield
correlated results according to\footnote{The eigenvalues $h_i$, as well as the angle $\theta$, are only defined modulo $2 \pi$.  In writing (\ref{correlation})
we're implicitly using this freedom to fix say $0 \leq \sum_i h_i < 2\pi$ and $0 \leq \theta < 2 \pi$.  More generally, without using this freedom, we could
write $\sum_i h_i \equiv \theta \, {\rm mod} \, 2 \pi$.}
\be
\label{correlation}
\sum_i h_i = \theta
\ee
In this way, a stabilizer implies that individual measurements performed on different parts of the system are perfectly correlated.  From this perspective, $n$-party
$g$-entanglement is responsible for those correlations which cannot be explained in terms of correlations involving $n-1$ or fewer systems.
The full entanglement group encodes the mathematical  structure of all such correlations.
These correlations have implications when trying to accomplish physical tasks using entangled states, as described in section \ref{section:qtasks}.

We want to stress that, even though very few states display $g$-entanglement when divided into $k>2$ subsystems \cite{10.1063/1.5003015,PhysRevX.8.031020}, the physical properties of a state
are often controlled by how close they are to a state with an enhanced stabilizer.  Suppose a particular state $\vert \psi \rangle$
has a $k$-party stabilizer that enforces the correlation (\ref{correlation}) between measurements.  States close to $\vert \psi \rangle$ may not have a
$k$-party stabilizer at all.  However they will have a probability distribution for measuring $\sum_i h_i$ that is sharply peaked around
$\theta$.  In this way the existence of nearby states with an enhanced stabilizer can drive physical behavior.

As an example of this reasoning, states close to GHZ may win the GHZ game with a probability that beats the
classical strategy, and likewise may succeed in teleportation or other quantum tasks, but this is due to the fact that they are close to a state that has probability 1 for success.
In a similar vein, states without a stabilizer can violate Bell-like inequalities, but maximum violation is generally associated with states that have an
enhanced stabilizer \cite{G_hne_2005}.

This is similar to other situations in physics, in which states with an enhanced symmetry govern the behavior in a neighborhood of the enhanced
symmetry point.  This happens near a second-order phase transition in a condensed matter system, or near a symmetry-enhanced point in quantum
field theory.  We will see an example of this in the context of the GHZ game in section \ref{subsection:qpt}.

\subsection{Classifying the entanglement of a state\label{sect:classifying}}
So far we have considered entanglement of a state relative to a fixed partition into subsystems $A,B,C,\ldots$.  However it is possible that
any given partition will not be sensitive to all of the entanglement that is present in the state.  To fully explore the entanglement of a state
we have to consider all possible partitions into subsystems.

For example, in section \ref{sect:tripartite} we considered a system which was divided into three parts $A,B,C$.  However we could group
$A$ and $B$ together and regard the Hilbert space as
\be
{\cal H} = \left({\cal H}_A \otimes {\cal H}_B\right) \otimes {\cal H}_C
\ee
We will use the notation $(AB)C$ to indicate that we are grouping systems $A$ and $B$ together in this way.
This grouping doesn't change the Hilbert space, but it enlarges the group of local unitary transformations from
$U(d_A) \times U(d_B) \times U(d_C)$ to $U(d_A d_B) \times U(d_C)$.  This enlarged group gives more possibilities for entanglement,
since we can consider operators of the form $u_{(AB)} \otimes u_C$, in which the unitary $u_{AB}$ that acts on the combined $(AB)$
system does not factor as $u_A \otimes u_B$.
We can look for stabilizers in the enlarged local unitary group, which we will denote
\be
S_{(AB)C} \subset U(d_A d_B) \times U(d_C)
\ee
and we can construct an entanglement group
\be
E_{(AB)C} = S_{(AB)C} / \big(S_{(AB)} \times S_C\big)
\ee
With the grouping $(AB)C$, we have effectively made a bipartite division of the system, so the definition of $E_{(AB)C}$ is modeled on the
discussion of bipartite entanglement in section \ref{sect:bipartite}.  We could of course consider multipartite entanglement groups for any
way of partitioning the system.

Let us illustrate this with an example.  For the generic 3-qubit state (\ref{generic3}) we saw that the entanglement groups were all trivial when
the system was partitioned into three distinct spins.  However we can group spins $A$ and $B$ together, and ask for the entanglement
group $E_{(AB)C}$ of a generic state.  This is a straightforward exercise since, as discussed in appendix \ref{appendix:bipartite}, we can write
the generic state (\ref{generic3}) in a Schmidt basis as
\be
\ket{\psi} = \sum_{i = 1}^2 p_i \, \ket{i}_{AB} \otimes \ket{i}_C
\ee
Here $\ket{i}_{AB}$ and $\ket{i}_C$ are orthonormal bases for ${\cal H}_A \otimes {\cal H}_B$ and ${\cal H}_C$, and for a generic state
$p_1 > p_2 > 0$.  As discussed in the appendix the generic entanglement group is $E_{(AB)C} = U(1)$.  The other possible partitions
generically have the same behavior, with $E_{(AC)B} = E_{(BC)A} = U(1)$.  This procedure captures all of the entanglement present
in the generic three-qubit state.  Note that unless the original state was a pure tensor product $\ket{\psi_A} \otimes \ket{\psi_B} \otimes \ket{\psi_C}$
then at least one of the entanglement groups $E_{(AB)C}$, $E_{(AC)B}$, $E_{(BC)A}$ will be non-trivial.

We can now give a prescription for classifying the entanglement of a state.  Suppose a system consists of $N$ elementary spins.
We partition the spins in all possible ways into disjoint sets $A_1,A_2,\ldots,A_k$ and we compute the entanglement group
$E_{(A_1)(A_2)\cdots(A_k)}$ associated with each partition.  This gives a large collection of entanglement groups associated with the state.
We say that two states have the same type or class of $g$-entanglement if this procedure leads to the same collection of entanglement groups.

This classification scheme has a few properties.
\begin{itemize}
\item Two states of an $N$-particle system $|\psi_{1} \rangle$ and $|\psi_{2} \rangle$ which are related by a local unitary transformation have the same entanglement type.
\item Since bipartite entanglement for a state (i.e.\ when there are only two sets $A_1$ and $A_2$) is trivial only for a tensor product, every state other than the pure tensor
product $\ket{\psi_1} \otimes \cdots \otimes \ket{\psi_N}$ will be in some non-trivial entanglement class.
\item  For finite $N$ and with a finite-dimensional Hilbert space for each particle, the classification is finite. That is, there are only a finite number of types of entanglement.
A state can jump from one type to another as the state is varied in a continuous manner.
\end{itemize}

\subsection{Comparing to the usual notion of entanglement\label{sect:usualnotion}}
The notion of $g$-entanglement we have introduced, based on groups, differs in some important respects from the usual notion of $u$-entanglement
which is based on tensor products and separability.  Here we discuss the differences and illustrate with some examples.

We begin with the simple case of a completely factorized tensor product state $\vert \psi \rangle_A \otimes \vert \psi \rangle_B \otimes \cdots$.
In this case the two notions agree.  The entanglement groups are all trivial, no matter how the subsystems $A,\, B,\, \ldots$ are grouped together, so there
is no $g$-entanglement present in a tensor product state.  At the same time a tensor product state has no entanglement according to the usual notion.
So according to both $g$-entanglement and $u$-entanglement, a tensor product state has no entanglement.

As another instructive example, an $N$-partite state is usually called fully entangled if it cannot be written as a tensor
product for any grouping of its parts.  This notion of being fully entangled is captured by the entanglement groups.  It corresponds to the statement that in a fully entangled state, the
entanglement groups $E_{(A)(\bar{A})}$ are non-trivial for every $A$.  (Here $A$ denotes a collection of subsystems, grouped into a single object, and $\bar{A}$ denotes the complement of $A$.
Both $A$ and $\bar{A}$ must contain at least one subsystem.)  To see this, note that $E_{(A)(\bar{A})}$ is effectively a bipartite entanglement group, and as shown in appendix \ref{appendix:bipartite},
$E_{(A)(\bar{A})}$ is trivial if and only if the state is a tensor product $\vert \psi \rangle_A \otimes \vert \psi \rangle_{\bar{A}}$.

So far we've seen that a fully-entangled state is characterized by a non-trivial $E_{(A)(\bar{A})}$ for every bipartite division into $A$ and $\bar{A}$.  However the $g$-entanglement scheme provides
us with a more refined notion of how entanglement is distributed among subsystems, since we can ask about the multi-party entanglement group associated with any list of subsystems.
For example, in the $g$-entanglement scheme, $N$-partite entanglement is characterized by a non-trivial $N$-party entanglement group $E_{A_1 \ldots A_N}$.  Even if a state is fully entangled, it may have no $N$-party stabilizers, which means the $N$-party entanglement
group may be trivial.  As pointed out in section \ref{sect:physical}, even though a particular state may not have a multi-party stabilizer,
the existence of a nearby state with an enhanced stabilizer can be an important driver of physical behavior.

Finally we consider $k$-partite entanglement.  The usual notion of whether there is $k$-partite entanglement in an $N$-partite system
relies on whether the $k$-partite reduced density matrix $\rho_{A_1 \ldots A_k}$ is separable.  Our definition uses a different criterion, namely, whether the entanglement group
$E_{A_1 \ldots A_k}$ is trivial.  One might inquire about the origin of the difference.  Shouldn't a (fully) separable $k$-party reduced density
matrix imply that the $k$-party entanglement group is trivial?  As we will show in section \ref{sect:separable}, this is not quite the case.  Even though a reduced density matrix $\rho_{A_1 \ldots A_k}$
is separable, the entanglement group $E_{A_1 \ldots A_k}$ may be non-trivial.  However the stabilizers that contribute to $E_{A_1 \ldots A_k}$ take a very restricted form: they can be
written as a product of two other stabilizers that involve the complement of $A_1 \ldots A_k$.\footnote{See section \ref{subsec:EntGpForSeparable} and appendix \ref{appendix:twoparty}.  This also requires
that the stabilizers correspond to central elements of the entanglement group.}  In this way the entanglement group
$E_{A_1 \ldots A_k}$ can be non-trivial, even if the reduced density matrix $\rho_{A_1 \ldots A_k}$ is separable, however the entanglement
can be thought of as built up indirectly, from entanglement with the purifying system.

We conclude with a few examples to illustrate the differences between $g$-entanglement and $u$-entanglement, building on the results in section
\ref{sect:examples}.

\noindent{\underline{Generalized GHZ}} \\
The two-party entanglement groups $E_{AB}, E_{AC}, E_{BC}$ are non-trivial, so there is two-party $g$-entanglement between any pair of qubits.
This happens even though the two-party reduced density matrices are separable, which means there is no two-party $u$-entanglement.  What is the
origin of the two-party $g$-entanglement?  As discussed above, it comes from a situation in which an $AC$ stabilizer can be
built as a product of an $AB$ stabilizer with a $BC$ stabilizer.\footnote{This can be seen explicitly in (\ref{ghz2p2}), where $S_{AB}$ times $S_{BC}$ gives $S_{AC}$.}
There is a certain ambiguity, whether one should think of such elements of $E_{AC}$ as reflecting $AC$ entanglement or a combination of $AB$ and $BC$
entanglement.  The entanglement groups treat this situation democratically.

For the generalized GHZ state the three-party entanglement group $E_{ABC}$ is trivial, so there is no three-party $g$-entanglement, even though
the state is fully entangled according to the usual notion.  The point is that the generalized GHZ state is not bi-separable, and has non-trivial entanglement
groups $E_{A(BC)}$, $E_{B(AC)}$, $E_{C(AB)}$, however the group $E_{ABC}$ is trivial.  This is true even though the three-tangle is non-zero,
which suggests that (as seen in other contexts) the three-tangle does not provide a comprehensive description of three-party entanglement.

\noindent{\underline{Standard GHZ}} \\
The discussion of two-party $g$-entanglement proceeds just as for generalized GHZ.  However the standard GHZ state has a non-trivial 3-party entanglement group,
$E_{ABC} = {\amsbb Z}_2$.  One can view this enhanced entanglement group as providing an explanation for why the three-tangle, which depends
continuously on the parameters of the state, is non-zero even in the vicinity of standard GHZ.

\noindent{\underline{W state}} \\
The two-party entanglement groups are trivial, so there is no two-party $g$-entanglement in the W state.  However the reduced two-party density matrices are not separable,
which means there is two-party $u$-entanglement.  How does this come about?  The reduced density matrices for the W state are not separable, and the W state indeed has
non-trivial entanglement groups $E_{A(BC)}$, $E_{B(AC)}$, $E_{C(AB)}$ as well as $E_{ABC}$.  However the groups $E_{AB}$, $E_{AC}$, $E_{BC}$ are trivial.  In our scheme
one could say that $A$ and $B$ are entangled, but in a way that necessarily involves system $C$, so it does not count as two-party $g$-entanglement between $A$ and $B$.
Note that the three-tangle is zero, even though the W state has three-party $g$-entanglement.

\section{Structure of entanglement and monogamy\label{sect:monogamy}}
We'd like to understand the structure of the entanglement groups, their relationship to one another and how they act on individual systems.  To this end
we introduce projection operators $\pi_A,\,\pi_B,\,\pi_C$ onto systems $A,B,C$.  For example if $g = (g_A,g_B,g_C)$
then $\pi_A(g) = g_A$.

Suppose a group element $g_A$ belongs to a particular entanglement group, say $g_A \in \pi_A(E_{AB})$.  Here are some questions we'd
like to understand.
\begin{enumerate}
\item
Is there a unique two-party entanglement transformation $g_{AB} \in E_{AB}$ that projects to give $g_A$?
If so, can we say how $g_{AB}$ acts on system $B$?
\item
Could the same group element appear in other entanglement groups?  That is, could $g_A$ also be an element of $\pi_A(E_{AC})$?
\end{enumerate}
Below we will establish
\begin{enumerate}
\item
An isomorphism theorem: there is a unique transformation $g_{AB} \in E_{AB}$ that projects to give $g_A$.  It acts isomorphically on systems
$A$ and $B$.
\item
A no-sharing theorem: $g_A$ can appear in more than entanglement group, but only if it is an element of the center of both $\pi_A(E_{AB})$ and
$\pi_A(E_{AC})$.  Otherwise it cannot belong to more than one entanglement group.
\end{enumerate}
Although our description has been for tripartite systems, these statements have generalizations to multipartite systems.  They provide a sharp statement of monogamy of
entanglement in terms of group theory.

\subsection{Isomorphism theorems\label{sect:isomorphism}}
We begin with a discussion of two-party entanglement, say between systems $A$ and $B$.  Other systems $C,D,\ldots$ could be present
but are not relevant.  Two-party entanglement is characterized by
\be
E_{AB} = S_{AB} / (S_A \times S_B)
\ee
From the point of view of system $A$ we are starting from a group of transformations $\pi_A(S_{AB})$ that can be undone by a compensating
transformation on system $B$.  However we consider $\pi_A(S_A)$, the subgroup of transformations on $A$ that don't require a
compensating transformation on $B$, to be trivial.  In this way entanglement transformations on system $A$ are characterized by
$\pi_A(S_{AB}) / \pi_A(S_A)$.  A similar perspective is available from the point of view of system $B$ and leads to $\pi_B(S_{AB}) / \pi_B(S_B)$.

These two perspectives are related thanks to Goursat's lemma.  As reviewed in appendix \ref{appendix:Goursat} there are isomorphisms
\be
\label{IsomorphismReference}
E_{AB} \approx \pi_A(S_{AB}) / \pi_A(S_A) \approx \pi_B(S_{AB}) / \pi_B(S_B)
\ee
That is, elements of $E_{AB}$ have the form
\be
\label{diagonal}
(g_{A},\phi(g_{A}))
\ee
where $\phi \, : \, \pi_A(S_{AB}) / \pi_A(S_A) \rightarrow \pi_B(S_{AB}) / \pi_B(S_B)$ is an isomorphism.

Note that for a bipartite system, where systems $A$ and $B$ are all there is, the isomorphism can be understood from the Schmidt decomposition (\ref{matrix}), which shows that
elements of the entanglement group have the form $u \otimes u^* \in E_{AB}$.  This can also be seen explicitly in the 2-qubit
examples (\ref{genericuAuB}), (\ref{BelluAuB}).

Next we consider three-party entanglement.  For concreteness we consider a tripartite system.  We'd like to understand how three-party entanglement acts on the individual systems $A$, $B$, $C$.
To save on writing we'll denote the three-party entanglement group by $E_{ABC} = G / N$ where $G = F_{ABC}$ and $N = E_{AB} \cdot E_{AC} \cdot E_{BC}$.
Also we'll denote the projections on the various factors by $G_A = \pi_A(G)$, $N_A = \pi_A(N)$, etc.
 
 As we show in appendix \ref{appendix:Goursat} an extension of Goursat's lemma applies to this situation and implies that $E_{ABC}$ acts isomorphically on the three factors.
 \be
 E_{ABC} \approx G_A / N_A \approx G_B / N_B \approx G_C / N_C
 \ee
 Moreover $E_{ABC}$ is diagonally embedded in the direct product $(G_A / N_A) \times (G_B / N_B) \times (G_C / N_C)$ in the sense that there are
 isomorphisms
 \bea
 \nonumber
&& \theta_A \, : \, E_{ABC} \rightarrow G_A / N_A \\
&& \theta_B \, : \, E_{ABC} \rightarrow G_B / N_B \\
\nonumber
&& \theta_C \, : \, E_{ABC} \rightarrow G_C / N_C
\eea
with
\be
E_{ABC} \approx \lbrace \, \big(\theta_A(t),\,\theta_B(t),\, \theta_C(t) \big) \, \vert \, t \in E_{ABC} \, \rbrace
\ee
So three-party entanglement transformations act isomorphically on systems $A$, $B$ and $C$.  This can be seen explicitly for the GHZ
and W states in (\ref{GHZuAuBuC}), (\ref{WuAuBuC}).  It can also be seen for the non-generic ace state in (\ref{ace3party}), since each
matrix appearing in (\ref{ace3party}) generates a ${\mathbb Z}_2$ on the corresponding system.

We expect this type of relation to generalize: $n$-partite entanglement transformations should act isomorphically on the
$n$ systems that participate in the transformation.

\subsection{No-sharing theorems\label{sect:noshare}}
Consider a transformation $g_A \in \pi_A(E_{AB})$.  Could this same transformation appear in the projection of any other two-party
entanglement group?  The answer is yes, but with restrictions.  To understand the restrictions we first return to consider the
one- and two-party stabilizer groups in a little more detail.

The one-party stabilizers $S_A$, $S_B$, $S_C$ have trivial intersection and commute with each other.
However the two-party stabilizer groups $S_{AB}$, $S_{AC}$, $S_{BC}$ can overlap and need not commute.  Instead of trivial overlap, the intersection of the groups obey 
\be
\label{intersect}
S_{AB} \cap S_{AC} = S_A
\ee
In addition if $s_{AB} \in S_{AB}$ and $s_{AC} \in S_{AC}$ then their group commutator can give a one-party stabilizer.
\be
\label{st}
[s_{AB},s_{AC}] \equiv s_{AB} s_{AC} s_{AB}^{-1} s_{AC}^{-1} \in S_A
\ee

What does this imply for the two-party entanglement groups?  The commutator (\ref{st}) becomes trivial when we quotient by $S_A$, which means $E_{AB}$ and $E_{AC}$ are subgroups of
$F_{ABC}$ that commute element-by-element.
\be
\label{e}
e_{AB} e_{AC} = e_{AC} e_{AB} \qquad \hbox{\rm$\forall$ $e_{AB} \in E_{AB}$, $e_{AC} \in E_{AC}$}
\ee
When projected on system $A$ this implies
\be
\pi_{A}(e_{AB}) \pi_{A}(e_{AC}) = \pi_{A}(e_{AC})\pi_{A} (e_{AB})
\label{eecom}
\ee

To return to our original question, suppose we have a transformation $g_A$ which is an element of both $\pi_A(E_{AB})$ and
$\pi_A(E_{AC})$.  Since different two-party entanglement groups commute element-by-element (see (\ref{eecom})), $g_A$ must be an element of
the center of both groups.
\be
\label{bipartiteshare}
g_A \in Z(\pi_A(E_{AB})) \cap Z(\pi_A(E_{AC}))
\ee
It is possible for centers to intersect, so this is not a vacuous possibility.  This can be seen explicitly for the generalized GHZ state in (\ref{ghz2p1}),
(\ref{ghz2p2}) which gives $\pi_A(E_{AB}) = \pi_A(E_{AC}) = U(1)$.

This lets us make statements about the monogamy of two-party entanglement.
\begin{itemize}
\item
Elements $g_A \in \pi_A(E_{AB})$ that are not in the common center $Z(\pi_A(E_{AB})) \cap Z(\pi_A(E_{AC}))$ are monogamous.
They reflect entanglement between $A$ and $B$ but cannot appear in any other two-party entanglement group.
\item
Elements $g_A \in \pi_A(E_{AB})$ that do appear in the common center $Z(\pi_A(E_{AB})) \cap Z(\pi_A(E_{AC}))$ are not
monogamous.  They are shared between two different two-party entanglement groups.
\end{itemize}
We see that strictly speaking entanglement is not monogamous in the sense that the common center can be shared between two
different entanglement groups.  Note that common center is, by definition, an Abelian group and that due to the isomorphic embedding
(\ref{diagonal}) it acts the same way on systems $A,B,C$.

The fact that different entanglement groups can share a common center leads to a certain freedom in the way in which entanglement
transformations are presented.  Let us illustrate the phenomenon for a tripartite system.
Thanks to (\ref{intersect}) distinct two-party entanglement groups have no non-trivial elements in common.
\be
S_{AB} \cap S_{AC} = S_A \quad \Rightarrow \quad E_{AB} \cap E_{AC} = \lbrace \, 1 \, \rbrace
\ee
However suppose the projected centers have elements in common and let
\be
g_A \in Z(\pi_A(E_{AB})) \cap Z(\pi_A(E_{AC}))
\ee
This means there are elements $(g_A,g_B,1) \in E_{AB}$ and $(g_A,1,g_C) \in E_{AC}$.  But then there is an element of $E_{BC}$
which we can present as
\be
\label{ambiguous}
(1,g_B,g_C^{-1}) = (g_A,g_B,1) \cdot (g_A^{-1},1,g_C^{-1}) \in E_{BC}
\ee
This means a particular $BC$ entanglement cannot be distinguished from an $AB$ entanglement combined with an $AC$ entanglement.
Note that this possibility only exists if $g_A,g_B,g_C$ all belong to common projected centers.
As mentioned in footnote \ref{footnote} it leads to the possibility that $E_{BC} \cap (E_{AB} \cdot E_{AC}) \not= \lbrace 1 \rbrace$.

So far we have discussed monogamy for two-party entanglement but the considerations can be generalized.  Suppose we
have a multi-partite system which we divide as $A_1,\ldots,A_\ell,B_1,\ldots,B_m,C_1,\ldots,C_n$.  Stabilizer groups can intersect since
\be
S_{A_1 \cdots A_\ell B_1 \cdots B_m} \cap S_{A_1 \cdots A_\ell C_1 \cdots C_n} = S_{A_1 \cdots A_\ell}
\ee
Given elements $s_{A_1 \cdots A_\ell B_1 \cdots B_m} \in S_{A_1 \cdots A_\ell B_1 \cdots B_m}$ and
$s_{A_1 \cdots A_\ell C_1 \cdots C_n} \in S_{A_1 \cdots A_\ell C_1 \cdots C_n}$
their group commutator (defined in (\ref{st})) can give another stabilizer.
\be
\label{multisABsAC}
[s_{A_1 \cdots A_\ell B_1 \cdots B_m},s_{A_1 \cdots A_\ell C_1 \cdots C_n}] \in S_{A_1 \cdots A_\ell}
\ee
Now suppose we have a transformation
\be
g_A \in \pi_{A_1 \cdots A_\ell}(S_{A_1 \cdots A_\ell B_1 \cdots B_m}) \cap \pi_{A_1 \cdots A_\ell}(S_{A_1 \cdots A_\ell C_1 \cdots C_n})
\ee
We can think of such a transformation as coming from an element of $S_{A_1 \cdots A_\ell C_1 \cdots C_n}$, which by (\ref{multisABsAC})
means its commutator with any other element of $\pi_{A_1 \cdots A_\ell}(S_{A_1 \cdots A_\ell B_1 \cdots B_m})$ will be an element of
$S_{A_1 \cdots A_\ell}$.  This commutator becomes trivial if we quotient by $S_{A_1 \cdots A_\ell}$, so
\be
[g_A] \in Z[\pi_{A_1 \cdots A_\ell}(S_{A_1 \cdots A_\ell B_1 \cdots B_m}) / S_{A_1 \cdots A_\ell}]
\ee
The entanglement group $E_{A_1 \cdots A_\ell B_1 \cdots B_m}$ can be obtained by taking further quotients.  But
an element of the center projects to an element of the center under additional quotients,\footnote{If $g$ is in the center, so that $f g = g f$ for all $f \in G$,
then after a quotient by $H \, \triangleleft \, G$ we have $[f] [g] = [fg] = [gf] = [g] [f]$ for all $[f] \in G / H$.} so
\be
[g_A] \in Z[\pi_{A_1 \cdots A_\ell}(E_{A_1 \cdots A_\ell B_1 \cdots B_m})]
\ee
Analogous reasoning gives
\be
[g_A] \in Z[\pi_{A_1 \cdots A_\ell}(E_{A_1 \cdots A_\ell C_1 \cdots C_n})]
\ee
This is the multipartite analog of (\ref{bipartiteshare}): an element $g_A$ that comes from the projection on $A_1 \cdots A_\ell$ of two different stabilizer groups
can appear in the corresponding projected entanglement groups, but only if it is an element of the center of both.

In the case of stabilizers which act on nested collections of subsystems we can make a stronger statement.  Consider stabilizers
$S_{A_1 \cdots A_\ell B_1 \cdots B_m}$ and $S_{A_1 \cdots A_\ell B_1 \cdots B_m C_1 \cdots C_n}$, and let
\be
g_A \in \pi_{A_1 \cdots A_\ell} (S_{A_1 \cdots A_\ell B_1 \cdots B_m})
\ee
Note that we automatically have $g_A \in \pi_{A_1 \cdots A_\ell} (S_{A_1 \cdots A_\ell B_1 \cdots B_m C_1 \cdots C_n})$.  
It is certainly possible for $g_A$ to participate in $AB$ entanglement, in the sense that the equivalence class of $g_A$ could be
a non-trivial element of $\pi_{A_1 \cdots A_\ell}(E_{A_1 \cdots A_\ell B_1 \cdots B_m})$.  Could $g_A$ also participate in $ABC$ entanglement, in the sense
of corresponding to a non-trivial equivalence class in $\pi_{A_1 \cdots A_\ell} ( E_{A_1 \cdots A_\ell B_1 \cdots B_m C_1 \cdots C_n})$?
The answer is no.  To see this imagine we have elements
\bea
\nonumber
&& g_1 = (g_A,g_B) \in S_{A_1 \cdots A_\ell B_1 \cdots B_m} \\
&& g_2 = (g_A,g_B',g_C') \in S_{A_1 \cdots A_\ell B_1 \cdots B_m C_1 \cdots C_n}
\eea
Then $g_1^{-1} g_2 = (1,g_B^{-1} g_B',g_C') \in S_{B_1 \cdots B_m C_1 \cdots C_n}$, which means $g_2$ can be written as a product of an element of $S_{A_1 \cdots A_\ell B_1 \cdots B_m}$
with an element of $S_{B_1 \cdots B_m C_1 \cdots C_n}$.
\be
g_2 = (g_A,g_B,1) \cdot (1,g_B^{-1} g_B',g_C')
\ee
Since $g_2$ can be written in this way it corresponds to the identity element of $E_{A_1 \cdots A_\ell B_1 \cdots B_m C_1 \cdots C_n}$.  Projecting on system $A$ this means
the equivalence class of $g_A$ is the identity element of $\pi_{A_1 \cdots A_\ell} ( E_{A_1 \cdots A_\ell B_1 \cdots B_m C_1 \cdots C_n})$.

\subsection{Restrictions on entanglement groups}
The isomorphism and no-sharing theorems of sections \ref{sect:isomorphism} and \ref{sect:noshare} restrict the structure of the
entanglement groups and can lead to further consequences.  Here we give examples of a few of these consequences.

Consider a tripartite system $A,B,C$ and suppose $A$ and $B$ are maximally entangled.  We take this to mean that $E_{AB}$ is as
large as it can be, $E_{AB} = PSU(d)$ for the case of equal Hilbert space dimensions $d_A = d_B = d$.  What does this imply for the entanglement groups?
Since the center of $PSU(d)$ is trivial, and different two-party entanglement groups must commute with each
other, we immediately see that $E_{AC}$ and $E_{BC}$ must be trivial.  We can also show that there is no three-party entanglement.
To see this suppose we have an element of the full entanglement group $f = (f_A,f_B,f_C) \in F_{ABC}$.  Since $E_{AB} = PSU(d)$, for every $f_{A} \in \pi_{A}(f)$,
there is an element $g=(f_A,\phi(f_A),1) \in F_{ABC}$ where $\phi$ is an isomorphism.  But then we see that  $ g^{-1}\cdot f=  (1,\phi(f_A)^{-1} f_B,f_C) \in E_{BC}$, so $f$ can be written  as a product of an element
of $E_{AB}$ with an element of $E_{BC}$.
\be
f = (f_A,\phi(f_A),1) \cdot (1,\phi(f_A)^{-1} f_B,f_C)
\ee
Since any $f \in F_{ABC}$ can be written in this way there is no three-party entanglement.  Overall we've shown that for equal Hilbert space
dimensions maximal entanglement between $A$ and $B$ implies
\be
\hbox{\rm $F_{ABC} = E_{AB} = PSU(d)$ with $E_{AC},E_{BC},E_{ABC}$ all trivial}
\ee
This is the usual statement that maximal entanglement is monogamous.  It's equivalent to a statement about stabilizers,
\be
\hbox{\rm $S_{ABC} = S_{AB} \times S_C$ with $S_{AB} = PSU(d)$ and $S_A,S_B$ trivial}.
\ee
If the Hilbert space dimensions are not equal, say $d_A < d_B$, then maximal entanglement of system $A$ with system $B$ still means $E_{AB} = PSU(d_A)$.
In this case it is possible
to have a non-trivial $E_{BC}$ as long as it commutes with $PSU(d_A)$.  However $E_{AC}$ must be trivial.  Also the proof that $E_{ABC}$ is trivial still goes through so there is
no three-party entanglement.

As a concrete example of maximal entanglement consider a system with a qubit $A$, a qu-4it $B$, and a qubit $C$. Consider the state
\begin{equation}
\ket{\psi} = a \big(\vert 0 \rangle_{A} \vert 0 \rangle_{B} + \vert 1 \rangle_{A} \vert 1 \rangle_{B}\big) \vert 0 \rangle_{C} + b \big(\vert 0 \rangle_{A} \vert 2 \rangle_{B} + \vert 1 \rangle_{A} \vert 3 \rangle_{B}\big) \vert 1 \rangle_{C}
\end{equation}
This state has a two-party stabilizer $S_{AB} = U(1) \times U(2)$ which acts by
\be
e^{i \theta} \begin{pmatrix}
\alpha & \beta \\
-\bar{\beta} &\bar{\alpha}
\end{pmatrix}
\otimes
\begin{pmatrix}
\bar{\alpha}& \bar{\beta} & & \\
-\beta & \alpha & & \\
& & \bar{\alpha}& \bar{\beta} \\
& & -\beta & \alpha
\end{pmatrix}
\otimes
\begin{pmatrix}
1 & 0 \\
0 & 1
\end{pmatrix} \qquad \vert \alpha \vert^2 + \vert \beta \vert^2 = 1
\ee
This gives $E_{AB} = PSU(2)$.  For generic $a,b$ the state has $E_{BC} = U(1)$ but for $a=b$ it is enhanced to $E_{BC} = PSU(2)$. 
Inside the $SU(4)$ of the qu-4it the groups $\pi_{B}(E_{AB})$ and $\pi_{B}(E_{BC})$ are embedded so that they commute. In addition, in this case, 
$\pi_{B}(E_{AB}) \cap \pi_{B}(E_{BC}) = \lbrace \, 1 \, \rbrace$.  We also find that $E_{AC}$ and $E_{ABC}$ are trivial.

Here is another example of how general considerations restrict the structure of the entanglement groups.  Suppose we're working with the
generalized GHZ state (\ref{sghz}), but without knowing the exact state vector. Instead suppose we only know that we have a state of three qubits with one-party
stabilizers $S_{A},S_{B},S_{C}$ that are just phases times the identity operator and with two-party stabilizers given by (\ref{ghz2p1}), (\ref{ghz2p2}), (\ref{ghz2p3}).
One then asks: is it possible for such a state to also have some non-trivial three-party entanglement?  If there is an additional element of $S_{ABC}$ it must be of
the form $u_{A}\otimes u_{B} \otimes u_{C}$ and in the normalizer of $S_{AB},S_{AC},S_{BC}$. A simple computation shows that an element of the form 
\begin{equation}
\begin{pmatrix}
0 &e^{i\delta_1}  \\
-e^{-i\delta_1} & 0
\end{pmatrix}
\otimes
\begin{pmatrix}
0 &e^{i\delta_2}  \\
-e^{-i\delta_2} & 0
\end{pmatrix}
\otimes
\begin{pmatrix}
0 &e^{i\delta_3}  \\
-e^{-i\delta_3} & 0
\end{pmatrix}
\end{equation}
is in the normalizer of the two-party entanglement: under conjugation the matrices (\ref{ghz2p1}), (\ref{ghz2p2}), (\ref{ghz2p3}) are invariant
up to a phase.  However this element under multiplication by the known $S_{AB},S_{AC},S_{BC}$ is equivalent
to an element of the same form but with $\delta_1=\delta_2 =\delta_3 =\delta$. Can we have two such elements in $S_{ABC}$? If we have two such
elements with parameters $\delta$ and $\tilde{\delta}$ then multiplying one by the other and using some elements of $S_{AB},S_{AC},S_{BC}$
can generate an element of $S_{A},S_{B},S_{C}$ that is non-trivial (not proportional to the identity). That contradicts one of the assumptions,
so we conclude that we can have at most one additional stabilizer of the form\footnote{We fixed the phase so that the stabilizer squares to the identity.}
\begin{equation}
i \begin{pmatrix}
0 &e^{i\delta}  \\
-e^{-i\delta} & 0
\end{pmatrix}
\otimes
\begin{pmatrix}
0 &e^{i\delta}  \\
-e^{-i\delta} & 0
\end{pmatrix}
\otimes
\begin{pmatrix}
0 &e^{i\delta}  \\
-e^{-i\delta} & 0
\end{pmatrix}
\end{equation}
Such a stabilizer would indeed give a three-party entanglement group $E_{ABC} = {\mathbb Z}_2$.  The state which has this additional stabilizer
is of the form (\ref{sghz}) with $a = \pm i b e^{3i\delta}$. 

\section{Mixed states\label{sect:MixedStates}}
So far we have introduced entanglement groups for pure states.  In this section we consider the extension to mixed states.

To illustrate the ideas, we start with a two-party density matrix $\rho_{AB}$.  We purify $\rho_{AB}$ to a state $\vert \psi \rangle_{ABC}$ and introduce an appropriate notion of an entanglement
group $\EtildeAB$ associated with the purification.  After doing this, and showing that the entanglement groups are well-defined (independent of the purification), we return to relate the entanglement groups
directly to properties of $\rho_{AB}$.  The generalization to multipartite density matrices is straightforward and will be mentioned in passing.

Consider a bipartite Hilbert space ${\cal H}_A \otimes {\cal H}_B$.  An ensemble of states
\be
\lbrace \vert \ell \rangle_{AB} \in {\cal H}_A \otimes {\cal H}_B \rbrace \qquad \ell = 1,\ldots,L
\ee
with probabilities $p_\ell$ corresponds to a density matrix
\be
\rho_{AB} = \sum_{\ell = 1}^L p_\ell \, \vert \ell \rangle_{AB} \, {}_{AB} \langle \ell \vert
\ee
We're assuming the states $\vert \ell \rangle_{AB}$ are normalized, but they need not be orthogonal or complete in ${\cal H}_A \otimes {\cal H}_B$.

The density matrix can be purified by introducing an auxiliary Hilbert space ${\cal H}_C$ with an orthonormal basis $\lbrace \vert \ell \rangle_C \rbrace$ and
considering the state
\be
\vert \psi \rangle = \sum_{\ell = 1}^L \sqrt{p_\ell} \, \vert \ell \rangle_{AB} \otimes \vert \ell \rangle_C
\ee
Since ${}_C \langle \ell \vert \ell' \rangle_C = \delta_{\ell \ell'}$ we have ${\rm Tr}_C \left( \vert \psi \rangle \langle \psi \vert \right) = \rho_{AB}$.

Given the purification $\vert \psi \rangle_{ABC}$, we could define a variety of entanglement groups, as quotients of the various stabilizers $S_{ABC}$, $S_{AB}$, $S_{AC}$, $S_{BC}$,
$S_A$, $S_B$, $S_C$.  However we must ask which quotients have operational significance in terms of the original density matrix.  In this regard, remember that we do not
observe system $C$ and will eventually trace over it.  Therefore, the appropriate notion of a stabilizer is an element $\widetilde{s}_{AB} \in U(d_A) \times U(d_B)$ with the
property that it leaves the purification invariant up to a unitary on system $C$.
\be
\big(\widetilde{s}_{AB} \otimes \identity_C\big) \vert \psi \rangle = \big(\identity_{AB} \otimes u_C\big) \vert \psi \rangle
\ee
When tracing over system $C$, note that $u_C$ drops out, so such a stabilizer will leave the density matrix invariant.  In this way a stabilizer for the density matrix $\widetilde{s}_{AB}$ corresponds
to a stabilizer for the purification $\widetilde{s}_{AB} \otimes u_C^{-1}$.

We can obtain the stabilizer groups for $\rho_{AB}$, denoted with a tilde, by acting with $AB$ projection operators on the stabilizer groups for the purification.\footnote{The projection is defined by
$\pi_{AB}(u_A,u_B,u_C) = (u_A,u_B)$.}
\bea
\nonumber
&& \widetilde{S}_{AB} = \pi_{AB}\big(S_{ABC}\big) \\
&& \widetilde{S}_A = \pi_{AB}\big(S_{AC}\big) \\
\nonumber
&& \widetilde{S}_B = \pi_{AB}\big(S_{BC}\big)
\eea
Given these stabilizer groups, we can construct an entanglement group for the density matrix
\be
\label{EtildeAB}
\EtildeAB = \widetilde{S}_{AB} / \big(\widetilde{S}_A \times \widetilde{S}_B\big) = \pi_{AB}\big(S_{ABC} / \left(S_{AC} \cdot S_{BC}\right)\big)
\ee
This is the entanglement group that we associate with a two-party density matrix $\rho_{AB}$.  To make a few comments:
\begin{itemize}
\item
From the point of view of the purification, $\widetilde{S}_{AB}$ allows a unitary transformation on system $C$.  This means it includes both
the  two-party $AB$ entanglement and the three-party $ABC$ entanglement that is present in the purification.  If we had access to system $C$ we could distinguish these possibilities
by examining the entanglement groups associated with the pure state $\vert \psi \rangle$, namely
\bea
\nonumber
&& E_{AB} = S_{AB} / \left(S_A \times S_B\right) \\[5pt]
\label{distinguish}
&& E_{ABC} = S_{ABC} / \left(S_{AB} \cdot S_{AC} \cdot S_{BC}\right)
\eea
However, the assumption is that we do not have access to system $C$, so if we are just given the density matrix, we cannot make this distinction in a meaningful way.  Instead, the only
physically meaningful entanglement group associated with $\rho_{AB}$ is $\EtildeAB$.
\item
As mentioned above, the groups $E_{AB}$ and $E_{ABC}$ do not capture all of the
entanglement involving $A$ and $B$ which may be present in the purification.  There could also be stabilizers $S_{A(BC)}$ that act with a unitary on the combined $(BC)$ system.  We will return to this point in section \ref{sect:A(BC)}.
\item
The generalization to multipartite mixed states is straightforward.  For example, consider a tripartite density matrix $\rho_{ABC}$.  This has a purification $\vert \psi \rangle_{ABCD}$
and we can define entanglement groups
\bea
\nonumber
&&\EtildeAB = \pi_{ABC}\big(S_{ABD} / \left(S_{AD} \cdot S_{BD}\right)\big) \\
\nonumber
&&\EtildeAC = \pi_{ABC}\big(S_{ACD} / \left(S_{AD} \cdot S_{CD}\right)\big) \\
\label{EtildeABC}
&&\EtildeBC = \pi_{ABC}\big(S_{BCD} / \left(S_{BD} \cdot S_{CD}\right)\big) \\
\nonumber
&&\EtildeABC = \pi_{ABC}\big(S_{ABCD} / \left(S_{ABD} \cdot S_{ACD} \cdot S_{BCD}\right)\big)
\eea
\end{itemize}

Since the definitions (\ref{EtildeAB}), (\ref{EtildeABC}) are in terms of a purification, we must show that these entanglement groups are
well-defined, that is, that they are independent of the choice of purification.  Every density
matrix has a minimal purification, constructed as follows.  Start by diagonalizing the density matrix.
\be
\rho_{AB} = \sum_{i = 1}^r p_i \, \vert i \rangle_{AB} {}_{AB} \langle i \vert
\ee
Here $r$ is the rank of $\rho_{AB}$.  A minimal purification is then
\be
\vert \psi \rangle = \sum_{i = 1}^r \sqrt{p_i} \, \vert i \rangle_{AB} \otimes \vert i \rangle_C
\ee
Any other purification of $\rho_{AB}$ is related to this one by \cite{nielsen2010quantum}
\begin{enumerate}
\item
Embedding the orthonormal vectors $\vert i \rangle_C$ into a (sufficiently large) Hilbert space ${\cal H}_{C'}$;
\item
Performing a unitary transformation on ${\cal H}_{C'}$.
\end{enumerate}
For later reference, two different ensembles $\lbrace (p_\ell,\, \vert \ell \rangle_{AB}) \rbrace$ and $\lbrace (q_m,\, \vert m \rangle_{AB}) \rbrace$ can generate
exactly the same density matrix \cite{nielsen2010quantum}.  The ensembles lead to nominally different purifications,
\bea
\nonumber
&& \vert \psi \rangle = \sum_{\ell = 1}^L \sqrt{p_\ell} \, \vert \ell \rangle_{AB} \otimes \vert \ell \rangle_C \\
\label{freedom}
&& \vert \psi' \rangle = \sum_{m = 1}^M \sqrt{q_m} \, \vert m \rangle_{AB} \otimes \vert m \rangle_{C'}
\eea
which are however related in the manner described (if necessary, by embedding the vectors in the larger of ${\cal H}_C$ and ${\cal H}_{C'}$, then performing
a unitary transformation on the auxiliary Hilbert space).

Are the entanglement groups associated with a density matrix well-defined?  For $\rho_{AB}$, entanglement groups are defined in terms of stabilizer groups such as
$S_{AC}$ and $S_{ABC}$.  These groups transform in a simple way under unitary transformations on system $C$.\footnote{Likewise $S_{A(BC)}$ behaves
nicely under unitary transformations on the combined $(BC)$ system, which includes unitary transformations on system $C$ as a subgroup.}
\bea
\nonumber
&& \vert \psi \rangle \rightarrow \left(\identity_A \otimes \identity_B \otimes u_C\right) \vert \psi \rangle \equiv {\cal U}_C \vert \psi \rangle \\
&& S_{AC} \rightarrow {\cal U}_C S_{AC} \, {\cal U}_C^\dagger \\
\nonumber
&& S_{ABC} \rightarrow {\cal U}_C S_{ABC} \, {\cal U}_C^\dagger
\eea
These transformations leave the abstract groups $S_{AC}$ and $S_{ABC}$ invariant, although they change the way they're embedded in
$U(d_A) \times U(d_B) \times U(d_C)$.  So the entanglement groups are well-defined, independent of the choice of purification.

At this stage we have a definition of entanglement groups for a density matrix $\rho_{AB}$ in terms of stabilizers of the purification $\vert \psi \rangle$.  What do these stabilizers
mean in terms of the density matrix itself?  To illustrate, consider a stabilizer
\be
s_{ABC} = u_A \otimes u_B \otimes u_C \in S_{ABC}
\ee
which by definition means
\be
s_{ABC} \vert \psi \rangle = e^{i \theta} \vert \psi \rangle
\ee
or equivalently
\be
\big(u_A \otimes u_B \otimes \identity_C\big) \vert \psi \rangle = e^{i \theta} \big(\identity_{AB} \otimes u_C^{-1}\big) \vert \psi \rangle
\ee
Tracing out system $C$, and denoting $\widetilde{s}_{AB} = u_A \otimes u_B$, we have
\be
\label{RhoStabilizer}
\widetilde{s}_{AB} \, \rho_{AB} \, \widetilde{s}_{AB}^{\,\,\dagger} = \rho_{AB}
\ee
So the density matrix is invariant under conjugation by $\widetilde{s}_{AB}$.

Conversely, given a stabilizer $\widetilde{s}_{AB}$ satisfying (\ref{RhoStabilizer}), it can be lifted to an element of $S_{ABC}$ for the purification.  To see this, suppose
$\lbrace (p_\ell,\, \vert \ell \rangle_{AB}) \rbrace$ is an ensemble of states that correspond to the density matrix $\rho_{AB}$.  Given a stabilizer (\ref{RhoStabilizer}), it
follows that $\lbrace (p_\ell,\, \widetilde{s}_{AB} \vert \ell \rangle_{AB}) \rbrace$ corresponds to exactly the same density matrix.  Then, due to the uniqueness discussed
around (\ref{freedom}), the two purifications are related by a unitary transformation $u_C$ on the auxiliary system.  That is, there is a stabilizer for the purification
$\widetilde{s}_{AB} \otimes u_C \in S_{ABC}$.

This means we can define the group $\widetilde{S}_{AB}$ as the subgroup of $U(d_A) \times U(d_B)$ with the property that the density matrix is invariant under conjugation
by every element of $\widetilde{S}_{AB}$.
\be
\label{def1}
\widetilde{s}_{AB} \, \rho_{AB} \, \widetilde{s}_{AB}^{\,\,\dagger} = \rho_{AB} \qquad \hbox{\rm for all $\widetilde{s}_{AB} \in \widetilde{S}_{AB}$}
\ee
This has a normal subgroups $\widetilde{S}_A$, $\widetilde{S}_B$ which act trivially on one of the subsystems.
\bea
\nonumber
&& \widetilde{S}_A = \left\lbrace \hbox{$u_A \otimes \identity_B \in \widetilde{S}_{AB}$ such that $\big(u_A \otimes \identity_B\big) \rho_{AB} \big(u_A \otimes \identity_B\big)^\dagger = \rho_{AB}$}\right\rbrace \\
\label{def2}
&& \widetilde{S}_B = \left\lbrace \hbox{$\identity_A \otimes u_B \in \widetilde{S}_{AB}$ such that $\big(\identity_A \otimes u_B\big) \rho_{AB} \big(\identity_A \otimes u_B\big)^\dagger = \rho_{AB}$}\right\rbrace
\eea
This lets us construct the entanglement group
\be
\label{def3}
\EtildeAB = \widetilde{S}_{AB} / (\widetilde{S}_A \times \widetilde{S}_B)
\ee
directly from the density matrix, without reference to a purification.  Note that this construction is operationally meaningful, in the sense that it involves a physical operation (unitary conjugation) on the density matrix.

\subsection{Structure of mixed-state entanglement groups\label{sect:properties}}
Having defined entanglement groups for density matrices, we'd like to explore some of their properties.  In what follows we're mostly concerned
with the action of the entanglement groups on individual systems.  For this reason we introduce projection operators $\pi_A, \, \pi_B, \, \pi_C, \, \ldots$ that project
onto systems $A, \, B, \, C, \, \ldots$.  For example, given $g = (g_A,g_B) \in \EtildeAB$, we have $\pi_A(g) = g_A$.

There are two main questions we'd like to address.
\begin{enumerate}
\item
Consider a two-party density matrix $\rho_{AB}$.  Let $g_A \in \pi_A(\EtildeAB)$.  Is there a unique element $g_{AB} \in \EtildeAB$ that projects
to give $g_A$?  If so, how are the actions of $g_{AB}$ on systems $A$ and $B$ related?
\item
Consider a three-party density matrix $\rho_{ABC}$.  Let $g_A \in \pi_A(\EtildeAB)$.  Could $g_A$ appear in the projection of any other entanglement groups?
That is, could $g_A$ also be an element of $\pi_A(\EtildeAC)$ or of $\pi_A(\EtildeABC)$?
\end{enumerate}
In what follows we show that:
\begin{enumerate}
\item
There is a unique $g_{AB} \in \EtildeAB$ that projects to any given $g_A \in \pi_A(\EtildeAB)$.  It acts isomorphically on systems $A$ and $B$.
\item
A transformation $g_A$ can appear in both $\pi_A(\EtildeAB)$ and $\pi_A(\EtildeAC)$, but only if it is an element of the center of both groups.
It cannot appear in the projection of both a two-party entanglement group (meaning $\EtildeAB$, $\EtildeAC$) and a three-party entanglement group $\EtildeABC$.
\end{enumerate}
The discussion here closely follows the treatment of pure states in section \ref{sect:monogamy}.

\subsubsection{Isomorphism theorems\label{sect:MixedStateIsomorphism}}
For a two-party density matrix $\rho_{AB}$, we wish to show that there is a unique $g_{AB} \in \EtildeAB$ that projects to any given $g_A \in \pi_A(\EtildeAB)$, and
that $g_{AB}$ acts isomorphically on systems $A$ and $B$.  We begin with some suggestive observations at the level of stabilizers, then turn to entanglement groups.

Purify $\rho_{AB}$ to a state $\vert \psi \rangle_{ABC}$, and consider two stabilizers which happen to have the same action on system $A$.
\bea
\nonumber
&& s_1 = (s_A,s_B,s_C) \in S_{ABC} \\[5pt]
&& s_2 = (s_A,s_B',s_C') \in S_{ABC}
\eea
It follows that
\be
s_1 s_2^{-1} = (\identity_A, s_B s_B'^{\,-1}, s_C s_C'^{\,-1})
\ee
is a stabilizer which only acts on systems $B$ and $C$, so $s_1 s_2^{-1} \in S_{BC}$.  When we quotient to form $\EtildeAB$, elements of $S_{BC}$ are mapped to the identity.
So $s_1$ and $s_2$ represent the same equivalence class in $\EtildeAB$.  This means $s_A$ determines a unique element $g_{AB} \in \EtildeAB$, and hence a unique
element $g_B \in \pi_B(\EtildeAB)$.

This observation about stabilizers is suggestive, but does not quite establish the desired result.  We'd like to make a statement, not about $s_A$, but rather about an
equivalence class $[s_A] \in \pi_A(\EtildeAB)$.  For this we first introduce some convenient notation.  Let
\be
G = S_{ABC} \qquad\quad N = S_{AC} \cdot S_{BC}
\ee
with projections
\bea
\nonumber
G_A = \pi_A(G) \qquad\quad N_A = \pi_A(N) \\[5pt]
G_B = \pi_B(G) \qquad\quad N_B = \pi_B(N)
\eea
The entanglement group $\EtildeAB$ is constructed as
\be
\EtildeAB = \pi_{AB}(G / N)
\ee
We claim there are group isomorphisms
\bea
\nonumber
&& \theta_A \, : \, \EtildeAB \rightarrow G_A / N_A \\
&& \theta_B \, : \, \EtildeAB \rightarrow G_B / N_B
\eea
which make
\be
\label{isomorphism}
\EtildeAB \approx G_A / N_A \approx G_B / N_B
\ee
This means we can present $\EtildeAB$ in the form
\be
\EtildeAB = \left\lbrace \, \Big(\,\theta_A(t),\, \theta_B(t)\,\Big) \, \Big\vert \, t \in \EtildeAB \, \right\rbrace
\ee
So there is a unique element of $\EtildeAB$ that projects to any given $[s_A]$ or $[s_B]$, and it acts isomorphically on systems
$A$ and $B$.

The proof of (\ref{isomorphism}) is as follows.  The isomorphism $G_A / N_A \approx G_B / N_B$ is given as Theorem 1
in appendix \ref{appendix:Goursat}.  We also have an isomorphism\footnote{Proof: let $H = \left\lbrace \, \Big( \, [s_A],\,[s_B],\,[s_C] \, \Big) \, \in  {G_A \over N_A} \times {G_B \over N_B} \times {G_C \over N_C} \, \Big\vert \,
\big(s_A,s_B,s_C\big) \in G \, \right\rbrace$ and define a map $\phi \, : \, G \rightarrow H$ by $\phi\big((s_A,s_B,s_C)\big) = \big( \, [s_A],\,[s_B],\,[s_C] \, \big)$.  Then ${\rm ker} \, \phi = N$ and
an isomorphism theorem gives $G / {\rm ker} \phi \approx H$.}
\be
G / N \approx \left\lbrace \, \Big( \, [s_A],\,[s_B],\,[s_C] \, \Big) \, \in  {G_A \over N_A} \times {G_B \over N_B} \times {G_C \over N_C} \, \Big\vert \,
(s_A,s_B,s_C) \in G \, \right\rbrace
\ee
Acting on this with $\pi_{AB}$ we have the isomorphisms given in (\ref{isomorphism}).

\subsubsection{Restrictions on sharing\label{sect:MixedStateSharing}}
Now we turn to a three-party density matrix $\rho_{ABC}$ and ask: can a given element $g_A \in \pi_A(\EtildeAB)$ appear in the
projection of more than one entanglement group?  Could it also appear in $\pi_A(\EtildeABC)$?  What about $\pi_A(\EtildeAC)$?

We begin by disposing of the first possibility: if $g_A$ appears in the projection of a two-party entanglement group, it cannot also
appear in the projection of a three-party group.  To see this, suppose there are stabilizers
\bea
\nonumber
&& s_1 = \big(s_A,s_B,\identity_C,s_D\big) \in S_{ABD} \\
&& s_2 = \big(s_A,s_B',s_C',s_D'\big) \in S_{ABCD}
\eea
Then $s_1^{-1} s_2 = \big(\identity_A, s_B^{-1}s_B', s_C', s_D^{-1} s_D' \big) \in S_{BCD}$, which means we can write $s_2$ as a product of an element of $S_{ABD}$ with an element of $S_{BCD}$.
\be
s_2 = \big(s_A,s_B,\identity_C,s_D\big) \cdot \big(\identity_A, s_B^{-1}s_B', s_C', s_D^{-1} s_D' \big)
\ee
Recall that $\EtildeABC$ involves a quotient by $S_{ABD}$ and $S_{BCD}$.  So written in this way, we see that $s_2$ maps to the identity element of $\EtildeABC$.

Could $g_A$ also appear in $\pi_A(\EtildeAC)$?  This is a bit more subtle.  Consider two stabilizers
\bea
\nonumber
&& s_1 = \big(s_A,s_B,\identity_C,s_D\big) \in S_{ABD} \\
&& s_2 = \big(s_A',\identity_B,s_C',s_D'\big) \in S_{ACD}
\eea
Their group commutator is
\be
s_1 s_2 s_1^{-1} s_2^{-1} = \big(s_A s_A' s_A^{-1} s_A'^{\,-1}, \identity_B, \identity_C, s_D s_D' s_D^{-1} s_D'^{\,-1}\big) \in S_{AD}
\ee
The commutator becomes trivial in both $\EtildeAB$ and $\EtildeAC$, since those groups involve a quotient by $S_{AD}$.
Projecting onto system $A$, we see that $\pi_A(\EtildeAB)$ and $\pi_A(\EtildeAC)$ commute element-by-element.  This means that a single
element $g_A$ could appear in both groups, but only if it is an element of the center of both.
\be
\label{Z}
g_A \in Z\big(\pi_A(\EtildeAB)\big) \cap Z\big(\pi_A(\EtildeAC)\big)
\ee
In this sense, $g$-entanglement is not strictly monogamous: an element of the common center (\ref{Z}) can be thought of as reflecting
either $AB$ entanglement or $AC$ entanglement.

\section{Separable states\label{sect:separable}}
When a density matrix $\rho_{AB}$ is separable, there are only classical correlations between measurements on systems $A$ and
$B$.  This is usually taken to mean there's no entanglement between $A$ and $B$ \cite{Werner:1989zz}.  Our goals here are to translate
separability into the language of stabilizer groups, and to understand what it means from this perspective.
We will see that, although separability places severe restrictions, it does not quite imply that $\EtildeAB$ is trivial,
because the purification could have three-party entanglement with the auxiliary system.

Consider a bipartite Hilbert space ${\cal H}_A \otimes {\cal H}_B$.  A separable density matrix $\rho_{AB}$ has the form
\be
\label{separable}
\rho_{AB} = \sum_\ell p_\ell \, \rho_A^{(\ell)} \otimes \rho_B^{(\ell)}
\ee
with $p_\ell > 0$ and $\sum_\ell p_\ell = 1$.  We can take $\rho_A^{(\ell)}$ and $\rho_B^{(\ell)}$ to be pure-state density matrices \cite{Horodecki:1997vt,Horodecki:2009zz}, so that
\be
\label{separable2}
\rho_{AB} = \sum_{\ell = 1}^L p_\ell \, \big(\vert \ell \rangle_A \, {}_A \langle \ell \vert \big) \otimes \big(\vert \ell \rangle_B \, {}_B \langle \ell \vert \big)
\ee
It follows from Carath\'odory's theorem that $L \leq d_{AB}^2 = (d_A d_B)^2$ \cite{Horodecki:1997vt,Horodecki:2009zz}, where the dimensions $d_A = {\rm dim} \, {\cal H}_A$,
$d_B = {\rm dim} \, {\cal H}_B$ are assumed finite.

In what follows we first discuss purification of $\rho_{AB}$, by introducing an auxiliary Hilbert space ${\cal H}_C$.  Then we examine
the stabilizers and entanglement groups associated with the purification.  We will consider both
\begin{itemize}
\item
the group $E_{AB}$, which captures entanglement between $A$ and $B$ while leaving out the auxiliary system;
\item
the group $E_{A(BC)}$, which captures entanglement between $A$ and the combined $(BC)$ system.
\end{itemize}
Along the way we will see that, although the entanglement group $\EtildeAB$ associated with the density matrix itself can be non-trivial,
it can only arise from three-party entanglement between $A$, $B$ and the purifying system $C$.

\subsection{Purifying a separable density matrix}
Given a separable density matrix $\rho_{AB}$, we can purify it to a pure state of a tripartite system $ABC$ by introducing
\be
\label{pure}
\vert \psi \rangle = \sum_\ell \sqrt{p_\ell} \, \vert \ell \rangle_A \otimes \vert \ell \rangle_B \otimes \vert \ell \rangle_C
\ee
Here $\lbrace \vert \ell \rangle_C \rbrace$ is a basis for the Hilbert space ${\cal H}_C$, which we take to be orthonormal.
\be
{}_C\langle \ell \vert \ell' \rangle_C = \delta_{\ell \ell'} \qquad \ell,\,\ell' = 1,\ldots,L
\ee
A purification of this form is only possible for a separable density matrix.  A general superposition of factorized states
\be
\vert \psi \rangle = \sum_\ell \alpha_\ell \, \vert \ell \rangle_A \otimes \vert \ell \rangle_B \otimes \vert \ell \rangle_C \qquad \alpha_\ell \in {\amsbb C}
\ee
leads to
\be
\rho_{AB} = \sum_\ell \vert \alpha_\ell \vert^2 \big(\vert \ell \rangle_A \, {}_A \langle \ell \vert \big) \otimes \big(\vert \ell \rangle_B \, {}_B \langle \ell \vert \big)
\ee
So separability --  including the fact that the coefficients $p_\ell$ in (\ref{separable}) are all positive -- is essential.

A few comments on this purification are in order.  First, note that there's no reason to expect the vectors $\vert \ell \rangle_A$ to be linearly independent in ${\cal H}_A$, nor is there any reason to expect them to
span all of ${\cal H}_A$.  Likewise for $\vert \ell \rangle_B$ and ${\cal H}_B$.  Next, note that the auxiliary Hilbert space ${\cal H}_C$ we introduced has dimension ${\rm dim} \, {\cal H}_C = L$.  The minimum dimension required to purify
a density matrix is equal to the rank of the density matrix.  Generically $L$ is bigger than the rank of $\rho_{AB}$, so generically we're using a non-minimal purification.
However, as discussed previously, it is related to any other purification by a unitary transformation on system $C$.

\subsection{Entanglement group for a separable density matrix\label{subsec:EntGpForSeparable}}
Suppose a density matrix is separable.  What does separability imply for the associated entanglement group?  One might expect that
\be
\label{EtildeAB2}
\EtildeAB = \pi_{AB}\big(S_{ABC} / \left(S_{AC} \cdot S_{BC}\right)\big)
\ee
should be trivial.  It turns out this is not quite the case.

To understand what separability implies, we first purify $\rho_{AB}$ to a state of the form (\ref{pure}).  In appendix \ref{appendix:twoparty} we show that any two-party stabilizer of the purification $s_{AB}$ can be written as a product
$s_{AC} s_{BC}$.  Such stabilizers can give the pure state $\vert \psi \rangle$ a two-party entanglement group, since there is no reason for
\be
E_{AB} = S_{AB} / \left(S_A \times S_B\right)
\ee
to be trivial.  However such stabilizers do not contribute to the entanglement group for the density matrix $\EtildeAB$: due to the quotient
by $S_{AC} \cdot S_{BC}$, such stabilizers map to the identity.  Although two-party
entanglement in the purification does not contribute to $\EtildeAB$, it is still possible for $\EtildeAB$ to be non-trivial, because there could be
three-party $g$-entanglement in the purification.

To illustrate this, consider the generalized GHZ state
\be
\vert \psi \rangle = a \vert 000 \rangle + b \vert 111 \rangle
\ee
For generic values of $a$ and $b$ the entanglement groups are, from section \ref{sect:examples} \footnote{Here $\varnothing$ denotes the trivial group.  Recall that the generalized GHZ state has tri-partite $u$-entanglement,
which in the $g$-entanglement approach corresponds to having non-trivial groups $E_{A(BC)} = E_{B(AC)} = E_{C(AB)} = U(1)$.  However it does not have bi-partite $u$-entanglement
since the reduced density matrices are separable.  For a discussion of $g$-entanglement vs.\ $u$-entanglement in GHZ see section \ref{sect:usualnotion}.}
\be
E_{AB} = U(1) \qquad\quad E_{ABC} = \varnothing
\ee
Tracing out system $C$ leads to a separable density matrix
\be
\rho_{AB} = \vert a \vert^2 \, \vert 00 \rangle \langle 00 \vert + \vert b \vert^2 \, \vert 11 \rangle \langle 11 \vert
\ee
with $\EtildeAB = \varnothing$.  The absence of three-party entanglement in the purification means that, as argued in general above,
$\EtildeAB$ is trivial.

By contrast, the standard GHZ state with $a = b = 1/\sqrt{2}$ has an additional three-party stabilizer.
\be
x = \begin{pmatrix}
0 & 1 \\
1 & 0
\end{pmatrix}
\otimes
\begin{pmatrix}
0 & 1 \\
1 & 0
\end{pmatrix}
\otimes
\begin{pmatrix}
0 & 1 \\
1 & 0
\end{pmatrix} \in S_{ABC}
\ee
This enhances $E_{ABC}$ from $\varnothing$ to ${\amsbb Z}_2$.  As a result, $\EtildeAB$ is also enhanced to ${\amsbb Z}_2$.  This illustrates the fact that three-party entanglement
with the purifying system can lead to a non-trivial $\EtildeAB$.

As a further comment, one could imagine starting with a pure state $\vert \psi \rangle_{ABC}$ and obtaining $\rho_{AB}$ as the reduced density matrix that describes measurements
performed on the $AB$ system.  In this case system $C$ is in principle observable, so as discussed around (\ref{distinguish}), more information is in principle available.  In particular, in addition to
the entanglement group $\EtildeAB$ that is associated with the density matrix, we could consider the two-party entanglement group $E_{AB}$ that is associated with the pure state
$\vert \psi \rangle_{ABC}$.
(Recall that $E_{AB}$ is the entanglement group built from stabilizers that do not act on system $C$.  Given the density matrix $\rho_{AB}$
this is not meaningful, but given the pure state $\vert \psi \rangle_{ABC}$ it is.)  $E_{AB}$ has a very particular structure.
We have seen that, if $\rho_{AB}$ is separable, all $AB$ stabilizers
are built up as a product of $AC$ and $BC$ stabilizers.  As shown in appendix \ref{appendix:twoparty}, this means that in the $g$-entanglement scheme $AB$ entanglement can be thought of as a combination of $AC$ entanglement with $BC$ entanglement.

\subsection{Structure of $A$ -- $(BC)$ entanglement in the purification\label{sect:A(BC)}}
Here we consider entanglement between system $A$ and the combined $(BC)$ system, captured by the group $E_{A(BC)}$.  What does separability have to say about such entanglement?

First, note that $\rho_{AB}$ is separable if and only if there is a controlled unitary\footnote{a unitary transformation ${\cal U}_{BC}$ that is block diagonal, as in the second
factor of (\ref{block}), controlled by the state of system $C$} that acts on the combined $(BC)$ system and
completely disentangles $B$.  To construct ${\cal U}_{BC}$, consider the purification of a separable density matrix to
\be
\label{pure2}
\vert \psi \rangle = \sum_{\ell = 1}^L \sqrt{p_\ell} \, \vert \ell \rangle_A \otimes \vert \ell \rangle_B \otimes \vert \ell \rangle_C
\ee
Choose any fixed vector $\vert \chi \rangle_B \in {\cal H}_B$ and choose a set of unitary matrices $u_B^{(\ell)}$ with the property that for each $\ell$
\be
u_B^{(\ell)} \vert \ell \rangle_B = \vert \chi \rangle_B
\ee
Since we don't specify how $u_B^{(\ell)}$ acts on any vector orthogonal to $\vert \ell \rangle_B$, there is quite a bit of freedom in choosing these matrices.\footnote{There is
a unitary transformation that maps any orthonormal basis to any other orthonormal basis.}  Then define
\bea
\nonumber
{\cal U}_{BC} & = & \sum_{\ell = 1}^L \identity_A \otimes u_B^{(\ell)} \otimes \vert \ell \rangle_C {}_C \langle \ell \vert \\
\label{block}
& = & \identity_A \otimes \left(\begin{array}{ccc} u_B^{(1)} & & \\ & \ddots & \\ & & u_B^{(\ell)} \end{array}\right)_{BC}
\eea
This is the controlled unitary we referred to previously.  It has the property that
\be
\label{factorized}
{\cal U}_{BC} \vert \psi \rangle = \sum_{\ell = 1}^L \sqrt{p_\ell} \, \vert \ell \rangle_A \otimes \vert \chi \rangle_B \otimes \vert \ell \rangle_C
\ee
A controlled unitary for the Werner state is given in appendix \ref{appendix:Werner}.

The key point is that the right hand side of (\ref{factorized}) is a tensor product.  With a slight abuse of notation we can write it as
\be
{\cal U}_{BC} \vert \psi \rangle = \vert \chi \rangle_B \otimes \Big(\sum_{\ell = 1}^L \sqrt{p_\ell} \, \vert \ell \rangle_A \otimes \vert \ell \rangle_C\Big)
\ee
The unitary ${\cal U}_{BC}$ shifts all of the entanglement of $A$ with the combined $BC$ system so that it becomes entanglement just between $A$ and $C$.  As shown in appendix \ref{appendix:UBC},
the entanglement groups for the state ${\cal U}_{BC} \vert \psi \rangle$ reflect the fact that system $B$ has been completely disentangled.

Note that we can go on to construct a controlled unitary ${\cal U}_{AC}$ which completely disentangles system $A$.  By combining the two transformations, we can map the original state to a tensor product.
\be
{\cal U}_{AC} \, {\cal U}_{BC} \vert \psi \rangle = \vert \chi \rangle_A \otimes \vert \chi \rangle_B \otimes \Big(\sum_{\ell = 1}^L \sqrt{p_\ell} \, \vert \ell \rangle_C \Big)
\label{sepprep}
\ee
This characterization of pure states whose reduced density matrices are separable is related to the usual notion of how one makes a separable
density matrix.  From (\ref{sepprep}) there is a controlled unitary ${\cal U}_{ABC}={\cal U}^{-1}_{AC}{\cal U}^{-1}_{BC} $ with party $C$ as the control,
for which
\be
{\cal U}_{ABC} \, \vert \, \hbox{\rm tensor product state} \, \rangle = \vert \psi \rangle
\label{dmp}
\ee
The usual  procedure for making a separable $\rho_{AB}$ is for party $C$ to generate a random number $\ell \in \lbrace 1,\ldots,L \rbrace$  with probability $p_\ell$, and to
communicate the result to parties $A$ and $B$ who then prepare their states accordingly.  Party $C$ could generate the random number by making
a measurement of $\ell$ in the state $\sum_{\ell = 1}^L \sqrt{p_\ell} \, \vert \ell \rangle_C$.  Due to the delayed measurement principle \cite{aharonov1998quantumcircuitsmixedstates,GUREVICH202221}, an alternate procedure is to apply a controlled unitary, then at the end measure
(or trace over) system $C$.  This results in identical statistical outcomes for
the $AB$ system. This alternate procedure can be implemented by applying ${\cal U}_{ABC}$ as in (\ref{dmp}) then measuring or tracing over $C$.

\section{Relation to quantum tasks \label{section:qtasks}}
In this section we analyze a few well-known quantum tasks and show that the definition of entanglement presented in section \ref{sect:groups} is at the heart of what makes the task possible. For tasks relying on two-party entanglement we will see that the tasks succeed because the entangled states being used are eigenstates of either the operators  $Z_1 \otimes Z_2$ and  $X_1 \otimes X_2$ (when $E_{AB} = PSU(2)$) or just $Z_1 \otimes Z_2$ (when $E_{AB} = U(1)$).

In this section we use the following notation.  For qubits the operators $X,Y,Z$ are
\begin{equation}
X=
\begin{pmatrix}
0 &1  \\
1 & 0
\end{pmatrix}
\ \ \ \ \ Y=
\begin{pmatrix}
0 & -i  \\
i & 0
\end{pmatrix}
\ \ \ \ \ Z=
\begin{pmatrix}
1 &0  \\
0 & -1
\end{pmatrix}
\end{equation}
$X_{i},Y_i,Z_{i}$ indicate these operators acting on the $i$${}^{th}$ qubit. The states $\vert 0 \rangle$ and $\vert 1 \rangle$
are eigenstates of the $Z$ operator with eigenvalue $+1$ and $-1$, respectively, and the state $|*\rangle_{i}$ indicates such
states for the $i$${}^{th}$ particle. 

\subsection{Super-dense coding\label{sect:superdense}}
Given a single qubit we can only encode one bit of classical information, by choosing two orthogonal states in the Hilbert
space to represent the classical bit.  Super-dense coding (in its simplest form) is the ability to send one qubit to a receiver
who can then extract two bits of classical information.

Super-dense coding \cite{PhysRevLett.69.2881} works as follows. Alice and Bob share a pair of qubits in a Bell state. Alice acts on her qubit with whatever local unitary
she wants and then sends her qubit to Bob. This enables her to send two classical bits of information, since she can act on her qubit to produce four orthogonal states
of the entangled pair which Bob can then measure. How is it that Alice is able to act on her qubit to produce four orthogonal states?
Acting on a single isolated qubit can only produce two orthogonal states, so to get four orthogonal states of two qubits you might think you need to act on both qubits.

Indeed starting from a general (up to local unitaries) two qubit state $\ket{\psi} = a\ket{00} +b \ket{11}$ one can act with local unitary operators on the two systems to produce four orthogonal states.
\bea
\nonumber
&& \big( {\bf I} \otimes {\bf I} \big) \, \ket{\psi} = a \ket{00} +b \ket{11} \\
\nonumber
&& \big( X \otimes {\bf I} \big) \, \ket{\psi} = a \ket{10} +b \ket{01} \\
&& \big( {\bf I} \otimes Y \big) \, \ket{\psi} = ia \ket{01} - ib \ket{10}\\
\nonumber
&& \big( X \otimes Y \big) \, \ket{\psi} = ia \ket{11} - ib \ket{00}
\eea
But thanks to the maximal entanglement of the
Bell state, any unitary acting on the second qubit is equivalent to a unitary acting on the first.  This means that starting from a Bell state we can produce four orthogonal
states by only acting on the first qubit with
\be
\label{denseops}
{\bf I} \otimes {\bf I},\  X \otimes {\bf I}, \  Y  \otimes {\bf I} , \ XY \otimes {\bf I}
\ee
Thus super-dense coding is directly related to the characterization of entanglement we have described.

\subsection{Teleportation}
Before getting to teleportation let us do the following exercise. We start with two qubits, one in a state $\ket{\phi}=a|0\rangle+b|1 \rangle$ and the other in a state $\ket{0}$.
We act on both with the CNOT operation where the first qubit is the control qubit.
That is, we act with the unitary operator
\begin{equation}
CNOT_{12}=\frac{1}{2}(Z_{1}\otimes I_{2}+I_{1}\otimes I_{2}+I_{1}\otimes X_{2}-Z_{1}\otimes X_{2})
\label{cnot12}
\end{equation}
We then act on the first qubit with the Hadamard operation
\be
\label{Hadamard}
H = \frac{1}{\sqrt{2}}(Z+X)
\ee
to get the state
\begin{equation}
|0 \rangle \otimes (a|0 \rangle+b|1 \rangle)+ |1 \rangle \otimes (a|0 \rangle-b|1 \rangle)
\end{equation}
Suppose we measure $Z_1$.  If we get $+1$ then qubit 2 is in the state $\ket{\phi}$ and if we get $-1$ then we act with $Z_2$ on qubit 2 and again put it in the state $\ket{\phi}$.
We can add a third spectator qubit without changing anything. There is nothing mysterious about this result. 

Now suppose one starts with qubit $1$ in a state $\ket{\phi}=a\ket{0}+b\ket{1}$ and qubits $2$ and $3$ in a Bell state, say $\frac{1}{\sqrt{2}}(|00 \rangle+|11 \rangle)$.
One acts on qubits $1$ and $2$ with the $CNOT_{12}$ gate (\ref{cnot12}) and then with the Hadamard gate on qubit $1$. This gives the state
\begin{equation}
\frac{a}{\sqrt{2}}(|0 \rangle+|1 \rangle)\otimes(|00 \rangle+|11 \rangle)+\frac{b}{\sqrt{2}}(|0 \rangle-|1 \rangle)\otimes(|10 \rangle+|01 \rangle)
\end{equation}
One can now measure $Z_1$ and $Z_3$ and get qubit $2$ in some state. The four possible results are
\begin{eqnarray}
00 &\rightarrow&  a\ket{0}+b\ket{1} \nonumber \\
01  &\rightarrow&  a\ket{1}+b\ket{0} \nonumber\\
10   &\rightarrow&  a\ket{0}-b\ket{1} \nonumber\\
11  &\rightarrow&  a\ket{1}-b\ket{0}
\label{teleres}
\end{eqnarray}
Knowing the answer of the measurements one can act with the unitaries $(I_2, X_{2}, Z_{2}, Z_{2}X_{2})$ respectively to get qubit $2$ into the desired state $\ket{\phi}$.
Again there is no mystery here, it is just a more elaborate example than the first one. Teleportation  \cite{PhysRevLett.70.1895} happens when we instead
measure $Z_1$ and $Z_2$ and then look at the state of qubit $3$. We get the same result as (\ref{teleres}), but for qubit $3$. That is, by manipulating qubits $1$ and $2$
one teleports the state of qubit $1$ to a seemingly un-manipulated qubit $3$. Why does this work? There does not seem to be any symmetry between qubit $2$ and qubit $3$
since the $CNOT_{12}$ gate acted on qubits $1$ and $2$ and not on qubit $3$. However one could define a $CNOT_{13}$ gate acting on qubits $1$ and $3$ by
\begin{equation}
\frac{1}{2}(Z_{1}\otimes I_{3}+I_{1}\otimes I_{3}+I_{3}\otimes X_{3}-Z_{1}\otimes X_{3}).
\label{cnot13}
\end{equation}
If we follow the same procedure as above and measure  $Z_1$ and $Z_2$, we would not be surprised that qubit $3$ is found in the state (\ref{teleres}). Now comes the crucial part.
\begin{equation}
CNOT_{12}\Big(  \frac{1}{\sqrt{2}}(a|0 \rangle+b|1 \rangle)\otimes(|00 \rangle+|11 \rangle)\Big)=CNOT_{13}\Big(  \frac{1}{\sqrt{2}}(a|0 \rangle+b|1 \rangle)\otimes(|00 \rangle+|11 \rangle)\Big)
\end{equation}
The reason this works is that on the Bell state we started with, the action of $X_2$ is equivalent to the action of $X_3$, due exactly to
the property of entanglement we have defined.  So we see that teleportation works exactly because of the property that is captured
by our definition of entanglement.

\subsection{Entanglement swapping}
To establish notation, the Bell states are listed in the following table.  In the right columns we list their eigenvalues under certain operators which stabilize the state.
\be
\begin{array}{c|c|c}
{\rm state} & X_1 \otimes X_2 & Z_1 \otimes Z_2 \\
\hline
\vert \phi_1\rangle = \frac{1}{\sqrt{2}}(|00\rangle+|11\rangle) & + & + \\[8pt]
\vert \phi_2\rangle =\frac{1}{\sqrt{2}}(|00\rangle-|11\rangle) & - & + \\[8pt]
\vert \phi_3\rangle = \frac{1}{\sqrt{2}}(|01\rangle+|10\rangle) & + & - \\[8pt]
\vert \phi_4\rangle = \frac{1}{\sqrt{2}}(|01\rangle-|10\rangle) & - & -
\end{array}
\ee
If qubit $i$ and qubit $j$ are in Bell state $\ket{\phi_k}$ we write it as $\ket{\phi_k}_{ij}$.

Say one has four qubits in a state $\ket{\phi_1}_{12}\otimes \ket{\phi_1}_{34}$. Entanglement swapping is the observation that if one projects qubits $2$ and $3$ into
the Bell state  $\ket{\phi_k}_{23}$ then one finds qubits $1$ and $4$ in the state $\ket{\phi_k}_{14}$. As we shall see, this phenomenon can be understood in terms
of local unitary stabilizers which in our perspective define the entanglement of the state.

Note that the state $\ket{\phi_1}_{12}\otimes \ket{\phi_1}_{34}$ can be expanded in a tensor product of Bell states, just like any other state in the Hilbert space.
\begin{equation}
|\phi_1 \rangle_{12}\otimes |\phi_1 \rangle_{34}=\sum_{k,l=1}^{4} a_{kl} \, |\phi_k \rangle_{14}\otimes |\phi_l \rangle_{23}.
\label{entswap1}
\end{equation}
The left side is an eigenstate with eigenvalue $+1$ under the operators
\begin{equation}
\label{swap}
\big(X_1\otimes X_{2}\big) \otimes \big(X_3 \otimes X_4\big), \qquad \big(Z_{1}\otimes Z_{2}\big) \otimes \big(Z_3 \otimes Z_{4}\big)
\end{equation}
In this form the invariance reflects some of the $12$ and $34$ entanglement of the state.

The right side of (\ref{entswap1}) must have the same property.  Every state appearing on the right side of (\ref{entswap1}) is an eigenstate
of these operators with eigenvalue $\pm 1$, where now we are thinking of the operators as reflecting some of the $14$ and $23$
entanglement of the state.  It is straightforward to check that the only combinations with eigenvalue $+1$ under both operators
have $k = l$, thus the sum in (\ref{entswap1}) only includes terms with $k=l$.  Some further calculation
shows that $a_{kl}=\pm \frac{1}{2} \delta_{kl}$, but this fact is not essential for entanglement swapping.

\subsection{CHSH inequalities}
A famous consequence of bipartite entanglement is violation of the Bell inequalities.  Here we show that violation of the CHSH
inequalities \cite{PhysRevLett.23.880} (generalizations of the Bell inequalities) is related to entanglement as we have defined it.
We will give the discussion in the context of two qubits. 

Let us look at   two statistical variables  of system $1$, which can take values $\pm 1$, and label them by  $a, a'$. Similarly for system
$2$ we have variables labeled by $b, b'$.
Then under the assumption of local hidden variable models (LHVM) one gets the CHSH inequality 
\begin{equation}
\big|\langle ab+ab'+a'b-a'b' \rangle\big|\leq 2
\label{chshcombo}
\end{equation}
This inequality is known to be violated in quantum mechanics for any two-qubit entangled state \cite{GISIN1991201}. That is, for any two qubit state $|\psi \rangle$
which is not a tensor product state, there is some choice of observables $a,a',b,b'$ for which the left hand side of (\ref{chshcombo}) is larger than 2.

Let us see how this is connected with the view of entanglement we have presented.  Any state of two qubits can be written in a Schmidt basis as
\be
\label{CHSHSchmidt}
\ket{\psi} = \sum_{i = 1}^2 p_i \, \ket{i}_1 \otimes \ket{i}_2
\ee
The tensor product state has $p_1 = 1$, $p_2 = 0$.  All other states have $p_1 > p_2 > 0$, and as a result they have
at least a non-trivial $U(1)$ stabilizer group which acts as
\be
u_1(\theta) \otimes u_2(\theta) = \left(\begin{array}{cc} e^{i \theta} & 0 \\ 0 & e^{-i \theta} \end{array}\right) \otimes
\left(\begin{array}{cc} e^{-i \theta} & 0 \\ 0 & e^{i \theta} \end{array}\right)
\ee
We can build Hermitian operators $h_1,h_2$ from the stabilizer by setting
\be
h_1 = -i u_1\big({\pi \over 2}\big) = Z_1, \quad h_2 = i u_1\big({\pi \over 2}\big) = Z_2
\ee
The entanglement of the state is partially encoded in the fact that $\big( Z_1 \otimes Z_2 \big) \ket{\psi} = \ket{\psi}$.

We need to construct observables $a,a',b,b'$ with eigenvalues $\pm 1$ to use in the CHSH inequality.  Besides $Z_1$ and $Z_2$ we will use $X_1$ and $X_2$ since
these operators have the helpful properties that they anti-commute and square to the identity.  With these ingredients we can construct suitable observables ($\epsilon$
is a small parameter)
\begin{equation}
a=Z_{1}, \ \  a'=X_{1},\ \ b=\frac{Z_{2}+\epsilon X_{2}}{\sqrt{1+\epsilon^2}},\ \ b'=\frac{Z_{2}-\epsilon X_{2}}{\sqrt{1+\epsilon^2}}.
\end{equation}
With these operators the left side of the CHSH inequality (\ref{chshcombo}) becomes
\begin{equation}
{2 \over \sqrt{1 + \epsilon^2}} \, \big\vert \big\langle \, Z_1 Z_2 + \epsilon X_1 X_2 \, \big\rangle \big\vert
\end{equation}
Expanding to first order in $\epsilon$ this is $2 \vert \langle Z_{1}Z_{2} \rangle + \epsilon \langle X_{1}X_{2} \rangle \vert$. Crucially, due to the
entanglement of the state, we have $\langle Z_{1} Z_{2} \rangle=1$, and an explicit computation with (\ref{CHSHSchmidt}) gives
$\langle X_1 X_2 \rangle = 2 p_1 p_2 > 0$.  So the left side of (\ref{chshcombo}) becomes $\vert 2 + 4 \epsilon p_1 p_2 \vert$ which is larger than 2
for positive $\epsilon$. 

Note the crucial role of the non-trivial entanglement group: not only do we need $Z_{1} Z_{2} |\psi \rangle=|\psi \rangle$, but we also need that $Z_{i}$ by itself does not stabilize the state,
$Z_{i} |\psi \rangle \neq \pm |\psi \rangle$.  If say $Z_1$ stabilized the state, we would have $\langle X_1 X_2 \rangle = \langle Z_1 X_1 X_2 Z_1 \rangle = - \langle X_1 X_2 \rangle$
since $X_1$ and $Z_1$ anti-commute.  Then $\langle X_1 X_2 \rangle = 0$, and the CHSH inequality would not be violated.  The connection between maximum violation of Bell-like
inequalities and stabilizers was exhibited for graph states in \cite{G_hne_2005}.

\subsection{Simons algorithm}
We now turn to examples that rely on multi-partite entanglement. In this subsection we discuss the Simons algorithm, and in the next subsection we will discuss
a quantum pseudo-telepathy game.  For the Simons algorithm, let $B=\lbrace \, 0,1\, \rbrace$, so that $B^n$ is the set of strings of length $n$ with each entry being $0$ or $1$.
On $B^n$ we introduce an operation of addition: if $x,y \in B^n$ we denote $(x \oplus y) \in B^n$ obtained by addition modulo $2$. We also denote $x \cdot y= x_1 y_1 \oplus \cdots \oplus x_n y_n \in B$.
 
 Suppose we are given a function $f \, : \, B^n \rightarrow B^n$. 
 We are told the function is $2 \, : \, 1$ 
and that it has the property that for all $x,y \in B^n$
\begin{equation}
f(x)=f(y) \ \ \hbox{\rm if and only if} \ \ y=x \oplus \xi
\end{equation}
Here  $\xi$ is some fixed element of $B^n$.  The task is to determine $\xi$. 

One can of course evaluate $f(x)$ for all $x$ and find $\xi$ by an exhaustive search, but this requires exponential time in $n$.
The Simons algorithm \cite{doi:10.1137/S0097539796298637} provides a way to find $\xi$ in polynomial time (for a nice description see also \cite{Jozsa:1996hb}).
It starts with having  a black box which implements on any two strings of $n$ qubits (say $|x \rangle$ and $|y \rangle$),  a unitary transformation 
\begin{equation}
U_{f}\  : |x \rangle \otimes |y \rangle \rightarrow |x \rangle \otimes |y\oplus f(x) \rangle.
\end{equation}
One then starts with two systems of $n$ qubits in the state $|0\cdots 0\rangle \otimes |0\cdots 0 \rangle$ and applies the local unitary $H^{\otimes n}$ to the first system
to get $\sum_{x\in B^n}|x\rangle \otimes |0\cdots 0\rangle$. Here $H$ is the Hadamard gate
(\ref{Hadamard}); since $H \ket{0} = \ket{+} = {1 \over \sqrt{2}} \big(\ket{0} + \ket{1}\big)$ the state of the first system is just $|+ \cdots + \rangle$. One now acts with $U_{f}$ and gets
\begin{equation}
\sum_{x \in B^n}|x \rangle \otimes |f(x) \rangle
\end{equation}
Now there is bipartite entanglement between the two systems. In fact since 
\begin{equation}
\sum_{x\in B^n} |x \rangle \otimes |f(x) \rangle = \sum_{f(x)} \big(|x \rangle +|x\oplus \xi\rangle\big)\otimes |f(x) \rangle
\end{equation}
the Schmidt decomposition is such that half of the Schmidt coefficients are $1$ and half are $0$.  This leads to a large two-party
entanglement group for this state. But this entanglement is only partly responsible for the ability to extract $\xi$ from the state in
polynomial time.

The procedure is to measure the second system, which collapses the first system into the state
\be
\label{SimonsForm}
|x_{0}\rangle + |x_{0} \oplus \xi \rangle
\ee
for some $x_{0}$. But how to determine $\xi$?  At this stage measuring the state of the first system in the $Z$ basis will not give any
information since $x_{0}$ could be any element of $B^n$.  Repeating the whole procedure from the beginning will most likely
produce a different $x_{0}$.  We need a procedure to extract the periodicity $\xi$ in a way that is independent of $x_{0}$.

To decipher $\xi$ one can  use the fact that states of the form (\ref{SimonsForm}) have a local unitary stabilizer (corresponding to a non-trivial multi-partite entanglement group ${\mathbb Z}_{2}$) which depends on $\xi$ but is
independent of $x_{0}$. To see this note that $\xi$ is a string of $0$'s and $1$'s. The difference between $|x \rangle$ and
$|x \oplus \xi \rangle$ is only in the places where $\xi$ has $1$'s, and in these places the difference is $0 \leftrightarrow 1$.
This means $|x_{0}\rangle + |x_{0}\oplus \xi \rangle$ is invariant under
\begin{equation}
{\bf I} \otimes \cdots \otimes {\bf I} \otimes X_{i_1} \otimes {\bf I} \otimes \cdots \otimes {\bf I} \otimes X_{i_{k}} \otimes {\bf I} \otimes \cdots
\otimes {\bf I}
\label{simonsymx}
\end{equation}
where $i_{1},\ldots,i_k$ are the $k$ locations where $\xi$ has $1$'s.  Since (\ref{simonsymx}) is a local unitary transformation it encodes
entanglement of the qubits. In fact the state (\ref{SimonsForm}) is a tensor product of $k$ entangled qubits -- the qubits where $\xi$ had $1$'s,
which are in a $k$-party GHZ state -- with the other $n - k$ qubits which are in an unknown tensor product state. This implies
that measurements will have correlated outcomes: if we measure the state in the $X$ basis we will always find 
\be
X_{i_1} \cdots X_{i_k} = 1, 
\label{simcor}
\ee
while other strings of $X$'s are un-correlated. To see what this means imagine we measure the $X_{i}$ of each qubit getting some string of $\pm 1$'s.
Transferring this to bit symbols $(1\rightarrow 0)$ and $( -1 \rightarrow 1)$, we get an object $x \in B^n$. Any $x$ obtained in this way satisfies $x \cdot \xi =0$:
places where $\xi$ has a $0$ make no contribution, while places where $\xi$ has a $1$ gives a sum that vanishes mod 2 since (\ref{simcor}) means that
measuring $X_{i_1},\ldots,X_{i_k}$ gives an even number of $-1$'s. So measuring in the $X$ basis gives a vector satisfying $x \cdot \xi = 0$, and by repeating the
procedure enough times one can determine $\xi$.  Note that $x \cdot \xi = 0$ is entirely due to the multipartite entanglement of the state (\ref{SimonsForm}).

In the literature one usually acts with $H^{\otimes n}$ before measuring.  This maps the $X$ basis to the $Z$ basis and gives\footnote{$H^{\otimes n}$ is a local unitary transformation so it does not change the entanglement.}
\begin{equation}
H^{\otimes n} \big(|x_{0}\rangle + |x_{0} \oplus \xi \rangle\big) = \sum_{\scriptstyle z \in B^n \atop \scriptstyle z \cdot \xi =0}(-1)^{x_{0}\cdot z} \, |z\rangle.
\end{equation}
Every state $\ket{z}$ appearing in the sum is invariant under the local unitary transformation (\ref{simonsymx}) with $X\rightarrow Z$.
Measuring the state in the $Z$ basis gives a vector $z$ satisfying $z \cdot \xi = 0$.  One can show that repeating the procedure $O(n^2)$ times gives a sufficiently large
collection of vectors satisfying $z \cdot \xi = 0$ in order to determine $\xi$. Note that one does not actually need to do the Fourier transform to get $\xi$, as one could measure
in the $X_{i}$ basis rather than the $Z_{i}$ basis.  Equivalently in this algorithm the Fourier transform is a local unitary transformation (unlike in say the Shor algorithm).

\subsection{Quantum pseudo-telepathy \label{subsection:qpt}}
Here we describe an example of a quantum pseudo-telepathy game (for a nice explanation see \cite{brassard2004recasting}). These are tasks (games)  that no classical strategy can win with certainty, but there is a quantum strategy to win if one uses entangled states. An understanding of which underlying property of the state controls the probability of winning probability illustrates our claim that entanglement should be
understood in terms of local unitary stabilizers.

\subsubsection{Three party GHZ game}
The three party GHZ game is as follows \cite{Greenberger1989, mermin1990quantum}. There are 3 parties $A,B,C$. 
The three parties ($i=1,2,3$) get as input $q_{i} \in \{0,1 \}$. The input is not general, but obeys $\sum_{i=1}^{3} q_{i}=0 \ {\rm mod} \ 2$. In response each party must produce an output $p_{i} \in \{0,1\}$. The game is won if the output obeys
\bea
\sum_{i=1}^{3} p_{i} &=& 0 \ {\rm mod} \ 2 \ {\rm if} \  q_{i}=(0,0,0) \label{1}\\
\sum_{i=1}^{3} p_{i} &=& 1\ {\rm mod}\ 2\ {\rm if} \  q_{i}=(0,1,1)\label{2}\\
\sum_{i=1}^{3} p_{i} &=& 1\ {\rm mod}\ 2\ {\rm if} \ q_{i}=(1,1,0)\label{3}\\
\sum_{i=1}^{3} p_{i} &=& 1\ {\rm mod}\ 2\ {\rm if} \ q_{i}=(1,0,1)\label{4}
\label{wintab}
\eea
The best classical strategies give  probability of wining $0.75$.

A winning quantum strategy is as follows. One starts with the GHZ state 
\be
\frac{1}{\sqrt{2}}\big(\vert 000 \rangle + \vert 111 \rangle\big)
\ee
 distributed to the three parties. Each of the parties which gets an input $0$ measures the observable $X$ of its qubit,  and each one that gets a $1$ measures the observable $Y$. The output is $0$ if the measurement gives $+1$ and the output is $1$ if the measurement gives $-1$. Let us see that this strategy wins with certainty.  The GHZ state has the property that 
\be
X_1 X_2 X_3 \vert {\rm GHZ} \rangle = \vert {\rm GHZ} \rangle
\label{gx}
\ee
Thus if everyone measures $X$ there will be an even number of $-1$, so there will be an even number of $1$ in the output, satisfying the first line in the winning condition. Since 
\be
Z_1 Z_2 \vert {\rm GHZ} \rangle=\vert {\rm GHZ} \rangle,
\ee
 this gives together with (\ref{gx}) the result 
 \be
 Y_1 Y_2 X_3 \vert {\rm GHZ} \rangle=-\vert {\rm GHZ} \rangle.
 \ee
Thus measuring $Y_1, Y_2, X_3$ will give an odd number of $-1$ and thus as output an odd number of $1$'s. This satisfies the third line in the winning table.
Since
\be
Z_3 Z_2 \vert {\rm GHZ} \rangle=\vert {\rm GHZ} \rangle\  {\rm and} \  Z_1 Z_3 \vert {\rm GHZ} \rangle=\vert {\rm GHZ} \rangle,
\ee
we also satisfy the second and fourth lines in the winning table.  To summarize, the relationships that lead to winning with certainty are
\bea
{\rm Success\ of}\ (\ref{1}) &\leftrightarrow& X_1X_2 X_3 \vert {\rm GHZ} \rangle=\vert {\rm GHZ} \rangle\label{1a}\\
{\rm Success\ of}\ (\ref{2}) &\leftrightarrow& X_1Y_2 Y_3 \vert {\rm GHZ} \rangle=-\vert {\rm GHZ} \rangle\label{2a}\\
{\rm Success\ of}\ (\ref{3}) &\leftrightarrow& Y_1Y_2 X_3 \vert {\rm GHZ} \rangle=-\vert {\rm GHZ} \rangle\label{3a}\\
{\rm Success\ of}\ (\ref{4}) &\leftrightarrow&Y_1X_2 Y_3 \vert {\rm GHZ} \rangle=-\vert {\rm GHZ} \rangle\label{4a}
\eea
This shows that winning is controlled by  the local unitary stabilizers of the state.

\subsubsection{Breaking the symmetries\label{subsub:GHZ}}
To make this even clearer, let us look at what happens when we use in the game, a state that does not have the necessary local stabilizers.
Say we fix a state $\vert\phi\rangle$ which we use for the GHZ game. Each of the unitary operators in (\ref{1a}) -- (\ref{4a}) has eigenvalues $\pm1$. Thus for each unitary we can expand
\be
\vert\phi\rangle =\sum_{i}a_{i} \vert+1_{i}\rangle+\sum_{j}b_{j} \vert-1_{j}\rangle
\ee
where $\vert +1_{i}\rangle$ and $\vert -1_{j}\rangle$ form an orthonormal basis for the subspaces with eigenvalues $\pm 1$, for any particular operator appearing in (\ref{1a}-\ref{4a}),
and the coefficients $a_{i}$ and $b_j$ depend on which operator we use to form the basis.
The probability of failure for case (\ref{1}) is the appropriate $\sum_{j} |b_{j}|^{2}$, while the probability of failure for cases (\ref{2}) -- (\ref{4}) is the appropriate $\sum_{i} |a_{i}|^{2}$.
Let us define four order parameters (only for a state with the right stabilizers will the appropriate order parameters vanish)
\beas
\xi_1 &=& \frac{1}{2}\big(1-\langle X_1X_2 X_3\rangle\big)\\
 \xi_{2} &=& \frac{1}{2}\big(1+\langle X_1Y_2 Y_3\rangle\big)\\
 \xi_{3} &=& \frac{1}{2}\big(1+\langle Y_1Y_2 X_3\rangle\big)\\
 \xi_{4} &=& \frac{1}{2}\big(1+\langle Y_1X_2 Y_3\rangle\big)
\eeas
It is now easy to check that the probability of failure in each of the four possibilities is exactly the order parameter associated with that operator.
Thus given a state $\vert \phi \rangle$, the probability of failure is simply given by (assuming each input has probability $1/4$)
\be
P_{\rm failure}=\frac{1}{4}\sum_{i=1}^{4} \xi_{i}
\ee
This connects the probability of failure directly with the order parameters, which measure how close the state $\vert \phi \rangle$ is to having the necessary stabilizers.

\subsubsection{General GHZ game}
In the general GHZ game  \cite{PhysRevLett.65.1838, Buhrman_2003, boyer2004extendedghznplayergames}, there are $N$ players each receiving as input a number
$q_{i} \in \{0,1,\ldots, D-1\}$.  The input data is not arbitrary but obeys
\be
{1 \over D} \sum_{i=1}^{N} q_{i} = m
\label{defn}
\ee 
where $m$ is an integer. Each player after receiving the input gives an output $p_{i} \in \{0,1\}$. The game is won if
\bea
& &\sum_{i}^{N} p_{i}=0 \ {\rm mod} \ 2\  \leftrightarrow \ \hbox{\rm $m$ even} \nonumber\\
& &  \ \sum_{i}^{N} p_{i}=1 \ {\rm mod} \ 2\  \leftrightarrow \ \hbox{\rm $m$ odd}
\label{win}
\eea
As before, the players are allowed to devise any strategy before they get the input, but they cannot communicate after receiving the input. If $D$ is even, there is no classical strategy that wins the game with probability 1. For $D=2$ the probability of winning with the best classical strategy is $\frac{1}{2}+\frac{1}{2^{[N/2]}}$ .

A winning quantum strategy is as follows. Before the game an $N$-partite GHZ state
\be
\vert {\rm GHZ} \rangle=\frac{1}{\sqrt{2}}\big(\vert 0^N \rangle + \vert 1^N \rangle\big)
\ee
is distributed between the parties. Given an input $q_{i}$, we define $\theta_{i}= \frac{\pi q_{i}}{D}$. Then player $i$ makes a measurement of the observable 
\be
X_{\theta_{i}}=X_{i} \cos \theta_{i} +Y_{i} \sin \theta_{i} 
\ee
If the result is $1$ the output is $p_{i}=0$, and if the result is $-1$ the output is $p_{i}=1$.

To see why this strategy works, note that the $N$-party GHZ state has local unitary stabilizers.  Rather obviously, $\Pi_{i=1}^{N} X_{i}$ is a stabilizer. Another stabilizer is $\Pi_{i=1}^{N} U_{i}$ with 
\be
U_{i} = \left(\begin{array}{cc} e^{i\theta_{i}} & 0 \\ 0 & e^{-i\theta_{i}} \end{array}\right)
\ee
This is a stabilizer of the $N$-party GHZ state if
\be
\sum_{i=1}^{N} \theta_{i}=m \pi
\label{defm}
\ee
for any integer $m$.  If $m$ is even then
\be
\Pi_{i=1}^{N} U_{i} \, \vert {\rm GHZ} \rangle=\vert {\rm GHZ} \rangle
\ee
and if $m$ is odd
\be
\Pi_{i=1}^{N} U_{i} \, \vert {\rm GHZ} \rangle=-\vert {\rm GHZ} \rangle.
\ee
Note that all such $\Pi_{i=1}^{N} U_{i}$ are made of products of two-party stabilizers.  Indeed the GHZ state has $U(1)$ two-party entanglement groups for any two parties, as well as an $N$-party
entanglement group which is ${\amsbb Z}_2$.

Now note that
\be
(\Pi_{i=1}^{N} X_{i})( \Pi_{i=1}^{N} U_{i})=\Pi_{i=1}^{N} X_{\theta_{i}} 
\ee
This means
\be
\Pi_{i=1}^{N} X_{\theta_{i}}\vert {\rm GHZ} \rangle=\pm\vert {\rm GHZ} \rangle
\label{usefulsym}
\ee
where the $+$ sign is for $m$ even and the $-$ sign is for $m$ odd (see (\ref{defm})). Since the input data obeys (\ref{defn}) and using the definition of $\theta_{i}$, we see that given (\ref{defm}), upon measuring $X_{\theta{i}}$, the outcome must obey (\ref{usefulsym}).  So for $m$ even the output will contain even number of $1$'s and for odd $m$ it will contain an odd number of $1$'s.  This satisfies the
winning condition (\ref{win}). 

Another way of saying this is as follows. Define
\be
\widetilde{X}_{\theta_{i}}=\frac{1}{2}(I-X_{\theta_{i}})
\ee
which has eigenvalues $\in \{0,1\}$. Note that
\be
e^{i\pi\sum_{i=1}^{N} \tilde{X}_{\theta{i}}}= \Pi_{i=1}^{N} X_{\theta{i}}
\ee
which means
\be
e^{i\pi\sum_{i=1}^{N} \tilde{X}_{\theta{i}}} \, \vert {\rm GHZ} \rangle= (-1)^m \vert {\rm GHZ} \rangle
\ee
where $m$ is defined in (\ref{defm}). This means that when measuring $\sum_{i=1}^{N} \tilde{X}_{\theta{i}}$ one would get an even integer if $m$ is even and an odd integer if $m$ is odd, thereby winning
the game.

This analysis makes it clear that local unitary stabilizers underlie the wining strategy, and should thus be thought of as capturing the entanglement of the GHZ state.

\subsubsection{Breaking the symmetries\label{subsub:generalGHZ}}
Suppose we have an $N$-party state $\vert \phi \rangle$ which we use for the general GHZ game, but which is not necessarily the $N$-party GHZ state. We have a set of possible inputs labeled by $J$,
$\{q_{i}\}_{J} \in \{0,\cdots D-1\}^{N}$. For each such input there is a set of local unitary operators we used, namely
\be
{\cal X}_{J}=\Pi_{i=1}^{N} X_{\theta{i}}.
\label{opp}
\ee
These operators obey ${\cal X}_{J}^{2}=1$, so they have eigenvalues $\pm1$.  We can expand $\vert \phi \rangle$ in eigenvectors,
\be
\vert \phi \rangle = \sum_{i} a_{i} \vert +1_{i} \rangle + \sum_{l} b_{l} \vert -1_{l} \rangle
\ee
where $\vert +1_{i} \rangle$ and $\vert -1_{l} \rangle$ form an orthonormal basis for the subspaces with eigenvalues $\pm 1$, for any particular operator ${\cal X}_{J}$ appearing in (\ref{opp}).
The $a_{i}$'s and $b_l$'s depend of course on which operator we use to expand the state.
The probability of failure for even $m$ is the appropriate $\sum_{l} |b_{l}|^{2}$, while the probability of failure for odd $m$ is the appropriate $\sum_{i} |a_{i}|^{2}$.
We define order parameters which vanish only in states with the appropriate stabilizers
\bea
\nonumber
\xi_J &=& \frac{1}{2}(1-<{\cal X}_{J}>) \ \ \hbox{\rm for even $m$} \\
\xi_J &=& \frac{1}{2}(1+<{\cal X}_{J}>) \ \ \hbox{\rm for odd $m$}
\eea
It is easy to check that the probability of failure for each input $\{q_{i}\}_{J}$ is exactly the order parameter associated with the corresponding operator.
Thus given a state $\vert \phi \rangle$, the probability of failure is given by (assuming each input $\{q_{i}\}_{J}$ has equal probability)
\be
P_{\rm failure}=\frac{1}{D^{N-1}}\sum_{i=1}^{D^{N-1}} \xi_{i}
\ee
This connects the failure to win directly with the order parameters which measure how close $\vert \phi \rangle$ is to having the necessary stabilizers.

\section{Discussion and related developments\label{sect:discussion}}
In this work we developed an understanding of entanglement in terms of local unitary groups of stabilizers.  This provides entanglement with
a clear physical meaning, as corresponding to unitary transformations that act on disjoint parts of a system but are nevertheless equivalent
to each other.  It also provides a clear mathematical setting for characterizing entanglement, in terms of quotients of the stabilizer group and
its subgroups.  We explored the general structure of entanglement groups and showed that monogamy of entanglement is a consequence.
We explored the extension to mixed states and showed that for a separable state, although the entanglement group need not be trivial, it can always be understood as arising
from entanglement with a purifying system.  Finally, we showed that this perspective on entanglement underlies several well-known quantum tasks.

As in section \ref{sect:classifying}, this understanding allows us to classify types of entanglement in terms of local unitary stabilizers, where two
local unitary stabilizer groups are identified if they are the same up to conjugation by a local unitary transformation.  The idea that entanglement is invariant under local
unitary transformations has been a theme in the literature and plays a role in other approaches to classifying types of entanglement.
In the following subsections we describe two approaches to classifying entanglement, that could be characterized as `algebraic' (section \ref{algebra}) and `geometric' (section
\ref{strata}), and we
make a connection to the classification in terms of stabilizer groups.  Building on this, we discuss methods for quantifying entanglement in section \ref{quantify}, and finally we
give some speculation about the general structure of entanglement groups in section \ref{structure}.

As a direction for future development, note that in this work we only considered two states to have identical entanglement if they are related by a local unitary transformation.
As an alternative, one could consider a coarser classification and consider two states to have identical entanglement if they are stochastically equivalent
under local operations and classical communication (SLOCC).  This amounts to equivalence under local $SL(d,{\mathbb C})$ transformations \cite{D_r_2000}.
It would be interesting to define and explore properties of entanglement groups that are invariant under the SLOCC classification.

\subsection{Algebra of density matrices\label{algebra}}
Consider a system of $N$ qubits.  We could group some of the qubits to form larger systems, but for the following discussion this is not
necessary.

We developed a description of entanglement in terms of the local unitary stabilizer group $S_{1 \cdots N}$.  It's perhaps more common to think
about entanglement in terms of the reduced density matrices associated with subsystems.  With any collection of spins $i_1 \cdots i_k$
we can associate a reduced density matrix $\rho_{i_1 \cdots i_k}$ obtained by tracing over the complementary spins.  How are these
descriptions related?

We can look at all of the reduced density matrices formed from a state $\ket{\psi}$, including the density matrix one gets just from the original
state without tracing.  To assemble these density matrices into a larger structure we first tensor each reduced density matrix with the
identity operator on the complementary qubits, to obtain a set of Hermitian matrices acting on the full Hilbert space. We can build
an even larger structure by taking commutators \cite{Kim:2021tse} and linear combinations of these operators until they close on a Lie algebra.  We call
the Lie algebra generated in this way the density matrix algebra.  It exponentiates to form a group which we call the density matrix
group $G_{DM}$.

What group does one get?  Suppose the state $\ket{\psi}$ has a local unitary stabilizer $u_1 \otimes \cdots \otimes u_N$.  The pure state density matrix
$\ket{\psi} \bra{\psi}$ is invariant under
conjugation by this stabilizer, and once we tensor with the identity on the complementary qubits, so are all of the reduced density matrices.\footnote{Having a
local stabilizer is crucial so the invariance survives partial tracing.}  Thus $G_{DM}$ will be invariant under conjugation by the stabilizer group $S_{1 \cdots N}$.
The stabilizer group has a centralizer $C(S_{1 \cdots N})$ inside the full unitary group $U(2^N)$ and we see that $G_{DM}$ is a subgroup of this centralizer.  For generic
states we don't expect $G_{DM}$ to be a proper subgroup, so most likely $G_{DM} = C(S_{1 \cdots N})$.  Conversely, given $G_{DM}$, we can compute its
centralizer.  The intersection of $C(G_{DM})$ with the local unitary group $U(2)^{N}$ gives the stabilizer group $S_{1 \cdots N}$.\footnote{A local unitary
stabilizer commutes with $G_{DM}$, so $S_{1 \cdots N} \subset C(G_{DM}) \cap U(2^N)$.  A local unitary $u \in C(G_{DM})$ must commute with $\ket{\psi}\bra{\psi}$,
but this is only possible if $u \ket{\psi} = \ket{\psi}$, so $S_{1 \cdots N} \supset C(G_{DM}) \cap U(2^N)$.}

In this sense characterizing entanglement using density matrices is equivalent to characterizing entanglement using local unitary stabilizers. Constructing
the density matrix algebra requires linear algebra, while finding the symmetries of a state is a highly non-linear problem, so the former may be simpler.
In the context of field theory, in place of the density matrix algebra, we can consider the algebra generated by the modular Hamiltonians of all subregions.
Indeed the algebra of CFT modular Hamiltonians is conjectured to be related to the structure of the bulk spacetime in holographic theories \cite{Kabat:2018smf},
and the bulk spacetime is believed to embody the entanglement structure of the CFT state \cite{VanRaamsdonk:2010pw}.

\subsection{Strata and $LU$ equivalence classes\label{strata}}
A seemingly different approach to classifying states according to entanglement was initiated in \cite{Linden_1998, Carteret_2000}.  The idea is that entanglement, whatever
it may be, does not change under local unitary transformations.
One can take the space of states of $N$ qubits ${\mathbb C} {\mathbb P}\big({2^N - 1}\big)$ and quotient
by the group of local unitary transformations $PSU(2)^{N}$.\footnote{We are following the alternate approach mentioned in footnote \ref{footnote:alternate}
where we quotient by a projective unitary group.}\footnote{Classification using the SLOCC group $SL(2,{\mathbb C})$ has also been discussed in the literature \cite{D_r_2000, Verstraete_2002}.}  This gives a space
\be
{\cal S}_N = {\mathbb C} {\mathbb P}\big({2^N - 1}\big) / PSU(2)^{N}
\ee
known as a stratified space \cite{Michel_2001}.  It has a complicated topological structure.

Each point in ${\cal S}_N$ represents an equivalence class of states that are related by local unitary transformations.  A state may have a local unitary stabilizer, which
means it is fixed under a subgroup of the local unitary group.  When a group action has a fixed point the quotient space is singular, so ${\cal S}_N$ has
singularities and breaks up into different orbit types or strata.  Each component of the space can be labeled by the stabilizer group that is preserved by those states \cite{Johansson_2014}.
In other words, strata can be labeled by entanglement groups.  Some simple examples are illustrated in Fig.\ \ref{fig:strata}.

\begin{figure}
\begin{center}
\hbox{\hspace{-11mm} \raisebox{1cm}{\includegraphics[width=8.2cm]{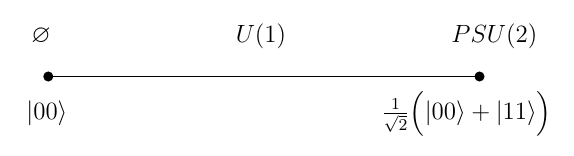}} \qquad \includegraphics[width=9.8cm]{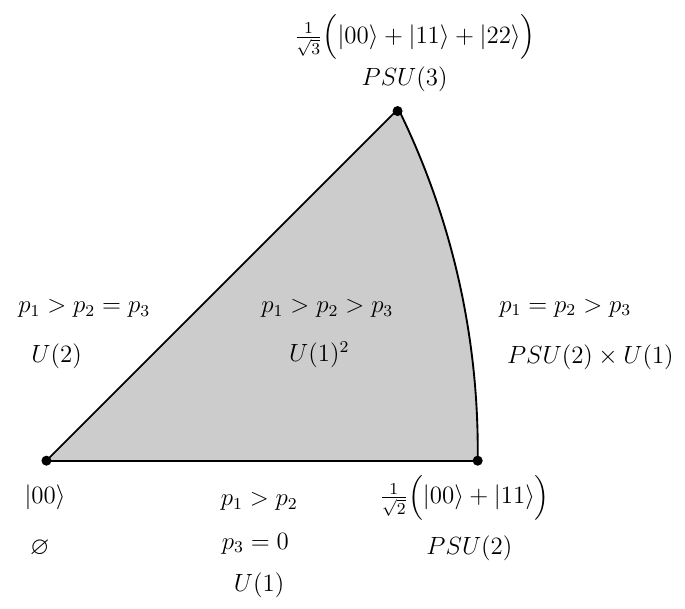}}
\end{center}
\caption{Strata for two qubits (left) and two qutrits (right).  The components are labeled by entanglement groups, with $\varnothing$ indicating the trivial group.
For bipartite systems the construction is easy thanks to the Schmidt decomposition (\ref{Schmidt}).  For qubits there is one parameter $0 \leq p_2 \leq 1/\sqrt{2}$.
For qutrits there are two parameters, $p_2$ (horizontal axis in figure) and $p_3$ (vertical axis in figure).  The parameters are constrained by $p_1 \geq p_2 \geq p_3 \geq 0$
where $p_1 = \sqrt{1 - p_2^2 - p_3^2}$.  We've written states at the extreme corners explicitly.
\label{fig:strata}}
\end{figure}

In this way we see that the state space of a quantum system can be decomposed into orbits of the group of local unitary transformations.  The entanglement groups control the way in which these orbits degenerate, and in this sense they have a geometric
representation in the state space.

Generically there is a principle stratum of states which have no local unitary stabilizer and there are lower-dimensional strata which have (in our language) some
entanglement relative to the partition into $N$ separate systems.\footnote{For bipartite systems even the principal stratum has a non-trivial entanglement group.}
We can classify entanglement according to which stratum a particular state is located on.  For the
action of a Lie group on a compact  manifold it is known that the number of different strata is finite \cite{Mostow_1957}, hence the classification is finite.  
This topological classification of states, according to which component ${\cal S}_N$ they belong to, is clearly closely connected to the idea of classifying entanglement via entanglement groups.

There is also a connection to the density matrix algebra discussed above.  Choose a point $s \in {\cal S}_N$ and imagine that one wants to flow to other points in the
same stratum.  To do this one can turn to the density matrix algebra which is a collection of Hermitian operators that commute with the stabilizer of $s$.  One can
choose one of these operators and think of it as a Hamiltonian.  It generates a flow along the stratum that preserves the
stabilizer group.  One may eventually reach a state with an enhanced stabilizer.  This happens when a stabilizer of the original state, which was an element of $U(2^N)$
but not in $U(2)^{\otimes N}$, becomes under time evolution along the flow, a local unitary stabilizer of the new state.

\subsection{Quantifying entanglement\label{quantify}}
So far we have given a discrete classification scheme for entanglement, in terms of the stratum that a state lies on which in turn determines its entanglement group.  One might ask for a more fine-grained method
for quantifying the entanglement of a state, in terms of one or more continuous parameters.  In the $g$-entanglement approach, these parameters should quantify how close or far a particular state is from nearby states
that have enhanced stabilizer groups.  These parameters can be physically meaningful, as we now outline.

The basic idea is that, although the entanglement group is the
same everywhere on the stratum, physical observables may be sensitive to other properties of the state that vary continuously as the state is varied.  In particular physical observables may be sensitive
to the entanglement of other nearby states.  To illustrate, consider an observable which is invariant under local unitary
transformations (so that it is well-defined on ${\cal S}_N$) and depends on the entanglement of a state.  As one moves along
a stratum the entanglement group does not change, but one may approach a different stratum that has a larger (say) entanglement group.
The observable could be sensitive to this, in which case a relevant parameter is how far the state is from the enhanced entanglement point.
For example, for two qubits, it is sensible to label points along the principle stratum in terms of distance from the Bell state.  This distance would control
the probability of success for tasks that depend on having enhanced $PSU(2)$ entanglement, in much the same way that the behavior of a many-body system
can be controlled by the existence of a nearby critical point.  In general the geodesic distances from strata with enhanced entanglement provide a possible
set of coordinates to use on ${\cal S}_N$.

One is free to adopt coordinates that have a direct physical meaning.  To illustrate this, consider the super-dense coding discussed in section \ref{sect:superdense}.  One could
attempt to do super-dense coding starting from a general state $\ket{\psi} = a \ket{00} + b \ket{11}$ by applying the operators (\ref{denseops}) to the first
qubit.  This produces four states proportional to the combinations $a \ket{00} \pm b \ket{11}$, $a \ket{10} \pm b \ket{01}$.  These states are generically linearly independent but not orthogonal.  One can
take $0 \leq b \leq 1/\sqrt{2}$ as a coordinate on ${\cal S}_2$.  The angle between the vectors increases monotonically with $b$, from $0^\circ$ at the tensor product point to $90^\circ$
at the Bell state, so the probability of successfully encoding a two-classical-bit message increases monotonically from $50\%$ at the tensor product point, where
there are pairs of messages that can't be distinguished, to $100\%$ at the Bell state, where there are four orthogonal states available to encode the message.  Thus the probability of
successful encoding provides a continuous parameter which can be used to characterize the entanglement of a two-qubit state.

In general, the appropriate notion of distance from an enhanced stabilizer may depend on the task at hand.  For example, for a system of $N$ qubits, the probability of failure in the GHZ game
can be used to provide a notion of distance from an enhanced stabilizer.  This example is discussed in sections \ref{subsub:GHZ} and \ref{subsub:generalGHZ}.

\subsection{Structure of entanglement groups\label{structure}}
It would be desirable to have a more complete understanding of the structure of the entanglement groups.  As a hint in this direction,
consider the GHZ state (\ref{GHZstate}).  The three-party entanglement group is $E_{ABC} = \lbrace \, 1,\,x \, \rbrace$
where $x$ is the additional stabilizer defined in (\ref{GHZuAuBuC}).  Note that $E_{ABC}$ is a subgroup of $S_{ABC}$ and that $x$ acts
on the two-party stabilizers as an automorphism.  That is, with $s$ denoting one of the matrices in (\ref{ghz2p1}), (\ref{ghz2p2}), (\ref{ghz2p3}),
the map $s \rightarrow x s x^{-1}$ takes $\phi \rightarrow - \phi$ and shifts the phases $\theta$.  This means the stabilizer group for GHZ is a
semi-direct product.
\be
S_{ABC} = (S_{AB} \cdot S_{AC} \cdot S_{BC}) \rtimes E_{ABC}
\ee
If we quotient both sides by $S_A \times S_B \times S_C$ this becomes a statement about the full entanglement group for GHZ, namely that it
is a semi-direct product of two-party and three-party entanglement.
\be
F_{ABC} = (E_{AB} \cdot E_{AC} \cdot E_{BC}) \rtimes E_{ABC}
\ee
It is tempting to conjecture that full entanglement groups always have a structure of this form.  For example it is tempting to speculate
that with four parties we would have
\be
F_{ABCD} = (F_{ABC} \cdot F_{ABD} \cdot F_{ACD} \cdot F_{BCD}) \rtimes E_{ABCD}
\ee
with each full three-party group having the decomposition given above.

\bigskip
\goodbreak
\centerline{\bf Acknowledgements}
\noindent
We are grateful to Dorit Aharonov for many valuable discussions and suggestions and to Phuc Nguyen for collaboration at an early stage.
The work of XJ, DK and AM was supported by U.S.\ National Science Foundation grant PHY-2112548.
GL was supported in part by the Israel Science Foundation under grant 447/17.

\appendix
\section{General bipartite systems\label{appendix:bipartite}}
Here we consider general bipartite systems.  We begin with the general formalism then determine the entanglement groups explicitly using a Schmidt decomposition
of the state.  There are no surprises here.  Instead our goal is to illustrate the formalism and make contact with results in the literature.

For a bipartite system the Hilbert space is a tensor product, ${\cal H}={\cal H}_{A} \otimes {\cal H}_{B}$.  We denote the dimensions $d_A$, $d_B$
respectively.  We fix a pure state $\ket{\psi} \in {\cal H}$ and ask for the stabilizer groups that leave the state invariant up to a phase.  The goal is to
construct the entanglement group
\be
E_{AB} = S_{AB} / \big( S_A \times S_B \big)
\ee

For bipartite systems we can explicitly determine the stabilizer groups by making a Schmidt decomposition of the state.  There are orthonormal bases for
${\cal H}_A$ and ${\cal H}_B$ such that
\bea
\label{Schmidt}
&& \ket{\psi} = \sum_{i = 1}^r \, p_i \,\ket{i}_A \otimes \ket{i}_B \\
\nonumber
&& p_1 \geq p_2 \geq \cdots \geq p_r > 0
\eea
Here $r$ is the Schmidt rank of the state.  The state corresponds to a $d_A \times d_B$ matrix
\be
\label{matrix}
\ket{\psi} \leftrightarrow \left(\begin{array}{ccccc}
p_1 & & & & \\
& \ddots & & & \\[-2pt]
& & p_r & & \\
& & & 0 & \\[-4pt]
& & & &\ddots
\end{array}\right)
\ee
with $u_A$ acting from the left and $u_B$ acting from the right in the complex conjugate representation.
The Schmidt coefficients may have degeneracies.
\be
\big(p_1,\ldots,p_r\big) = \big(\underbrace{\lambda_1,\ldots,\lambda_1}_{k_1},\underbrace{\lambda_2,\ldots,\lambda_2}_{k_2},\ldots,
\underbrace{\lambda_n,\ldots,\lambda_n}_{k_n}\big)
\ee
From this we identify the stabilizer groups
\bea
\nonumber
&& S_{AB} = U(1) \times U(k_1) \times \cdots \times U(k_n) \times U(d_A - r) \times U(d_B - r) \\
\label{bipartiteS}
&& S_A = U(1) \times U(d_A - r) \\
\nonumber
&& S_B = U(1) \times U(d_B - r)
\eea
The $U(1)$ factors in these groups correspond to the overall phases in (\ref{uAuB}), (\ref{uA}), (\ref{uB}).  The factors $U(d_A - r)$ and $U(d_B - r)$
act on the block of zeroes in (\ref{matrix}) from the left and right, respectively.  Taking the quotient, we identify the two-party entanglement group as
\be
\label{bipartiteEAB}
E_{AB} = PSU(k_1) \times U(k_2) \times \cdots U(k_n)
\ee
where in $E_{AB}$ we're taking the first factor to be an element of $PSU(k_1) = U(k_1) / U(1)$.\footnote{For the final quotient by an overall
$U(1)$ phase this is an arbitrary but convenient choice of a representative of each equivalence class.}  For bipartite systems the full stabilizer group is a direct product.
\be
S_{AB} = S_A \times S_B \times E_{AB}
\ee
There is a hierarchy to bipartite entanglement which is fully characterized by the Schmidt coefficients.  A tensor product state has ($p_1 = 1$, all others zero) and the entanglement
group is trivial. A maximally-entangled state has full Schmidt rank $r = {\rm min}(d_A,d_B)$ and all $p_i$'s equal; the entanglement group is $PSU(d_r)$.
Note that for bipartite systems there is always entanglement ($E_{AB}$ is non-trivial) unless the system is in a tensor product state.

For a simple explicit situation consider two qubits ($d_{A}=d_{B}=2$).  In a Schmidt basis the general state is
\begin{equation}
|\psi \rangle=a|00 \rangle+b|11 \rangle
\end{equation}
with $a \geq b \geq 0$.  There are three possibilities.
\begin{itemize}
\item  The tensor product state $\ket{00}$ with $a = 1$, $b=0$.  One-party stabilizers $S_A$, $S_B$ correspond to unitary matrices
\be
u_A = e^{i \theta_A} \left(\begin{array}{ll} 1 & 0 \\ 0 & e^{i\alpha} \end{array}\right)
\qquad
u_B = e^{i \theta_B} \left(\begin{array}{ll} 1 & 0 \\ 0 & e^{i\beta} \end{array}\right)
\ee
so that $S_A = U(1)^2$, $S_B = U(1)^2$.  There are no non-trivial two-party stabilizers, in other words $S_{AB} = S_A \times S_B = U(1)^4$, which means the entanglement group
is trivial, $E_{AB} = \lbrace 1 \rbrace$.

 \item The generic state ($a > b > 0$).  Two-party stabilizers $S_{AB}$ correspond to unitary matrices
\begin{equation}
\label{genericuAuB}
u_A \otimes u_B =
e^{i \theta} \begin{pmatrix}
e^{i \alpha_1} & 0 \\
0 &e^{i\alpha_2}
\end{pmatrix}
\otimes
\begin{pmatrix}
e^{-i \alpha_1} & 0 \\
0 & e^{-i\alpha_2}
\end{pmatrix}
\end{equation}
so that $S_{AB} = U(1)^3$.  One-party stabilizers are just phases times the identity operator, in other words $S_A = U(1)$, $S_B = U(1)$, which means
the entanglement group is $E_{AB} = U(1)$.

\item The Bell state with $a=b=1/\sqrt{2}$.  Two-party stabilizers $S_{AB}$ correspond to unitary matrices
\begin{equation}
\label{BelluAuB}
u_A \otimes u_B =
e^{i \theta} \begin{pmatrix}
\alpha & \beta \\
\gamma &\delta
\end{pmatrix}
\otimes
\begin{pmatrix}
\bar{\alpha} & \bar{\beta} \\
\bar{\gamma} & \bar{\delta}
\end{pmatrix} \qquad\quad \begin{pmatrix}
\alpha & \beta \\
\gamma &\delta
\end{pmatrix} \in U(2)
\end{equation}
(an overall phase times the ${\bf 2} \times \overline{\bf 2}$ of $U(2)$).
So the two-party stabilizer group is $S_{AB} = U(1) \times U(2)$.  One-party stabilizers are just phases times the identity operator, in other words $S_A = U(1)$, $S_B = U(1)$, which means the entanglement
group is $E_{AB} = PSU(2) = U(2) / U(1)$.
\end{itemize}

\section{Generalized Goursat lemma\label{appendix:Goursat}}
The Goursat lemma is standard result concerning subgroups of $A \times B$ \cite{Goursat,AndersonCamillo}.  The extension to subgroups of a product of three groups is
more involved and has been discussed in \cite{bauer2015generalized,neuen2016subgroups}.  Here we begin with a result for certain quotients of subgroups of
$A \times B \times C$ that will be useful for us, stated in Theorem \ref{asymmetric}.  A consequence relevant to two-party entanglement is given in Corollary \ref{two-party},
and the result directly relevant to three-party entanglement is given in Theorem \ref{symmetric}.

Consider a product group $A \times B \times C$.  We introduce projections, for example
\be
\begin{array}{ll}
\pi_A \, : \, A \times B \times C \rightarrow A & (a,b,c) \mapsto a \\[2pt]
\pi_{AB} \, : \, A \times B \times C \rightarrow A \times B & (a,b,c) \mapsto (a,b)
\end{array}
\ee
where $a \in A$, $b \in B$, $c \in C$.
Consider a subgroup $G \subset A \times B \times C$.  Define
\be
G_A = \pi_A(G) = \lbrace a \in A \, \vert \, \hbox{\rm $(a,b,c) \in G$ for some $b \in B$, $c \in C$}\rbrace
\ee
with analogous definitions for $G_B$, $G_C$.  Note that $G_A$ is a subgroup of $A$.  We also define
\bea
\nonumber
&& S_{AB} = \lbrace (a,b,1) \in G \, \rbrace \\
&& S_{A} = \lbrace (a,1,1) \in G \, \rbrace
\eea
with analogous definitions for $S_{AC}$, $S_{BC}$, $S_B$, $S_C$.  This gives us a collection of normal subgroups.  For example with the convenient
notation $G = S_{ABC}$ we have
\be
S_A \, \triangleleft \, S_{AB},\quad S_{AB} \, \triangleleft \, S_{ABC},\quad S_A \, \triangleleft \, S_{ABC}
\ee
and so on.  To see this one just has to check that for example conjugating an element of $S_A$ by an element of $S_{AB}$ gives another element of $S_A$.

The product of normal subgroups gives another normal subgroup, so we have a variety of building blocks at our disposal.  For the moment we'll focus on the
particular combination
\be
\label{N}
N = S_{AC} \cdot S_{BC} \, \triangleleft \, G
\ee
Projecting on the first or second factor we have
\bea
\nonumber
&& N_A = \pi_A(N) \, \triangleleft \, G_A \\
&& N_B = \pi_B(N) \, \triangleleft \, G_B
\eea
We are interested in comparing the quotients $G_A / N_A$ and $G_B / N_B$.  These are isomorphic, as we summarize with the following theorem.

\begin{theorem}
\label{asymmetric}
Consider a group $G \subset A \times B \times C$.  Let $N = S_{AC} \cdot S_{BC}$ where
\bea
\nonumber
&& S_{AC} = \lbrace (a,1,c) \in G \, \rbrace \\
&& S_{BC} = \lbrace (1,b,c) \in G \, \rbrace
\eea
$N$ is a normal subgroup, $N \, \triangleleft \, G$.  Denoting projections by
\be
G_A = \pi_A(G), \quad N_A = \pi_A(N), \quad G_B = \pi_B(G), \quad N_B = \pi_B(N)
\ee
the quotient groups $G_A / N_A$ and $G_B / N_B$ are isomorphic, $G_A / N_A \approx G_B / N_B$.
\end{theorem}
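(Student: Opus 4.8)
The plan is to reduce the three-factor statement to the classical two-factor Goursat lemma by projecting onto the $A\times B$ plane. First I would set $H=\pi_{AB}(G)\subset A\times B$; this is a subgroup with $\pi_A(H)=\pi_A(G)=G_A$ and $\pi_B(H)=\pi_B(G)=G_B$. The classical Goursat lemma applied to $H$ produces an isomorphism $G_A/T_A\approx G_B/T_B$, where $T_A=\{a\in A:(a,1)\in H\}$ and $T_B=\{b\in B:(1,b)\in H\}$ are the two Goursat subgroups of $H$, normal in $G_A$ and $G_B$ respectively, and the isomorphism sends $aT_A\mapsto bT_B$ whenever $(a,b)\in H$. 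So the entire content of the theorem is the identification $T_A=N_A$ and $T_B=N_B$.

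The key step is therefore this identification. Since every element of $S_{BC}$ has trivial $A$-component and every element of $S_{AC}$ has trivial $B$-component, a general element of $N=S_{AC}\cdot S_{BC}$ has the form $(a,1,c)(1,b,c')=(a,b,cc')$, so $N_A=\pi_A(N)=\pi_A(S_{AC})$ and, symmetrically, $N_B=\pi_B(N)=\pi_B(S_{BC})$. On the other hand, $(a,1)\in H$ holds iff $(a,1,c)\in G$ for some $c$, iff $(a,1,c)\in S_{AC}$ for some $c$ — the last equivalence is just the definition of $S_{AC}$ as the elements of $G$ with trivial $B$-component. Hence $T_A=\pi_A(S_{AC})=N_A$, and likewise $T_B=\pi_B(S_{BC})=N_B$. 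Combining, $G_A/N_A=G_A/T_A\approx G_B/T_B=G_B/N_B$, and the explicit isomorphism is $aN_A\mapsto bN_B$ for any $(a,b,c)\in G$.

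For completeness I would record the routine bookkeeping: $N\triangleleft G$ because $S_{AC},S_{BC}\triangleleft G$ and a product of normal subgroups is normal; $N_A\triangleleft G_A$ because the image of a normal subgroup under the surjection $\pi_A\colon G\to G_A$ is normal; and well-definedness of $aN_A\mapsto bN_B$ follows exactly as in the two-factor Goursat argument (if $(a,b)$ and $(a,b')\in H$ then $(1,b^{-1}b')\in H$, so $b^{-1}b'\in T_B=N_B$). If one wants a self-contained treatment one can either re-prove the two-factor lemma or cite it.

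I do not expect a genuine obstacle here; the only thing requiring care is to keep straight that $N_A$ is the \emph{projection} $\pi_A(N)$ rather than the pointwise subgroup $\{a:(a,1,1)\in N\}$ — these can differ — and correspondingly that the Goursat subgroup of $H$ is a projection of $S_{AC}$, not of $S_A$. Getting these matched correctly is precisely what makes $T_A=N_A$ come out right, and once that is in place the theorem is immediate from the classical lemma.
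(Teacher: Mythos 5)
Your proof is correct. The route differs from the paper's in its packaging: the paper proves the isomorphism from scratch, defining the map $\alpha : G_A \to G_B/N_B$, $a \mapsto bN_B$ for any $(a,b,c)\in G$, checking well-definedness, surjectivity, and computing $\ker\alpha = N_A$ directly before invoking the first isomorphism theorem; you instead project $G$ to $H=\pi_{AB}(G)$, invoke the classical two-factor Goursat lemma as a black box, and reduce the whole theorem to the single identification $T_A=\pi_A(S_{AC})=N_A$, $T_B=\pi_B(S_{BC})=N_B$. That identification is sound: since $S_{AC}$ and $S_{BC}$ are normal in $G$, every element of $N=S_{AC}\cdot S_{BC}$ is a single product $(a,1,c)(1,b,c')=(a,b,cc')$, so the projections of $N$ collapse as you claim, and $(a,1)\in H$ iff $(a,1,c)\in S_{AC}$ for some $c$. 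The same facts appear inside the paper's kernel computation (it uses $N_B=\pi_B(S_{BC})$ explicitly), so the mathematical content is identical; what your version buys is a cleaner explanation of \emph{why} the statement is an extension of Goursat's lemma, while the paper's self-contained argument has the advantage that it adapts with only minor extra work to the symmetric case $N=S_{AB}\cdot S_{AC}\cdot S_{BC}$ of its Theorem 2, where the projections of $N$ no longer come from a single one of the two-party subgroups and your reduction would need to be redone. Your closing caveat — that $N_A$ is the projection $\pi_A(N)$ and not $\{a : (a,1,1)\in N\}$ — is exactly the point that needs care, and you handle it correctly.
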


\begin{proof}
Define a map $\alpha \, : \, G_A \rightarrow G_B / N_B$ as follows.  Let $a \in G_A$.  Choose an element $b \in G_B$ such that $(a,b) \in \pi_{AB}(G)$.
Set $\alpha(a) = b N_B$.

We first have to check that $\alpha$ is a well-defined map from $A$ to $G_B / N_B$.  Suppose $(a,b,c) \in G$ and $(a,b',c') \in G$.  Taking the inverse and
multiplying we see that
\be
(1,b b'^{-1}, c c'^{-1}) \in G
\ee
But this is an element of $S_{BC}$ and therefore an element of $N$, which means $b b'{}^{-1} \in N_B$ or equivalently $b \in b' N_B$.
In other words $b$ and $b'$ are in the same coset so the map well-defined.

Note that $\alpha$ is a homomorphism since
\be
\hbox{\rm $(a_1,b_1,c_1) \in G$ and $(a_2,b_2,c_2) \in G$ $\Rightarrow$ $(a_1a_2,b_1b_2,c_1c_2) \in G$}
\ee
and therefore
\be
\alpha(a_1) \alpha(a_2) = [b_1] [b_2] = [b_1 b_2] = \alpha(a_1a_2)
\ee
Moreover $\alpha$ is surjective since every $b \in G_B$ is the image of some $a \in G_A$.  Finally we show that ${\rm ker} \, \alpha = N_A$.  First let's show ${\rm ker} \, \alpha \subset N_A$.  Suppose
$a \in {\rm ker} \, \alpha$.  Then there is an element $(a,b,c) \in G$ with $b \in N_B$.  Now $N_B = \pi_B(N) = \pi_B(S_{BC})$, which means $(1,b,c') \in S_{BC}$
for some $c'$.  Taking the inverse and multiplying we see that $(a,1,cc'^{-1}) \in G$.  This acts as the identity on system $B$, so it's an element of $S_{AC}$ and
therefore an element of $N$.  So $a \in N_A$ which means ${\rm ker} \, \alpha \subset N_A$.  Conversely if $a \in N_A$ then there is
some $b \in N_B$ for which $(a,b) \in \pi_{AB}(N)$; this means $\alpha(a) = N_B$ so $a \in {\rm ker} \, \alpha$ which shows that $N_A \subset {\rm ker} \, \alpha$.  Putting these statements together
we have ${\rm ker} \, \alpha = N_A$.

It then follows from the first isomorphism theorem that $G_A / {\rm ker} \, \alpha = G_A / N_A \approx G_B / N_B$.
\end{proof}

Theorem \ref{asymmetric} is an extension of the usual Goursat lemma in the following sense.
Any group $G \subset A \times B$ can be extended to a group $i(G) \subset A \times B \times C$
by appending the identity element on $C$.
\be
i(G) = \lbrace (a,b,1) \in A \times B \times C \, \vert \, (a,b) \in G \rbrace
\ee
For this extended group we have
\be
S_{AC} = S_A,\quad S_{BC} = S_B
\ee
which means $N = S_{AC} \cdot S_{BC}$ is a direct product, $N = S_A \times S_B$.  In this way we recover the usual Goursat isomorphisms
for $G \subset A \times B$ from theorem \ref{asymmetric}, namely that
\be
{\pi_A(G) \over \pi_A(S_A)} \approx {\pi_B(G) \over \pi_B(S_B)}
\ee

For our discussion of two-party entanglement, it's useful to note the following \cite{AndersonCamillo}.  Let $G \subset A \times B$ and let $N \, \triangleleft \, G$.  Denote the projections $G_A = \pi_A(G)$,
$N_A = \pi_A(N)$, etc.\ and define
\be
\label{H}
H = \lbrace \, \big( [a],\,[b] \big) \in {G_A \over N_A} \times {G_B \over N_B} \, \Big\vert \, (a,b) \in G \, \rbrace
\ee
Then $H \approx G/N$.  To see this note that the map $\phi \, : \, G \rightarrow H$, $\phi\big((a,b)\big) = \big([a],[b]\big)$ is a
surjective homomorphism with ${\rm ker} \, \phi = N$.  The first isomorphism theorem then establishes that $G / N \approx H$.
Moreover note that
the pairs of equivalence classes appearing in (\ref{H}) are related by the isomorphism established in theorem \ref{asymmetric},
so any $t \in G/N$ acts isomorphically on systems $A$ and $B$.

We summarize this discussion, relevant to two-party entanglement, as

\begin{corollary}
\label{two-party}
Consider a group $G \subset A \times B$.  Let $N = S_A \times S_B$ where
\bea
\nonumber
&& S_A = \lbrace \, (a,1) \in G \, \rbrace \\
&& S_B = \lbrace \, (1,b) \in G \, \rbrace
\eea
Then $N$ is a normal subgroup, $N \, \triangleleft \, G$, and we have the isomorphisms
\be
\label{isomorphisms2}
G / N \approx G_A / N_A \approx G_B / N_B
\ee
where $G_A = \pi_A(G)$, $N_A = \pi_A(N)$, etc.  As a group $G/N$ is diagonally embedded in $(G_A/N_A) \times (G_B/N_B)$,
that is,
\be
\label{diagonal2}
G/N \approx \lbrace \, \big(\theta_A(t), \theta_B(t) \big) \, \big\vert \, t \in G / N \, \rbrace
\ee
where $\theta_A$ and $\theta_B$ are isomorphisms.
\end{corollary}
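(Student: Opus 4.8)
The plan is to assemble the statement from the two-party Goursat isomorphism of Theorem~\ref{asymmetric} together with a single application of the first isomorphism theorem, since essentially all the ingredients are already in place. First I would dispose of the preliminary claim that $N = S_A \times S_B$ is normal in $G$: the subgroups $S_A$ and $S_B$ are each normal in $G$ (conjugating $(a,1)$ by a general element of $G$ lands back in $S_A$, and similarly for $S_B$), they commute element-by-element because they act on disjoint factors, and they intersect only in the identity, so their product is an internal direct product which, being a product of normal subgroups, is itself normal in $G$.

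Next I would obtain the isomorphism $G_A/N_A \approx G_B/N_B$ by specializing Theorem~\ref{asymmetric} to a trivial third factor $C = \lbrace 1 \rbrace$, equivalently by applying it to the embedded group $i(G) \subset A \times B \times C$ as in the discussion following that theorem. In this case $S_{AC} = S_A$ and $S_{BC} = S_B$, so $N = S_{AC}\cdot S_{BC}$ coincides with the $N$ of the corollary, and the theorem produces an isomorphism $\alpha : G_A/N_A \to G_B/N_B$ characterized by $\alpha([a]) = [b]$ whenever $(a,b) \in G$.

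To bring in $G/N$ I would introduce the homomorphism $\phi : G \to (G_A/N_A) \times (G_B/N_B)$, $\phi\big((a,b)\big) = \big([a],[b]\big)$, which is a homomorphism since each component is a projection followed by a quotient map. Its image is exactly the group $H$ of (\ref{H}), and the one step needing genuine care is the kernel computation: if $(a,b) \in \ker \phi$ then $a \in N_A$ and $b \in N_B$, and since $N_A$ consists precisely of those $a$ with $(a,1) \in G$ and $N_B$ of those $b$ with $(1,b) \in G$, we get $(a,b) = (a,1)(1,b) \in S_A \cdot S_B = N$; the reverse inclusion $N \subset \ker \phi$ is immediate. The first isomorphism theorem then gives $G/N \approx H$.

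Finally, combining these steps identifies $H$ with the graph $\lbrace\, ([a],\alpha([a])) \,\vert\, [a] \in G_A/N_A \,\rbrace$ of the Goursat isomorphism $\alpha$, which is the diagonal embedding asserted in (\ref{diagonal2}); setting $\theta_A$ and $\theta_B$ equal to the composites of the isomorphism $G/N \to H$ with the two coordinate projections (each an isomorphism onto $G_A/N_A$ and $G_B/N_B$ respectively) completes the proof. I do not anticipate a real obstacle here, since everything rests on Theorem~\ref{asymmetric}; the only thing to watch is keeping the identifications $N_A \leftrightarrow S_A$ and $N_B \leftrightarrow S_B$ straight so that the kernel of $\phi$ comes out to be exactly $N$ rather than merely an inclusion in one direction.
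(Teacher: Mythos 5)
Your proof is correct and follows essentially the same route as the paper: specialize Theorem~\ref{asymmetric} to a trivial third factor via the embedding $i(G)$ to obtain $G_A/N_A \approx G_B/N_B$, then use the map $\phi\big((a,b)\big) = \big([a],[b]\big)$ together with the first isomorphism theorem to identify $G/N$ with the graph $H$ of that isomorphism. If anything, your explicit kernel computation showing $\ker\phi = S_A \cdot S_B = N$ is slightly more careful than the paper's, which simply asserts $\ker\phi = N$.
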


In our discussion of three-party entanglement, a variant of theorem \ref{asymmetric} will be more useful for us.  It was essential for the
proof that we quotient by $S_{AC} \cdot S_{BC}$.  We can however take some additional quotients without spoiling
the result.  We will show that the same basic argument goes through if we replace $N$ in (\ref{N}) with
\be
N = S_{AB} \cdot S_{AC} \cdot S_{BC} \, \triangleleft \, G
\ee
The additional quotient by $S_{AB}$ doesn't spoil the isomorphism $G_A / N_A \approx G_B / N_B$, but we
gain a new isomorphism $G_A / N_A \approx G_C / N_C$.  The argument for the diagonal embedding in (\ref{diagonal2})
also goes through.  We summarize this as

\begin{theorem}
\label{symmetric}
Consider a group $G \subset A \times B \times C$.  Let $N = S_{AB} \cdot S_{AC} \cdot S_{BC}$ where
\bea
\nonumber
&& S_{AB} = \lbrace \, (a,b,1) \in G \, \rbrace \\
&& S_{AC} = \lbrace \, (a,1,c) \in G \, \rbrace \\
\nonumber
&& S_{BC} = \lbrace \, (1,b,c) \in G \, \rbrace
\eea
Then $N$ is a normal subgroup, $N \, \triangleleft \, G$.  Denoting $G_A = \pi_A(G)$, $N_A = \pi_A(N)$, etc., we have
the isomorphisms
\be
\label{isomorphisms3}
G / N \approx G_A / N_A \approx G_B / N_B \approx G_C / N_C
\ee
Moreover $G/N$ is diagonally embedded in $(G_A/N_A) \times (G_B/N_B) \times (G_C/N_C)$, meaning
\be
\label{diagonal3}
G/N \approx \lbrace \, \big(\theta_A(t), \theta_B(t), \theta_C(t) \big) \, \big\vert \, t \in G / N \, \rbrace
\ee
where $\theta_A$, $\theta_B$, $\theta_C$ isomorphisms.
\end{theorem}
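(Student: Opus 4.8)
The plan is to adapt the proof of Theorem~\ref{asymmetric} almost verbatim, checking at each step that enlarging $N$ by the extra factor $S_{AB}$ does no harm. Normality is immediate: $S_{AB}$, $S_{AC}$, $S_{BC}$ are each normal in $G$ (conjugation by an element of $G$ preserves the property of acting trivially on the complementary factor), and a product of normal subgroups is normal, so $N\triangleleft G$. Note also that $N$ is symmetric under permutations of $A,B,C$, since such a permutation merely permutes the three two-party stabilizers and, all being normal, the order in the product is immaterial; hence any isomorphism proved for one pair of factors holds for all pairs. The workhorse of the argument, used repeatedly, is the following: if $g=(a,b,c)\in G$ and $n\in N$ agree in one coordinate — say $\pi_A(g)=\pi_A(n)$ — then $gn^{-1}$ acts trivially on $A$, hence lies in $S_{BC}\subset N$; and similarly for the $B$ or $C$ coordinate using $S_{AC}$ or $S_{AB}$.

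First I would establish $G_A/N_A\approx G_B/N_B$. Define $\alpha:G_A\to G_B/N_B$ by $\alpha(a)=bN_B$ whenever $(a,b,c)\in G$. That $\alpha$ is well defined, a homomorphism, and surjective follows exactly as in Theorem~\ref{asymmetric} (two lifts $(a,b,c)$, $(a,b',c')$ of $a$ give $(1,bb'^{-1},cc'^{-1})\in S_{BC}\subset N$, so $[b]=[b']$). The one step requiring a new argument is $\ker\alpha=N_A$: if $a\in\ker\alpha$, choose $(a,b,c)\in G$ with $b\in N_B$, pick $n\in N$ with $\pi_B(n)=b$, and write $n=(a',b,c')$; then $(a,b,c)n^{-1}=(aa'^{-1},1,cc'^{-1})$ acts trivially on $B$, so it lies in $S_{AC}\subset N$, giving $aa'^{-1}\in N_A$, and since $a'=\pi_A(n)\in N_A$ we conclude $a\in N_A$. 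The reverse inclusion $N_A\subset\ker\alpha$ is immediate. The first isomorphism theorem then gives $G_A/N_A\approx G_B/N_B$, and by the permutation symmetry of $N$ the same computation gives the remaining isomorphisms in (\ref{isomorphisms3}).

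It remains to bring in the $G/N$ factor and the diagonal embedding, following the proof of Corollary~\ref{two-party}. Set $H=\{([a],[b],[c])\in(G_A/N_A)\times(G_B/N_B)\times(G_C/N_C)\mid(a,b,c)\in G\}$ and let $\phi:G\to H$, $(a,b,c)\mapsto([a],[b],[c])$, which is clearly a surjective homomorphism. I claim $\ker\phi=N$: the inclusion $N\subset\ker\phi$ is obvious, while if $(a,b,c)\in G$ with $a\in N_A$, then, choosing $n_1\in N$ with $\pi_A(n_1)=a$, the element $(a,b,c)n_1^{-1}$ acts trivially on $A$, hence lies in $S_{BC}\subset N$, so $(a,b,c)\in N$. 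The first isomorphism theorem gives $G/N\approx H$; moreover $\pi_A:H\to G_A/N_A$ is surjective and injective (two elements of $H$ sharing a first coordinate share the other two by the workhorse observation), hence an isomorphism, which yields $G/N\approx G_A/N_A$. Taking $\theta_A,\theta_B,\theta_C$ to be the composites of the isomorphism $G/N\to H$ with the three coordinate projections of $H$ produces exactly the presentation (\ref{diagonal3}).

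The only place where the argument is not a transcription of Theorem~\ref{asymmetric} or Corollary~\ref{two-party} — and thus the main (mild) obstacle — is the computation $\ker\alpha=N_A$: for $N=S_{AC}\cdot S_{BC}$ one has $\pi_B(N)=\pi_B(S_{BC})$ on the nose, whereas now $\pi_B(N)$ is a genuine product and one must argue through a general element of $N$ rather than an element of $S_{BC}$. I expect this to be only a matter of slightly more bookkeeping, with the workhorse observation doing all the real work.
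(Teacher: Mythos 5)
Your proposal is correct and follows essentially the same route as the paper: adapt the map $\alpha$ from Theorem~\ref{asymmetric}, isolate $\ker\alpha\subset N_A$ as the one step needing new work, invoke the permutation symmetry of $N$ for the remaining isomorphisms, and use the $H$ construction of Corollary~\ref{two-party} for the diagonal embedding. Your treatment of the key step is a slight streamlining --- you multiply by an arbitrary $n\in N$ with $\pi_B(n)=b$ and use that $N_A$ is a group, where the paper explicitly decomposes $b\in\pi_B(S_{AB}\cdot S_{BC})$ as $b_1b_2$ --- but the underlying idea (kill the $B$-coordinate to land in $S_{AC}\subset N$) is identical.
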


\begin{proof}
We first adapt the proof of theorem \ref{asymmetric} to show that $G_A / N_A \approx G_B / N_B$.  Most of the steps go through but there is some extra work in showing that
${\rm ker} \, \alpha \subset N_A$.  To this end suppose $a \in {\rm ker} \, \alpha$.  Then there is an element $(a,b,c) \in G$ with $b \in N_B$.
In the new setting $N_B = \pi_B(N) = \pi_B(S_{AB} \cdot S_{BC})$, so there are elements $(a_1,b_1,1) \in S_{AB}$ and $(1,b_2,c_2) \in S_{BC}$
with $b_1 b_2 = b$.  Taking the product of these elements gives an element $x = (a_1,b,c_2) \in G$.  Multiplying $(a,b,c)$ by $x^{-1}$ gives
$y = (a a_1^{-1},1,c c_2^{-1}) \in G$.
This acts as the identity on system $B$, so in fact $y \in S_{AC}$.  Finally multiplying $y$ by $(a_1,b_1,1)$ we obtain
\be
z = (a,b_1,cc_2^{-1}) \in S_{AB} \cdot S_{AC}
\ee
This is an element of $N$, so $a \in N_A$ which establishes that ${\rm ker} \, \alpha \subset N_A$.

To complete the proof note that the new definition of $N$ is symmetric on $A$, $B$, $C$, so we have a three-way isomorphism
\be
G_A / N_A \approx G_B / N_B \approx G_C / N_C
\ee
Then to establish (\ref{diagonal3}) note that the argument around (\ref{H}) applies to the new setting provided we make the obvious definitions
\bea
\nonumber
&& H = \lbrace \, \big( [a],\,[b],\,[c] \big) \in {G_A \over N_A} \times {G_B \over N_B} \times {G_C \over N_C} \, \Big\vert \, (a,b,c) \in G \, \rbrace \\
&& \phi \, : \, G \rightarrow H, \quad \phi\big((a,b,c)\big) = \big([a],[b],[c]\big)
\eea
\end{proof}

Finally note that the usual Goursat lemma establishes a $1 \, : \, 1$ correspondence.  Given a group $G \subset A \times B$ one can construct $N_A \triangleleft G_A$
and $N_B \triangleleft G_B$ as in corollary \ref{two-party} and one has the isomorphisms (\ref{isomorphisms2}).  Conversely given a collection of groups satisfying
$N_A \triangleleft G_A$, $N_B \triangleleft G_B$ and an isomorphism $\theta \, : \, G_A / N_A \rightarrow G_B / N_B$ one can reconstruct $G$ using
\be
G = \lbrace \, (a,b) \in G_A \times G_B \, \big\vert \, \theta([a]) = [b] \, \rbrace
\ee
It would be interesting to understand if this converse can be extended to the entanglement groups of a tripartite system.  Is information about the two-party and three-party
entanglement groups enough to reconstruct the full entanglement group $F_{ABC}$?

\section{Two-party stabilizers for a separable state\label{appendix:twoparty}}
Here we consider the purification of a separable density matrix given in (\ref{pure}).
\be
\label{pure3}
\vert \psi \rangle = \sum_\ell \sqrt{p_\ell} \, \vert \ell \rangle_A \otimes \vert \ell \rangle_B \otimes \vert \ell \rangle_C
\ee
We want to understand what separability implies for a two-party stabilizer that acts on systems $A$ and $B$.

To this end, suppose there's a stabilizer $s_{AB} = u_A \otimes u_B \otimes \identity_C \in S_{AB}$.  This means
\be
\label{uAuB1}
(u_A \otimes u_B \otimes \identity_C) \vert \psi \rangle = e^{i \theta} \vert \psi \rangle \qquad \hbox{\rm for some phase $\theta$}
\ee
What can we say about the structure of the stabilizer, given the form of the state $\vert \psi \rangle$?  By acting on (\ref{uAuB1}) with ${}_C\langle \ell \vert$ we see that
\be
\label{uAuBonl}
u_A \vert \ell \rangle_A \otimes u_B \vert \ell \rangle_B = e^{i \theta} \vert \ell \rangle_A \otimes \vert \ell \rangle_B \qquad \hbox{\rm same phase $\theta$ for every $\ell$}
\ee
If a non-zero vector $x$ can be written as a tensor product in two different ways, say $x = y_1 \otimes y_2 = z_1 \otimes z_2$, it follows that the vectors are proportional,
$y_1 = c z_1$ and $y_2 = {1 \over c} z_2$.  Since $u_A$ and $u_B$ are unitary they can't change the magnitude of a vector, but they can introduce phases.  So from (\ref{uAuBonl}) we must have
\be
u_A \vert \ell \rangle_A = e^{i \alpha_\ell} \vert \ell \rangle_A \qquad u_B \vert \ell \rangle_B = e^{i \beta_\ell} \vert \ell \rangle_B \quad {\rm with} \quad \alpha_\ell + \beta_\ell = \theta
\ee
In other words $\vert \ell \rangle_A$, $\vert \ell \rangle_B$ are eigenvectors of $u_A$, $u_B$ with eigenvalues that are related by the condition $\alpha_\ell + \beta_\ell = \theta$.

Next we deal with the fact that the vectors $\vert \ell \rangle_A$, $\vert \ell \rangle_B$ may not span all of ${\cal H}_A$, ${\cal H}_B$.  To do this define
\bea
\nonumber && {\cal H}_{A_1} = {\rm span} \, \lbrace \vert \ell \rangle_A \rbrace \hspace{7mm} \hbox{\rm (the vector space spanned by $\vert \ell \rangle_A$)} \\
\nonumber && {\cal H}_{A_2} = \left({\cal H}_{A_1}\right)_\perp \hspace{1.3cm} \hbox{\rm (the orthogonal complement of ${\cal H}_{A_1}$ inside ${\cal H}_A$)} \\
&& {\cal H}_{B_1} = {\rm span} \, \lbrace \vert \ell \rangle_B \rbrace  \hspace{7mm} \hbox{\rm (the vector space spanned by $\vert \ell \rangle_B$)} \\
\nonumber && {\cal H}_{B_2} = \left({\cal H}_{B_1}\right)_\perp \hspace{1.3cm} \hbox{\rm (the orthogonal complement of ${\cal H}_{B_1}$ inside ${\cal H}_B$)}
\eea
The vectors $\vert \ell \rangle_A$ form an (overcomplete) basis for ${\cal H}_{A_1}$.  Since $u_A$ preserves this basis, it follows that $u_A$ is block-diagonal
when acting on ${\cal H}_{A_1} \oplus {\cal H}_{A_2}$.
\be
u_A = \left(\begin{array}{c|c}
u_{A_1} & 0 \\
\hline
0 & u_{A_2}
\end{array}\right)
\ee
Likewise $u_B$ preserves ${\cal H}_{B_1}$, so it is block-diagonal when acting on ${\cal H}_{B_1} \oplus {\cal H}_{B_2}$.
\be
u_B = \left(\begin{array}{c|c}
u_{B_1} & 0 \\
\hline
0 & u_{B_2}
\end{array}\right)
\ee
At this point it's convenient to diagonalize $u_{A_1}$.  Suppose $u_{A_1}$ has distinct eigenvalues
\bea
\nonumber
&& e^{i a_1} \quad \hbox{\rm with degeneracy $m_1$} \\
&& e^{i a_2} \quad \hbox{\rm with degeneracy $m_2$} \\
\nonumber
&& \qquad\qquad \vdots \\
\nonumber
&& e^{i a_s} \quad \hbox{\rm with degeneracy $m_s$}
\eea
We likewise diagonalize $u_{B_1}$ with distinct eigenvalues $e^{i b_i}$ and corresponding degeneracies $n_i$, $i = 1,\ldots,s$.
The eigenvalues are related by the condition $a_i + b_i = \theta$.  (This condition implies that $u_{A_1}$ and $u_{B_1}$ have the same
number of distinct eigenvalues.)  In this diagonal basis $u_A$ and $u_B$ have the form
\bea
\label{uAuBform}
&& u_A = \left(\begin{array}{ccc|c}
e^{i a_1} \identity_{m_1 \times m_1} &&& \\
& \ddots && \\
&& e^{i a_s} \identity_{m_s \times m_s} & \\
\hline
&&& u_{A_2}
\end{array}\right) \\[8pt]
\nonumber
&& u_B = \left(\begin{array}{ccc|c}
e^{i b_1} \identity_{n_1 \times n_1} &&& \\
& \ddots && \\
&& e^{i b_s} \identity_{n_s \times n_s} & \\
\hline
&&& u_{B_2}
\end{array}\right)
\eea
Now we turn to system $C$.  Let's order the labels $\ell$ so that
\bea
\nonumber
&& \hbox{\rm the first $o_1$ have $u_A \vert \ell \rangle_A = e^{i a_1} \vert \ell \rangle_A$ and $u_B \vert \ell \rangle_B = e^{i b_1} \vert \ell \rangle_B$} \\
&& \hbox{\rm the next $o_2$ have $u_A \vert \ell \rangle_A = e^{i a_2} \vert \ell \rangle_A$ and $u_B \vert \ell \rangle_B = e^{i b_2} \vert \ell \rangle_B$} \\
\nonumber
&& \qquad\quad \vdots \\
\nonumber
&& \hbox{\rm the last $o_s$ have $u_A \vert \ell \rangle_A = e^{i a_s} \vert \ell \rangle_A$ and $u_B \vert \ell \rangle_B = e^{i b_s} \vert \ell \rangle_B$}
\eea
Then we can introduce stabilizers
\bea
&& s_{AC} = e^{i \theta'} \, u_A \otimes \identity_B \otimes \left(\begin{array}{ccc}
e^{-i a_1} \identity_{o_1 \times o_1} && \\
& \ddots & \\
&& e^{-i a_s} \identity_{o_s \times o_s}
\end{array}\right) \\[8pt]
\nonumber
&& s_{BC} = e^{i \theta''} \, \identity_A \otimes u_B \otimes \left(\begin{array}{ccc}
e^{-i b_1} \identity_{o_1 \times o_1} && \\
& \ddots & \\
&& e^{-i b_s} \identity_{o_s \times o_s}
\end{array}\right)
\eea
that by construction satisfy
\be
s_{AC} \vert \psi \rangle = e^{i \theta'} \vert \psi \rangle \qquad\quad s_{BC} \vert \psi \rangle = e^{i \theta''} \vert \psi \rangle
\ee
If we choose the phases to satisfy $\theta' + \theta'' = \theta$, then we also have
\be
\label{CommonCenter}
s_{AC} s_{BC} = s_{AB}
\ee
In other words, for a state obtained by purifying a separable density matrix, an $AB$ stabilizer can always be thought of as a combination of an $AC$ stabilizer with a $BC$ stabilizer.

We can turn this statement about stabilizers into a statement about the two-party pure-state entanglement that is present in the purification.
Thinking of $s_{AC}$, $s_{BC}$, $s_{AB}$ as representatives of equivalence classes in the pure-state entanglement groups $E_{AC}$, $E_{BC}$, $E_{AB}$, namely
\be
e_{AC} = [s_{AC}] \in E_{AC}, \quad e_{BC} = [s_{BC}] \in E_{BC}, \quad e_{AB} = [s_{AB}] \in E_{AB}
\ee
we have
\be
\label{CommonCenter2}
e_{AC} e_{BC} = e_{AB}
\ee
That is, for a state obtained by purifying a separable density matrix, $AB$ entanglement can always be thought of as a combination of $AC$ entanglement and $BC$ entanglement.

Entanglement satisfying (\ref{CommonCenter2}) is allowed, but as discussed in section \ref{sect:noshare}, it takes a very special and restricted form.  In brief, by projecting (\ref{CommonCenter2}) onto system $A$,
we see that $g_A = \pi_A(e_{AC})$ is an element of both $\pi_A(E_{AC})$ and $\pi_A(E_{AB})$.  In section \ref{sect:noshare} we showed that distinct two-party entanglement groups
must commute element-by-element, which means that $g_A$ must be in the center of both groups.\footnote{Similar reasoning leads to (\ref{Z}).}
\be
g_A \in Z\big(\pi_A(E_{AB})\big) \cap Z\big(\pi_A(E_{AC})\big)
\ee
Moreover, the isomorphism theorem (\ref{IsomorphismReference})
implies that elements of the common center $Z\big(\pi_A(E_{AB})\big) \cap Z\big(\pi_A(E_{AC})\big)$ act isomorphically
on systems $A$, $B$, $C$.

\section{Werner state example\label{appendix:Werner}}
To illustrate the controlled unitary discussed in section (\ref{sect:A(BC)}), we use the example of the Werner state \cite{Werner:1989zz}, defined for two qubits as
\begin{equation}
W = p \, \dyad{\text{Bell state}} + \frac{1-p}{4} \, \identity_{4\times 4}
\end{equation}
where $p \in [0,1]$. But first, let's look at the entanglement group.  The two-party stabilizer group $\widetilde{S}_{AB}$ consists of matrices
\begin{equation}
\widetilde{s}_{AB} = e^{i \theta} \begin{pmatrix}
\alpha & \beta \\
\gamma &\delta
\end{pmatrix}
\otimes
\begin{pmatrix}
\bar{\alpha} & \bar{\beta} \\
\bar{\gamma} & \bar{\delta}
\end{pmatrix} \qquad\quad \begin{pmatrix}
\alpha & \beta \\
\gamma &\delta
\end{pmatrix} \in U(2)
\end{equation}
(an overall phase, times the tensor product of a unitary matrix with its complex conjugate).
To see that this is a stabilizer, note that $\widetilde{s}_{AB} \vert \, \hbox{\rm Bell state} \rangle = e^{i \theta} \vert \, \hbox{\rm Bell state} \rangle$, so the two terms in $W$ are
separately invariant under conjugation by $\widetilde{s}_{AB}$.  This means the two-party stabilizer group is $\widetilde{S}_{AB} = U(1) \times U(2)$.  One-party stabilizers
$\widetilde{s}_A$, $\widetilde{s}_B$ are just phases times the identity operator, so $\widetilde{S}_A = \widetilde{S}_B = U(1)$.  The entanglement group is therefore
\be
\EtildeAB = \big(U(1) \times U(2)\big) / \big(U(1) \times U(1)\big) = U(2) / U(1) = PSU(2)
\ee

The Werner state is separable when $0 \leq p \leq 1/3$ and non-separable for $1/3 < p \leq 1$.
Our goal is to display a block-diagonal unitary matrix (\ref{block}) which completely disentangles system $B$, but
only in the range $0 \leq p \leq 1/3$ where the Werner state is separable.

We begin by presenting the Werner state in the form
\bea
\nonumber
W & = & \frac{p}{2} \, \Big(\dyad{00}{00} + \dyad{11}{11} + \dyad{--} + \dyad{++} + \dyad{\circ\times} + \dyad{\times\circ} \Big) \\
\label{Werner}
& & + (1 - 3p) \, \frac{1}{4} \, \identity_{4\times 4}
\eea
where $+-$ is the $x$-basis and $\times\circ$ is the $y$-basis: $\ket{+}=\big(\ket{0}+\ket{1}\big)/\sqrt{2}$ and $\ket{\times} = \big(\ket{0}+i\ket{1})\big/\sqrt{2}$, etc.  This presents the state as a sum of
tensor products with positive coefficients, except for the the last term where it turns negative precisely when $p > 1/3$.  Assuming the state is separable ($p \leq 1/3$), it can be purified by introducing
an auxiliary 10 dimensional Hilbert space, to
\bea
\nonumber
\ket{\psi} & = & \sqrt{\frac{p}{2}}\Big(\ket{000}+\ket{111}+\ket{--2}+\ket{++3}+\ket{\circ\times4}+\ket{\times\circ5}\Big) \\
& & + \sqrt{\frac{1-3p}{4}}\Big(\ket{006}+\ket{017}+\ket{108}+\ket{119}\Big), \label{werner_sep_pure}
\eea
Note that this purification is not minimal.  It is only a purification of the Werner state if $p \leq 1/3$.  (If $p > 1/3$,
there are imaginary coefficients, and tracing out the auxiliary system will produce a coefficient $\left \vert {1 - 3p \over 4}
\right \vert$ that differs by a sign from (\ref{Werner}).) On the enlarged Hilbert space, we can write down a block-diagonal unitary by inspection:
\bea
\nonumber
&& {\cal U}_{BC} = \identity_A\otimes\Big[\Big(\ketbra{0}{1}+\ketbra{1}{0}\Big)\otimes \dyad{1} + \Big(\ketbra{0}{-}+\ketbra{1}{+}\Big)\otimes\dyad{2} \\
\nonumber
&&  + \Big(\ketbra{0}{+}+\ketbra{1}{-}\Big)\otimes\dyad{3} + \Big(\ketbra{0}{\times}+\ketbra{1}{\circ}\Big)\otimes\dyad{4} + \Big(\ketbra{0}{\circ}+\ketbra{1}{\times}\Big)\otimes\dyad{5} \\
&& +\Big(\ketbra{0}{1}+\ketbra{1}{0}\Big)\otimes\Big(\dyad{7}+\dyad{9}\Big)\Big] + \identity_A \otimes \identity_B \otimes \Big(\dyad{0} + \dyad{6} + \dyad{8}\Big)
\eea
Applying this unitary takes (\ref{werner_sep_pure}) to the claimed tensor product form.
\begin{equation}
\label{werner_factor}
{\cal U}_{BC}\ket{\psi} = \ket{0}_B\otimes\Big[ \sqrt{\frac{p}{2}}\Big(\ket{00}+\ket{11}+\ket{-2}+\ket{+3}+\ket{\circ4}+\ket{\times5}\Big) + \sqrt{\frac{1-3p}{4}}\Big(\ket{06}+\ket{07}+\ket{18}+\ket{19}\Big) \Big]_{AC}
\end{equation}
Of course the Werner state can be purified for any value of $p$.  Indeed a minimal purification, valid for any $0 \leq p \leq 1$, is
\be
\label{w_min_pure}
\ket{\psi} = \sqrt{\frac{1-p}{4}} \, \left[ {1 \over \sqrt{2}} \big( \ket{00} - \ket{11} \big) \otimes \ket{0}
+ \ket{01} \otimes \ket{1} + \ket{10} \otimes \ket{2} \right]
+ \sqrt{\frac{1+3p}{8}} \, \big( \ket{00} + \ket{11} \big) \otimes \ket{3}
\ee
If $p \leq 1/3$ this can be related to (\ref{werner_sep_pure}) by a unitary on system $C$ and can be mapped to
(\ref{werner_factor}) by ${\cal U}_{BC}$.  However if $p > 1/3$ the Werner state is not separable and no such unitary
on $C$ exists.

\section{Entanglement groups for ${\cal U}_{BC} \vert \psi \rangle$ \label{appendix:UBC}}
As shown in section \ref{sect:A(BC)}, for a separable state, there is a controlled unitary ${\cal U}_{BC}$ that completely disentangles system $B$.  This means we can define a new state
\be
\vert \psi' \rangle = {\cal U}_{BC} \vert \psi \rangle
\ee
that has the form given in (\ref{factorized}).
\be
\vert \psi' \rangle = \vert \chi \rangle_B \otimes \Big(\sum_{\ell = 1}^L \sqrt{p_\ell} \, \vert \ell \rangle_A \otimes \vert \ell \rangle_C\Big)
\ee
The entanglement groups for $\vert \psi' \rangle$ have the following properties.
\begin{enumerate}
\item
Since $\vert \psi \rangle$ and $\vert \psi' \rangle$ are related by a (block diagonal) unitary transformation on the combined $BC$ system, the $A$ -- $BC$ entanglement groups are unchanged.
\be
E_{A(BC)}^{\vert \psi \rangle} = E_{A(BC)}^{\vert \psi' \rangle}
\ee
\item
The $A$ -- $BC$ entanglement group for $\vert \psi' \rangle$ acts trivially on system $B$.
\be
\label{prop2}
E_{A(BC)}^{\vert \psi' \rangle} = E_{AC}^{\vert \psi' \rangle}
\ee
To see this, note that the Schmidt decomposition of the state $\vert \psi' \rangle$ has the form\footnote{This is the Schmidt decomposition for a division into systems $A$
and $BC$.  It's the same as the Schmidt decomposition one would obtain by projecting the state $\vert \psi' \rangle$ into ${\cal H}_A \otimes {\cal H}_C$.}
\be
\vert \psi' \rangle = \sum_{i = 1}^r \sqrt{p_r} \, \vert i \rangle_A \otimes \vert \chi \rangle_B \otimes \vert i \rangle_C
\ee
where $r$ is the rank of $\rho_A = {\rm Tr}_B \rho_{AB}$.  For the purposes of $A$ -- $BC$ entanglement, this means we can restrict our attention to stabilizers of the form
\be
s_{AC}^{\vert \psi' \rangle} = u_A \otimes \identity_B \otimes u_C
\ee
It follows that $E_{A(BC)}^{\vert \psi' \rangle} = E_{AC}^{\vert \psi' \rangle}$.  For the original state $\vert \psi \rangle$ note that the relevant stabilizers
$s_{A(BC)}^{\vert \psi \rangle}$ do act on system $B$, but in a way that is completely fixed by
\be
s_{A(BC)}^{\vert \psi \rangle} = {\cal U}_{BC}^\dagger s_{AC}^{\vert \psi' \rangle} {\cal U}_{BC}
\ee
\item
Since $\vert \psi' \rangle$ is a tensor product, there is no entanglement between $B$ and the combined $AC$ system.
\be
\label{prop3}
E_{B(AC)}^{\vert \psi' \rangle} = \lbrace 1 \rbrace
\ee
\end{enumerate}
These features of the entanglement groups for $\vert \psi' \rangle$ are consequences of having a separable density matrix.

Given (\ref{prop3}), the argument is reversible.  Consider a state $\vert \psi' \rangle$, which without loss of generality we write as\footnote{Any state can be expanded in
a basis of tensor product states, $\vert \psi' \rangle = \sum c_{i\ell} \vert i \rangle_{AB} \otimes \vert \ell \rangle_C$.  Collect the coefficient of $\vert \ell \rangle_C$ and write
it as a unit vector $\vert \ell \rangle_{AB}$ times a magnitude $\sqrt{p_\ell}$.}
\be
\vert \psi' \rangle = \sum_\ell \sqrt{p_\ell} \, \vert \ell \rangle_{AB} \otimes \ell_C
\ee
with ${}_C \langle \ell \vert \ell' \rangle_C = \delta_{\ell \ell'}$.
If $E_{B(AC)}^{\vert \psi' \rangle} = \lbrace 1 \rbrace$, then $\vert \psi' \rangle$ must be a tensor product of the form (\ref{factorized}).  Then, by applying
the inverse of the block-diagonal unitary transformation (\ref{block}) we obtain (\ref{pure2}), which is the general purification of a separable density matrix.

\providecommand{\href}[2]{#2}\begingroup\raggedright\endgroup

\end{document}